\documentclass{llncs}

\usepackage[utf8]{inputenc}
\usepackage{amsfonts, amsmath, amssymb}
\usepackage[dvipsnames]{xcolor}
\usepackage{url}
\usepackage{graphicx}

\usepackage{tikz}
\usetikzlibrary{quantikz2}
\usetikzlibrary{decorations.pathreplacing}
\definecolor{duckBlue}{HTML}{56949f}
\definecolor{goldOrange}{HTML}{907AA9}
\definecolor{love}{HTML}{B4637A}
\usepackage[hidelinks]{hyperref}
\hypersetup{%
  bookmarks=true,
  colorlinks = true,
  linkcolor  = violet,          
  citecolor  = RedOrange,          
  filecolor  = violet,          
  urlcolor   = TealBlue
}

\usepackage{algorithm}
\usepackage{algorithmicx}
\usepackage[noend]{algpseudocode}
\usepackage{comment}
\usepackage{bm}
\usepackage{braket}
\usepackage{cancel}

\spnewtheorem{assumption}{Assumption}{\itshape}{\rmfamily}

\renewcommand*{\L}{\mathcal L}
\newcommand*{\A}{\mathcal A}

\newcommand*{\Z}{\mathbb Z}

\newcommand*{\R}{\mathbb R}

\newcommand*{\N}{\mathbb N}

\newcommand*{\poly}{\mathrm{poly}}

\renewcommand*{\exp}[1]{e^{#1}}
\renewcommand*{\vec}[1]{\bm{#1}}
\newcommand*{\polylog}[1]{\mathrm{polylog}\!\left(#1\right)}
\newcommand{\SIS}{\mathrm{SIS}}
\newcommand{\mat}[1]{\mathbf{#1}}

\DeclareMathOperator{\spann}{span}

\newcommand{\clem}[1]{\textcolor{Plum}{\textbf{Clem:} #1}}
\newcommand{\yixin}[1]{\textcolor{blue}{\textbf{Yixin:} #1}}
\newcommand{\as}[1]{\textcolor{Red}{\textbf{André: } #1}}

\renewcommand{\clem}[1]{}
\renewcommand{\yixin}[1]{}
\renewcommand{\as}[1]{}

\newif\ifeprint
\eprinttrue

\usepackage{thm-restate}

\algdef{SE}[DOWHILE]{Do}{doWhile}{\algorithmicdo}[1]{\algorithmicwhile\ #1}%

\makeatletter
\AtBeginDocument{%
  \def\doi#1{\url{https://doi.org/#1}}}
\makeatother

\usepackage{mathtools, stmaryrd}
\usepackage{xparse} \DeclarePairedDelimiterX{\iintv}[1]{\llbracket}{\rrbracket}{\iintvargs{#1}}
\NewDocumentCommand{\iintvargs}{>{\SplitArgument{1}{,}}m}
{\iintvargsaux#1} %
\NewDocumentCommand{\iintvargsaux}{mm}{#1 , #2}

\title{Quantum algorithm for Discrete Gaussian Sampling}
\ifeprint
\author{Clémence Chevignard \and Yixin Shen \and André Schrottenloher}
\institute{Univ Rennes, Inria, CNRS, IRISA, Rennes, France}
\else
\author{}
\institute{}
\fi

\newcommand{\osguess}{\overline{\vec{s}}_{\mathrm{guess}}}

\newcommand{\sguess}{\vec{s}_{\mathrm{guess}}}
\newcommand{\sdual}{\vec{s}_{\mathrm{dual}}}
\newcommand{\Aguess}{{A}_{\mathrm{guess}}}
\newcommand{\Adual}{{A}_{\mathrm{dual}}}
\newcommand{\nguess}{n_{\mathrm{guess}}}
\newcommand{\ndual}{n_{\mathrm{dual}}}
\newcommand{\floor}[1]{\lfloor #1 \rfloor}

\newcommand{\bigO}[1]{O\! \left(#1\right) }
\newcommand{\bigOt}[1]{\widetilde{O}\! \left(#1\right) }

\begin{document}

\renewcommand{\sectionautorefname}{Section}
\renewcommand{\subsectionautorefname}{Section}
\def\algorithmautorefname{Algorithm}

\maketitle


\begin{abstract}
Discrete Gaussian Sampling on lattices is a fundamental problem in lattice-based cryptography. It appears both in basic cryptographic primitives such as digital signatures and as an important cryptanalysis building block for solving hard lattice problems. In this paper, we show a quantum algorithm based on the quantum rejection sampling technique whose complexity is asymptotically quadratically faster than its classical counterpart in [Wang \& Ling, IEEE Trans. Inf. Theory 2019]. Our sampler outputs a quantum state which can either be measured to get the desired distribution or be used directly as such in other quantum algorithms. By doing so, we derive two versions of quantum dual attacks that improve upon the previous ones in [Pouly \& Shen, EUROCRYPT 2024]. The two versions are incomparable, each having distinct advantages (speed vs memory requirement). The second version is particularly interesting as it requires only polynomial classical and quantum memory, excluding the classical memory used in the preprocessing step of the Discrete Gaussian sampler. Our quantum Discrete Gaussian sampler can also be used to speed up the algorithm for solving the Short Integer Solution problem, in any norm, of [Bollauf, Pouly \& Shen, ePrint 2026/225].
\end{abstract}

\keywords{Quantum cryptanalysis \and Lattice-based cryptography \and Quantum rejection sampling \and Discrete Gaussian Sampling}

\section{Introduction}

Discrete Gaussian Sampling on lattices is a fundamental problem in lattice-based cryptography. Given a lattice, a vector $\vec c$ and a parameter $\sigma$, it asks to sample a lattice vector $\vec x$ with probability proportional to $\exp{-\frac{\pi \|\vec x - \vec c \|^2}{\sigma^2}}$. The problem arises both in basic cryptographic primitives such as ``hash-and-sign'' digital signature schemes~\cite{DBLP:conf/stoc/GentryPV08,Falcon}, and in cryptanalysis, where it serves as a fundamental tool for solving hard problems like the Shortest Vector Problem~\cite{DBLP:conf/stoc/AggarwalDRS15}, the Learning with Errors problem \cite{DBLP:conf/eurocrypt/PoulyS24,DBLP:conf/asiacrypt/QuX25} and the Short Integer Solution problem \cite{PS26}. 

The width of the distribution is controlled by $\sigma$. The smaller the width, the harder it is to sample from this distribution. A particularly important threshold is known
as the smoothing parameter \cite{DBLP:conf/focs/MicciancioR04} which marks the transition between the distribution looking more like a continuous
Gaussian, and the distribution being more heavily concentrated on the closest lattice vector to the center $\vec{c}$.

There are several algorithms to sample from the discrete Gaussian distribution with various trade-offs.
The most well-known algorithm, known as the Klein sampler, is very efficient (polynomial time) but the width depends on the quality of the input basis~\cite{DBLP:conf/stoc/GentryPV08}. This makes this algorithm useful for relatively large values of $\sigma$, much larger than the smoothing parameter, such as in cryptographic primitives.
At the other end of the spectrum is the algorithm by \cite{DBLP:conf/stoc/AggarwalDRS15} which can sample the (centered) discrete Gaussian for
any value of $\sigma$ in time $2^{n+o(n)}$ and above the smoothing parameter in time $2^{n/2+o(n)}$. While this algorithm has some theoretical applications \cite{DBLP:conf/stacs/AggarwalCKS21,DBLP:journals/siamcomp/AggarwalCKS25}, its complexity is too high for practical attacks on cryptosystems.
The algorithm of~\cite{DBLP:conf/stacs/AggarwalCKS21} provides a time-space tradeoff ranging from exponential to super-exponential time, and exponential to polynomial space. This algorithm only works for a width above the smoothing parameter.
Finally, the algorithms of \cite{MTMK} and \cite{DBLP:conf/stoc/BrakerskiLPRS13}, which are essentially equivalent despite using a very different
presentation\footnote{See Remark 1 in \cite{DBLP:conf/pqcrypto/PoulyS25}.}, work for any $\sigma$ and provide an interesting trade-off where the complexity depends on both the quality of the input basis and $\sigma$. This algorithm has been used in several attacks \cite{DBLP:conf/eurocrypt/PoulyS24,DBLP:conf/pqcrypto/PoulyS25,DBLP:conf/asiacrypt/QuX25,PS26} in the regime where its complexity is exponential
but much smaller than $2^{n}$.

There is no known quantum speedup on the above algorithms. The Klein sampler, which only requires polynomial space, can
easily be turned into a quantum sampler (as shown in~\cite{DBLP:journals/iacr/BoerF25}, which we reuse in~\autoref{sec: quantum ways of samplig from pi}) but does not provide a speed-up per se and
still suffers from the strong constraint on the width. The algorithm of~\cite{DBLP:conf/stoc/AggarwalDRS15} requires exponential memory and cannot be turned into a quantum sampler without using quantum RAM. Finally, the algorithm of \cite{MTMK,DBLP:conf/stoc/BrakerskiLPRS13} can also be turned into a quantum sampler with a polynomial number of qubits with no particular speedup.

\subsubsection*{Contributions.}

In this paper, we present a quantum algorithm allowing to sample vectors according to a discrete  Gaussian distribution, with an (asymptotically) quadratic speed-up on the complexity of the algorithm of~\cite{MTMK}. Our sampler works for any width $\sigma$ and its complexity depends on the width and the quality of the input basis.

Our algorithm is based on \emph{quantum rejection sampling}~\cite{DBLP:journals/toct/OzolsRR13}, i.e., transforming the amplitudes of a quantum state from a given distribution to a target distribution -- in a way analogous to classical rejection sampling. Consider that we want to sample vectors in a lattice $\L$, according to a Gaussian lattice distribution $D$, then our objective is to build a quantum state (close to) $\sum_{\vec x\in \L}\sqrt{D(\vec x)}\ket{\vec  x}$, since measuring this state gives us a vector $\vec x$ sampled according to $D$.

In order to build this target state, we start from an algorithm which prepares another state $\sum_{\vec  x\in \L}\sqrt{D_1(\vec x)}\ket{\vec  x}$, which in our case is a quantum version of Klein's sampling algorithm, previously given in~\cite{DBLP:journals/iacr/BoerF25}. We modify the target amplitudes, and then use quantum amplitude amplification over these steps to reach the target state.

\paragraph{Application to SIS and Dual Attacks on LWE.} 
We use our quantum sampling algorithm in three distinct algorithms solving the Learning with Errors (LWE) and the Short Integer Solution (SIS) problems.
Our first application consists in replacing the first step of the quantum dual attack of Pouly and Shen~\cite{DBLP:conf/eurocrypt/PoulyS24}, which asks to sample vectors according to a discrete Gaussian. Using our algorithm, we can obtain a quadratic speedup on this first step (excluding the preprocessing step for BKZ reduction), while the second step (which uses the data produced inside an FFT-based distinguisher) remains unchanged.

Our second application is a new quantum dual attack, in which the classical data used in the second step of the previous quantum dual attack~\cite{DBLP:conf/eurocrypt/PoulyS24} is replaced by our quantum sampling algorithm. This algorithm has a higher time complexity, but uses only polynomial quantum memory.

Our third application consists in quantizing the algorithm of Bollauf, Pouly and Shen~\cite{PS26} for solving the SIS problem for any norm. Using our quantum sampling algorithm and quantum amplitude amplification, we can construct a quantum state that provides a solution of the SIS problem with high probability. Excluding the BKZ basis reduction preprocessing step, which itself allows for a quantum speedup, the remaining step achieves a quadratic speedup over its classical counterpart. Ignoring polynomial factors in the complexity, we provide rough estimates of the quantum complexity of the $\SIS^\infty$  problem in the Dilithium signature scheme.

Like many works in lattice-based quantum cryptanalysis~\cite{DBLP:conf/asiacrypt/ChaillouxL21,DBLP:journals/iacr/AlbrechtS22,DBLP:conf/eurocrypt/BonnetainCSS23}, we focus on the asymptotic complexity of algorithms and on theoretical quantum speedups against classical algorithms. It has been acknowledged before that such asymptotic speedups may be difficult to ever realize in practice, especially if they require memory, due to the inherent cost in maintaining quantum RAM and the sequentiality of Grover's quantum search~\cite{DBLP:conf/asiacrypt/AlbrechtGPS20}. While these issues would also present themselves for the algorithms that we study in this paper, our goal is primarily to expand the quantum cryptanalysis toolbox. We leave detailed considerations for implementations, such as depth and non-asymptotic gate counts, for future work.

\paragraph{Organization of the Paper.} In~\autoref{sec:prelim}, we present several useful notions: lattices and Gaussian distributions on lattices, the LWE and the SIS problems, and some useful quantum algorithms. In~\autoref{sec: quantum ways of samplig from pi}, we describe our quantum algorithm for sampling from discrete Gaussian distributions: \autoref{sec:comparison} shows that this algorithm essentially obtains a quadratic speedup compared to the Markov Chain Monte Carlo (MCMC) algorithm~\cite{MTMK}.

We then move to applications. In~\autoref{sec: Improving the quantum dual attack}, we apply our quantum sampling algorithm to the ``modern'' dual attack framework on LWE of~\cite{DBLP:conf/eurocrypt/PoulyS24}; first by introducing the quantum sampler in the first step of the algorithm, second by introducing the quantum sampler in the second step. In~\autoref{sec: Improving SIS}, we apply our quantum sampling algorithm to quantize the algorithm of \cite{PS26} for solving the SIS problem. We give some rough estimates on the quantum complexity of $\SIS^\infty$ used in the Dilithium signature scheme. 

\section{Preliminaries}\label{sec:prelim}


Throughout this paper, we assume familiarity with basic quantum computing notions such as quantum states, qubits, the ket notation $\ket{\cdot}$, etc. We write quantum algorithms as quantum circuits, which is the most common approach in quantum cryptanalysis; a full overview of the quantum circuit model can be found in~\cite{nielsen2010quantum}. The gate count (i.e., number of operations) is our main metric of interest, but since we consider asymptotic algorithms, we do not specify explicitly a universal gate set.

The algorithms that we consider have polynomial qubit count, but they may require exponential-size  \emph{quantum RAM} (qRAM). The model used is \emph{classical memory with quantum random access} (QRACM). It extends the quantum circuit model with the operation:
$\ket{i} \ket{x} \mapsto \ket{i} \ket{x \oplus m_i} \enspace,$
where $(m_0, \ldots, m_{\ell-1})$ is an exponential-size classical memory and $i$ is an index register. When using QRACM, we make the assumption that a specific physical component allows to implement this operation in time $\bigO{1}$, since the circuit model cannot. qRAM has been part of all quantum speedups for lattice sieving to date~\cite{DBLP:conf/pqcrypto/LaarhovenMP13,DBLP:conf/asiacrypt/ChaillouxL21,DBLP:conf/eurocrypt/BonnetainCSS23}. A recent paper~\cite{DBLP:journals/iacr/ChoHKLS24} showed that under some reasonable hypothesis, there is no quantum speed-up possible for lattice sieving without using qRAM. This is opposed to lattice enumeration~\cite{DBLP:conf/asiacrypt/AonoNS18}, since it runs with polynomial memory. Indeed, any quantum algorithm using polynomial space overall does not need qRAM.

For $\Omega$ a set and $\alpha_j > 0$, the state $\ket{\psi} = \sum_{j \in \Omega} \alpha_j \ket{j}$ represents a probability distribution $\Pi$ over $\Omega$, defined by $\Pi(j) = \alpha_j^2$. When this is clear from context, we will denote this state as $\ket{\Pi}$. To measure the distance between two probability distributions $\mu_1$ and $\mu_2$, we use the \emph{total variation distance}, defined as
\begin{equation*}
    d_{TV}(\mu_1,\mu_2) = \frac{1}{2} \sum_{x\in \Omega}|\mu_1(x)-\mu_2(x)| \enspace.
\end{equation*}

\subsection{Lattices}

A $\R$-\emph{lattice} $\L$ is a discrete subgroup of $\R^n$ spanned by a set of independent vectors $\vec b_1,\dotsc,\vec b_n\in \R^m$ over $\Z$, meaning that $\L = \{\sum_{i=1}^n a_i \vec b_i, a_i\in \Z\}$. When $n=m$, we say that the lattice $\L$ is full rank. The vectors $\vec b_1,\dotsc, \vec b_n$ form a \emph{basis} of $\L$. Usually, to represent this basis, we concatenate these column vectors into a matrix $B$, and we write $\L = \L(B)$. The matrix $\widetilde B$ denotes the Gram-Schmidt decomposition of $B$, and we denote by $\widetilde{\vec b}_i$ its column vectors. The length of the shortest nonzero vector of $\L(B)$ is denoted by $\lambda_1(\L(B))$.

The \emph{fundamental parallelepiped} of $\L(B)$ is $P(\L(B)) := \{ \sum_{i=1}^n a_i \vec b_i, a_i\in [0,1)\}$. The volume of the lattice $\L(B)$ is the volume $\mathrm{vol}(P(\L(B)))$ of $P(\L(B))$. It also equals $\sqrt{\det(B^tB)}$, or simply $|\det(B)|$ when $B$ is square and full rank.

Define, in the same fashion, $\spann_\R(\L(B)) := \{\sum_{i=1}^n a_i \vec b_i, a_i\in \R\}$. The dual $\hat \L$ of the lattice $\L$ is defined as 
\begin{equation*}
    \hat \L := \{\vec y \in \spann_\R(\L) \text{ such that }\forall \vec x \in \L, \langle \vec x,\vec y\rangle \in \Z\} \enspace.
\end{equation*}

Let $1\leqslant n \leqslant m$, and let $q$ be a prime power. A lattice $\L$ is a $m$-dimensional $q$-ary lattice if $q\Z^m \subset \L \subset \Z^m$. Given a matrix $A\in \Z^{m\times n}$, we can define two $m$-dimensional $q$-ary lattices
\begin{gather*}
    \L_q(A) := \{\vec x\in \Z^m \text{ such that }\exists \vec s\in \Z^n \text{ such that } A \vec s = \vec x\bmod{q} \},\\
    \L_q^\perp(A) := \{\vec x\in \Z^m \text{ such that } A^T \vec x = 0\bmod{q} \}.
\end{gather*}
There exists matrices $A$ and $B$ such that $\L = \L_q(A) = \L_q^\perp(B)$. Moreover, $\widehat{\L_q^\perp(A)} = \frac{1}{q}\L_q(A)$, and $\det(\L_q(A)) = q^{m-rk(A)}$, $\det(\L_q^\perp(A)) = q^{rk(A)}$. 
From now on, we will always consider that $A$ has full rank.

\subsection{Discrete Gaussian Distributions on Lattices and Variants}\label{sec: Lattice Gaussian Distribution}

\subsubsection{Discrete Gaussian Distributions.}\label{sec:gaussian lattice distrib}

Consider vectors $\vec c, \vec x\in \R^m$ and a real $\sigma>0$. The \emph{Gaussian function centered at $\vec c$} for standard deviation $\sigma$ is 
\begin{equation*}
    \rho_{\sigma,\vec c}(\vec x) := \exp{-\frac{\pi||\vec x - \vec c||^2}{\sigma^2}} \enspace.
\end{equation*}
From this function, we define \emph{Discrete Gaussian distributions} on lattices. Let us fix $\vec c\in \R^m$, and consider a lattice $\L\in \R^m$, some $\vec x\in \L$, and $\sigma>0$. The Gaussian lattice distribution defined on $\L$, centered in $\vec c$ and with standard deviation $\sigma$, is 
\begin{equation*}
    D_{\L, \sigma, \vec c}(\vec x) := \frac{\rho_{\sigma,\vec c}(\vec x)}{\rho_{\sigma,\vec c}(\L)} = \frac{\exp{-\frac{\pi}{\sigma^2}||\vec x -\vec c||^2}}{\sum_{\vec y\in \L} \exp{-\frac{\pi}{\sigma^2}||\vec y -\vec c||^2}} \enspace.
\end{equation*}
The denominator $\rho_{\sigma,\vec c}(\L)$ is a scaling factor so that $D_{\L, \sigma, \vec c}(\L) = 1$. When $\vec c = 0$, we simply write $D_{\L, \sigma}$ instead of $D_{\L, \sigma, \vec c}$. We use the following lemmas to bound Gaussian masses and their tails.


\begin{lemma}[{\cite[Lemma 2.9]{DBLP:journals/siamcomp/MicciancioR07}}]\label{lemma:rho sigma c}
    Let $\L$ be a lattice in $\R^m$. Then, for any $\sigma>0$ and $\vec c\in \R^m$, we have 
    $\rho_{\sigma,\vec c}(\L)\leqslant \rho_{\sigma}(\L)$.
\end{lemma}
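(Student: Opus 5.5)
The approach is Fourier-analytic: apply the Poisson summation formula to both $\rho_{\sigma,\vec c}(\L)$ and $\rho_\sigma(\L)$, then compare the two dual-side sums term by term. Without loss of generality we may take $\L$ full rank in $\R^m$. If it is not, decompose $\vec c = \vec c_1 + \vec c_2$ with $\vec c_1 \in \spann_\R(\L)$ and $\vec c_2$ orthogonal to it. For every $\vec x\in\L$ this gives $\|\vec x-\vec c\|^2 = \|\vec x - \vec c_1\|^2 + \|\vec c_2\|^2$, so
\[
\rho_{\sigma,\vec c}(\L) = \exp{-\pi\|\vec c_2\|^2/\sigma^2}\,\rho_{\sigma,\vec c_1}(\L) \leqslant \rho_{\sigma,\vec c_1}(\L).
\]
It therefore suffices to treat $\vec c_1$ inside the span, where the lattice is full rank after identifying the span with $\R^n$.

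Next I will recall that the Fourier transform of $\rho_\sigma$ on $\R^n$ is $\hat\rho_\sigma(\vec y) = \sigma^n \rho_{1/\sigma}(\vec y)$. Shifting the center by $\vec c$ multiplies the transform by the phase $e^{-2\pi i\langle \vec c,\vec y\rangle}$. The Poisson summation formula, which applies because the Gaussian is a Schwartz function, then gives
\[
\rho_{\sigma,\vec c}(\L) = \frac{\sigma^n}{\mathrm{vol}(P(\L))}\sum_{\vec y\in\hat\L} e^{-2\pi i\langle \vec c,\vec y\rangle}\,\rho_{1/\sigma}(\vec y).
\]
The left-hand side is real and positive, so I can take absolute values and use the triangle inequality. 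Each term has modulus at most $\rho_{1/\sigma}(\vec y)$, which yields
\[
\rho_{\sigma,\vec c}(\L) \leqslant \frac{\sigma^n}{\mathrm{vol}(P(\L))}\sum_{\vec y\in\hat\L}\rho_{1/\sigma}(\vec y).
\]
By the same Poisson formula with $\vec c=0$, the right-hand side is exactly $\rho_\sigma(\L)$, which proves the claim.

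The only step that needs care is justifying Poisson summation. This requires absolute convergence of both sums, which holds because the Gaussian decays faster than any polynomial while lattice point counts grow polynomially. It also requires the correct normalization, with the factor $1/\det$ and the dual lattice, which is standard. Since the statement is quoted from Micciancio--Regev, I would cite the Poisson formula rather than reprove it. The phase-bounding step itself is immediate.
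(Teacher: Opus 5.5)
Your proof is correct, and it is the standard Poisson-summation argument behind this fact. Poisson summation writes $\rho_{\sigma,\vec c}(\L)$ as a dual-lattice sum of unimodular phases times the positive weights $\rho_{1/\sigma}(\vec y)$, and dropping the phases gives $\rho_\sigma(\L)$. The paper gives no proof of its own and simply cites Micciancio--Regev for this standard Fourier argument. Your preliminary reduction, splitting $\vec c$ into its component in $\spann_\R(\L)$ and an orthogonal component, is a valid extra step that covers the non-full-rank lattices the paper's definition allows.
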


\begin{restatable}{lemma}{rhoZapprox}[From Lemma 2 in~\cite{DBLP:conf/pqcrypto/PoulyS25}]\label{lemma:rho Z approx}
For any $\sigma > 0$, $\rho_\sigma(\Z) \leq 3\sigma + 4 \enspace.$
\end{restatable}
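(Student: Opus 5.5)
We bound $\rho_\sigma(\Z)=\sum_{k\in\Z} e^{-\pi k^2/\sigma^2}$ directly by comparing the sum with an integral. By symmetry,
\[
\rho_\sigma(\Z) = 1 + 2\sum_{k\geq 1} e^{-\pi k^2/\sigma^2}.
\]
The function $t\mapsto e^{-\pi t^2/\sigma^2}$ is decreasing on $[0,\infty)$, so for every $k\geq 1$ we have $e^{-\pi k^2/\sigma^2}\leq \int_{k-1}^{k} e^{-\pi t^2/\sigma^2}\,dt$. Summing over $k\geq 1$ gives
\[
\sum_{k\geq 1} e^{-\pi k^2/\sigma^2}\leq \int_0^\infty e^{-\pi t^2/\sigma^2}\,dt = \frac{\sigma}{2}.
\]
Hence $\rho_\sigma(\Z)\leq 1+\sigma$, which is at most $3\sigma+4$ for every $\sigma>0$.

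The only point needing care is the monotone-comparison step: we must bound each term by the integral over the interval to its \emph{left}, so that the term $k=0$ (value $1$) is kept separate. This is why the bound carries an additive constant and does not vanish as $\sigma\to 0$, in agreement with $\rho_\sigma(\Z)\to 1$. The Gaussian integral $\int_{\R} e^{-\pi t^2/\sigma^2}\,dt=\sigma$ is standard.

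The statement is deliberately loose, with constants $3$ and $4$, presumably matching the form of Lemma 2 in~\cite{DBLP:conf/pqcrypto/PoulyS25}. If one wanted to follow that source instead, the alternative route would be Poisson summation, $\rho_\sigma(\Z)=\sigma\rho_{1/\sigma}(\Z)$, combined with a crude tail bound. The elementary integral comparison above already suffices, so we expect no real obstacle.
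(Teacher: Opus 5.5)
Your proof is correct, and it takes a different, more elementary route than the paper. The paper does no computation of its own here. It invokes Lemma~2 of \cite{DBLP:conf/pqcrypto/PoulyS25}, which gives the two-sided estimate $\tilde\rho(\sigma) < \rho_\sigma(\Z) \leq \tilde\rho(\sigma) + \gamma$ with $\gamma \approx 7\cdot 10^{-6}$. There $\tilde\rho(\sigma) = 1 + 2e^{-\pi/\sigma^2}$ for $\sigma \leq 1$ and $\tilde\rho(\sigma) = \sigma(1 + 2e^{-\pi\sigma^2})$ for $\sigma > 1$. The paper then merges the two regimes crudely: $\rho_\sigma(\Z) \leq 3 + \gamma \leq 4$ for small $\sigma$, and $\rho_\sigma(\Z) \leq 3\sigma + \gamma$ for large $\sigma$. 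This is where the loose constants $3$ and $4$ come from. Your left-endpoint integral comparison needs neither the external lemma nor Poisson summation, and it directly gives the sharper bound $\rho_\sigma(\Z) \leq 1 + \sigma$. What the cited lemma offers in exchange is a matching lower bound and essentially exact asymptotics in both regimes ($\rho_\sigma(\Z) \to 1$ as $\sigma \to 0$ and $\rho_\sigma(\Z) \sim \sigma$ as $\sigma \to \infty$). None of that is needed for this one-sided upper bound, so your argument is a cleaner way to establish the statement as used in the paper.
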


\begin{restatable}{lemma}{gaussianfinitesets}\label{lemma: inequality on gaussian masses on finite sets}
Consider an integer $M\in \N^*$, and real numbers $\sigma>0$ and $\vec c\in \R$. Then, we have
    \begin{gather*}
        \rho_{\sigma,\vec c}(\Z)\leqslant \frac{\rho_{\sigma}(\iintv{-2^M,2^M})}{1-2^M\sqrt{2\pi e}\times \exp{-\pi 2^{2M}}} \text{ and } \rho_{\sigma,\vec c}(\Z\backslash\iintv{-2^M,2^M})\leqslant O(\sigma\times 2^M \exp{-\pi 2^{2M}}) \enspace.
    \end{gather*}
\end{restatable}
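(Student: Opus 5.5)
The plan is to get both inequalities from one bound: the mass of the one-dimensional Gaussian outside $\iintv{-2^M,2^M}$ is a tiny fraction of its total mass. Then I absorb that fraction.

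\emph{Reduction to a centered Gaussian.} By \autoref{lemma:rho sigma c} applied to $\L=\Z$, we have $\rho_{\sigma,\vec c}(\Z)\leqslant\rho_\sigma(\Z)$. So it suffices to bound the centered quantity $\rho_\sigma(\Z)$. Splitting $\Z$ into $\iintv{-2^M,2^M}$ and its complement gives
\[
\rho_\sigma(\Z)=\rho_\sigma(\iintv{-2^M,2^M})+\rho_\sigma(\Z\backslash\iintv{-2^M,2^M}).
\]

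\emph{Tail fraction.} I will use Banaszczyk's tail inequality in dimension one: for $r\geqslant 1/\sqrt{2\pi}$,
\[
\rho_\sigma(\Z\setminus r\sigma B)\leqslant \bigl(r\sqrt{2\pi e}\,e^{-\pi r^2}\bigr)\rho_\sigma(\Z),
\]
where $B=[-1,1]$. I take $r=2^M$, which is admissible since $M\geqslant1$.

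\emph{Main obstacle: the normalisation of the interval.} The exponent $e^{-\pi 2^{2M}}$ in the statement does not involve $\sigma$. Banaszczyk's bound naturally controls the complement of the interval of radius $2^M\sigma$, not radius $2^M$. So the statement must be read with the interval measured in units of $\sigma$, or under an implicit normalisation such as $\sigma\leqslant1$, in which case $\iintv{-2^M,2^M}\supseteq \Z\cap 2^M\sigma B$. I will first check the normalisation used where the lemma is applied and align the interval with it; this is the step I expect to need the most care. Once it is fixed, writing $\varepsilon=2^M\sqrt{2\pi e}\,e^{-\pi 2^{2M}}$ gives
\[
\rho_\sigma(\Z)\leqslant\rho_\sigma(\iintv{-2^M,2^M})+\varepsilon\rho_\sigma(\Z).
\]
Since $\varepsilon<1$, rearranging yields the first inequality.

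\emph{Tail bound.} For the non-centered tail I would compare directly. If the center is in the range the lemma intends (after reducing $\vec c$ modulo $\Z$, say $|\vec c|\leqslant 1/2$), the points outside the interval sit at distance at least about $2^M-1/2$ from $\vec c$. Then I can either apply Banaszczyk's lemma for shifted lattices, which carries the same factor $\varepsilon$, or bound the sum by a Gaussian integral beyond that point. Either route gives a tail at most $\varepsilon\,\rho_\sigma(\Z)$. Finally, \autoref{lemma:rho Z approx} gives $\rho_\sigma(\Z)\leqslant 3\sigma+4=O(\sigma)$ for $\sigma$ bounded below, so the tail is $O(\sigma\,2^M e^{-\pi2^{2M}})$, which is the second inequality.
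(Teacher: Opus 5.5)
Your route is the paper's: pass to $\rho_\sigma(\Z)$ via \autoref{lemma:rho sigma c}, split off the window, and apply Banaszczyk's bound in dimension one with $t=2^M$. The paper quotes that bound as \cite[Lemma 2.10]{DBLP:journals/siamcomp/MicciancioR07}, restated in \autoref{lemma:rho sigma c on a finite subset}. You then absorb the $\varepsilon\rho_\sigma(\Z)$ term, and handle the tail with the shifted version plus $\rho_\sigma(\Z)\leqslant 3\sigma+4$.

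The normalisation problem you flag is genuine, and the paper's proof only gets past it by misquoting that lemma. Its restatement drops the factor $\sigma$ from the radius, which should be $t\sqrt{m}\,\sigma$. In the shifted inequality it also removes a ball centred at $0$ instead of at the Gaussian's centre $\vec c$.

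As literally stated, the lemma is false:
\begin{itemize}
\item For $\vec c=0$, fixed $M$ and $\sigma\to\infty$, the left side of the first inequality is at least $\sigma$, while the right side is at most $(2^{M+1}+1)/(1-\varepsilon)$.
\item For $\sigma=1$ and $\vec c=2^{M+1}$, the tail is at least $1$, while the claimed bound tends to $0$ with $M$.
\end{itemize}
So only a corrected statement can be proved. You must commit to one rather than defer it to ``checking the normalisation later''.

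A clean version is as follows. With $t=2^M/\sigma>1/\sqrt{2\pi}$, Banaszczyk gives the factor $\varepsilon(\sigma)=t\sqrt{2\pi e}\,e^{-\pi t^2}$, i.e.\ exponent $-\pi 2^{2M}/\sigma^2$. This is decreasing in $t$, so it is at most $\varepsilon=2^M\sqrt{2\pi e}\,e^{-\pi 2^{2M}}$ when $\sigma\leqslant 1$, and the first inequality then holds verbatim.

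For the tail, add the hypothesis $|\vec c|\leqslant 1/2$. This is exactly the situation in the applications, after translating the window $\iintv{-2^M,2^M}+\lfloor\widetilde x\rceil$. Every integer outside the window is then at distance at least $2^M+1/2$ from $\vec c$, not ``about $2^M-1/2$'' as you wrote. So it lies outside the ball of radius $t\sigma=2^M$ centred at $\vec c$, and the correct shifted bound gives at most $2\varepsilon(\sigma)\rho_\sigma(\Z)$. With $2^M-1/2$ you would only get $e^{-\pi(2^M-1/2)^2}$. That exceeds $e^{-\pi 2^{2M}}$ by a factor $e^{\Theta(2^M)}$, which no constant in the $O(\cdot)$ can absorb.

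Keeping $\varepsilon(\sigma)$ rather than $\varepsilon$ also removes your ``$\sigma$ bounded below'' caveat when $\sigma\leqslant 1$. Indeed $\varepsilon(\sigma)=\sqrt{2\pi e}\,\sigma 2^M e^{-\pi 2^{2M}}\cdot\sigma^{-2}e^{-\pi 2^{2M}(\sigma^{-2}-1)}$, and the last factor is at most $1$ there.

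Finally, in the applications $\sigma_i=\sigma/\|\widetilde{\vec b}_i\|$ is typically larger than $1$. What is really needed there is the scaled bound, which requires $2^M$ to exceed $\sigma_i$ by a margin, roughly $2^{2M}/\sigma_i^2\gtrsim\nu+M$. The condition $2^{2M}\gtrsim\log\sigma+M+\nu$ assumed in \autoref{corollary:gaussian-mass} is not enough.
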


The proof is given in Appendix~\ref{sec: Proof of some results}.

%
%
%

\subsubsection{Periodic Gaussian Function.}
The periodic Gaussian function is defined as follows: 
\begin{equation}\label{eq:periodic Gaussian function}
f_{\L,\sigma}(\vec x\in \R^m) := \frac{\rho_\sigma(\L+\vec x)}{\rho_\sigma(\L)}
\end{equation}

This function is clearly periodic over $\L$. Therefore, one can define its Fourier coefficients as follows:
\begin{equation*}
\text{For }w\in \hat{\L},\ \hat{f}_{\L,\sigma}(w) = \frac{1}{\mathrm{vol}(\L)} \int_{z\in P(\L)} f_{\L,\sigma}(z)\exp{-2\pi i \langle w,z\rangle}dz \enspace.
\end{equation*}
Moreover, the Fourier series of $f_{\L,\sigma}$ at $\vec{x}$ is defined as:
\begin{equation*}
    \sum_{w\in \hat{\L}}\hat{f}_{\L,\sigma}(w)\exp{2\pi i \langle \vec x,\vec w\rangle} \enspace.
\end{equation*}

\begin{lemma}[{\cite[Fact 2.3]{DBLP:journals/jacm/AharonovR05}}]\label{lemma:fourier serie equal to f}
    For any sufficiently smooth function $f:\R^n \to \R$ that is periodic over some lattice $\L$ and any $x\in \R^n$, the Fourier series of $f$ at $x$ is equal to $f(x)$.
\end{lemma}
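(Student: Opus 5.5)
The statement is the classical pointwise convergence theorem for Fourier series on a torus, so I would reduce it to the standard multidimensional Fourier inversion on $\R^n/\Z^n$ and then transport it to $\L$ by a linear change of variables. Concretely, let $B$ be a basis of $\L$, so $\L = B\Z^n$ and $P(\L) = B[0,1)^n$. I would define $g(\vec u) := f(B\vec u)$, which is $\Z^n$-periodic and has the same smoothness as $f$ since $B$ is invertible. Its coefficients are $\hat g(\vec k) = \int_{[0,1)^n} g(\vec u) e^{-2\pi i \langle \vec k,\vec u\rangle} d\vec u$ for $\vec k \in \Z^n$.

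The first step is to check that these coefficients match the paper's $\hat f$. Dual vectors are exactly $\vec w = B^{-T}\vec k$ with $\vec k\in\Z^n$, which gives a bijection $\hat\L \leftrightarrow \Z^n$. Substituting $\vec z = B\vec u$, with $d\vec z = |\det B|\, d\vec u = \mathrm{vol}(\L)\, d\vec u$, yields $\langle \vec w,\vec z\rangle = \langle B^{-T}\vec k, B\vec u\rangle = \langle \vec k,\vec u\rangle$. Hence $\hat f_{\L}(\vec w) = \hat g(\vec k)$, and the normalisation $1/\mathrm{vol}(\L)$ cancels the Jacobian. The same identity gives $e^{2\pi i\langle \vec x,\vec w\rangle} = e^{2\pi i \langle \vec u,\vec k\rangle}$ for $\vec x = B\vec u$. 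So the two Fourier series agree term by term, and it suffices to prove $g(\vec u) = \sum_{\vec k\in\Z^n}\hat g(\vec k)e^{2\pi i\langle \vec k,\vec u\rangle}$.

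For the torus statement, I would interpret ``sufficiently smooth'' as $g\in C^{s}$ with $s > n$, or even $C^\infty$, which holds for the periodic Gaussian $f_{\L,\sigma}$. Integrating by parts $s$ times on the torus (no boundary terms, by periodicity) gives $|\hat g(\vec k)| \le C\,\|\vec k\|^{-s}$, so $\sum_{\vec k}|\hat g(\vec k)| < \infty$. The series therefore converges absolutely and uniformly to a continuous periodic function $h$. Then $h$ and $g$ have identical Fourier coefficients, and by completeness of the characters $\{e^{2\pi i\langle\vec k,\cdot\rangle}\}$ in $L^2(\R^n/\Z^n)$ (for example via Stone–Weierstrass density of trigonometric polynomials in continuous functions), $g = h$ almost everywhere. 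Since both are continuous, $g = h$ everywhere.

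The only delicate point is the completeness/uniqueness step and pinning down what ``sufficiently smooth'' means. I would handle it by citing density of trigonometric polynomials and the Parseval identity, with the decay estimate ensuring pointwise rather than merely $L^2$ equality. Alternatively, one can cite the result directly as in \cite{DBLP:journals/jacm/AharonovR05}.
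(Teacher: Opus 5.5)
The paper gives no proof of this statement. It imports it as Fact~2.3 of Aharonov--Regev, leaves ``sufficiently smooth'' unquantified, and only needs it for $f_{\L,\sigma}$. Your argument is a correct, self-contained proof of this standard fact, so it differs from the paper only in actually supplying an argument.

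The individual steps check out:
\begin{itemize}
\item The reduction $g = f\circ B$ is right. We have $\hat\L = B^{-T}\Z^n$ and $\langle B^{-T}\vec k, B\vec u\rangle = \langle \vec k,\vec u\rangle$, and the Jacobian $|\det B| = \mathrm{vol}(\L)$ cancels the paper's normalisation exactly. So the two series agree term by term.
\item The decay bound holds. Integrating by parts $s$ times in a coordinate $j$ with $|k_j| = \|\vec k\|_\infty$ gives $|\hat g(\vec k)| = O(\|\vec k\|_\infty^{-s})$, which is summable over $\Z^n$ for integer $s \geq n+1$.
\item Uniqueness of Fourier coefficients for continuous functions then finishes the argument.
\end{itemize}

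Compared with the bare citation, your route pins down a concrete hypothesis. $C^{n+1}$ suffices, or more generally continuity plus absolute summability of the coefficients. You also get absolute convergence, so the unordered sum over $\hat\L$ that the paper writes is independent of enumeration.

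Two small remarks. First, you implicitly assume $\L$ is full rank. This is needed for $B$ to be invertible and for the integral over $P(\L)$ in the paper's definition of $\hat f$ to make sense; it is the setting of Aharonov--Regev and of every use in the paper. Second, for $f_{\L,\sigma}$ specifically, Poisson summation applied to $\rho_\sigma$ gives in one step both the pointwise expansion and the identification $\hat f_{\L,\sigma} = D_{\hat\L,1/\sigma}$ (the paper's next lemma). Poisson summation itself, however, is usually proved by exactly your argument.
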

The function $f_{\L,\sigma}$ is sufficiently smooth for this lemma to apply, for any $\L(B)$ and $\sigma>0$. We then have the following result.
\begin{lemma}[{\cite[Claim 4.1]{DBLP:journals/jacm/AharonovR05}}]\label{lemma:Fourier serie of f L,sigma}
    For any lattice $\L\subset \R^m$, and $\sigma>0$, we have $\hat f_{\L,\sigma} = D_{\hat \L,1/\sigma}$, which is a probability measure over the dual lattice $\hat \L$.
\end{lemma}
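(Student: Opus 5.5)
We have to show that the Fourier coefficients of $f_{\L,\sigma}$ coincide with $D_{\hat\L,1/\sigma}$. The plan is to compute the coefficient integral directly. Unfolding the periodization turns the integral over $P(\L)$ into an integral over all of $\R^m$, and that integral is the continuous Fourier transform of a Gaussian. We work in $V=\spann_\R(\L)$; if $\L$ is not full rank, restrict everything to $V$ with its induced Lebesgue measure, so that $\mathrm{vol}(\L)$ is the covolume there.

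\textbf{Step 1 (unfolding).} For $\vec w\in\hat\L$, write
\[\hat f_{\L,\sigma}(\vec w)=\frac{1}{\mathrm{vol}(\L)\rho_\sigma(\L)}\int_{P(\L)}\sum_{\vec x\in\L}\rho_\sigma(\vec z+\vec x)\,e^{-2\pi i\langle \vec w,\vec z\rangle}d\vec z .\]
Since $\langle\vec w,\vec x\rangle\in\Z$ for $\vec x\in\L$, we have $e^{-2\pi i\langle \vec w,\vec z\rangle}=e^{-2\pi i\langle \vec w,\vec z+\vec x\rangle}$. The translates $P(\L)+\vec x$ tile $V$. Exchanging sum and integral is justified by absolute convergence: the Gaussian is integrable, so Tonelli applies. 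This gives
\[\hat f_{\L,\sigma}(\vec w)=\frac{1}{\mathrm{vol}(\L)\rho_\sigma(\L)}\int_V\rho_\sigma(\vec y)e^{-2\pi i\langle\vec w,\vec y\rangle}d\vec y .\]

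\textbf{Step 2 (Gaussian transform).} The standard identity $\widehat{\rho_\sigma}=\sigma^{m}\rho_{1/\sigma}$ in dimension $m=\dim V$ follows coordinatewise from the one-dimensional computation $\int_\R e^{-\pi t^2/\sigma^2-2\pi i t u}dt=\sigma e^{-\pi\sigma^2u^2}$. Applying it yields
\[\hat f_{\L,\sigma}(\vec w)=\frac{\sigma^m\rho_{1/\sigma}(\vec w)}{\mathrm{vol}(\L)\rho_\sigma(\L)} .\]

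\textbf{Step 3 (normalization).} It remains to show $\rho_\sigma(\L)=\frac{\sigma^m}{\mathrm{vol}(\L)}\rho_{1/\sigma}(\hat\L)$, which is exactly the Poisson summation formula. This can be derived inside the same framework without quoting Poisson separately. Evaluate $f_{\L,\sigma}$ at $\vec x=0$: by definition $f_{\L,\sigma}(0)=1$. Lemma~\ref{lemma:fourier serie equal to f} applies because $f_{\L,\sigma}$ is smooth and periodic, so $1=\sum_{\vec w\in\hat\L}\hat f_{\L,\sigma}(\vec w)$. Substituting Step 2 gives $\sum_{\vec w}\sigma^m\rho_{1/\sigma}(\vec w)=\mathrm{vol}(\L)\rho_\sigma(\L)$, which is the required identity. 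Hence $\hat f_{\L,\sigma}(\vec w)=\rho_{1/\sigma}(\vec w)/\rho_{1/\sigma}(\hat\L)=D_{\hat\L,1/\sigma}(\vec w)$. These values are positive and sum to $1$, so $\hat f_{\L,\sigma}$ is a probability measure on $\hat\L$.

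The main delicate point is Step 1. It needs the measure-theoretic care of the unfolding and the correct handling of the non-full-rank case: the dual lies in $V$ and the integrals are taken over $V$. Steps 2 and 3 are routine.
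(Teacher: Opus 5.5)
Your proof is correct. The paper does not prove this statement; it only cites Claim 4.1 of Aharonov--Regev. Your Steps 1--2 reproduce the standard argument behind that claim: unfold the periodization over $P(\L)$ into an integral over $\spann_\R(\L)$, then use that the Fourier transform of $\rho_\sigma$ is $\sigma^{\dim}\rho_{1/\sigma}$.

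The one difference is in the normalization step. You obtain $\mathrm{vol}(\L)\rho_\sigma(\L)=\sigma^{m}\rho_{1/\sigma}(\hat\L)$ by evaluating Lemma~\ref{lemma:fourier serie equal to f} at $\vec x=0$, instead of quoting the Poisson summation formula. This is legitimate and not circular: that lemma is a general fact about smooth periodic functions, and the coefficients you computed are positive and summable.
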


Later on, we will need to estimate the distance of an element to a lattice $\L$ efficiently. One of the results used in \cite{DBLP:conf/eurocrypt/PoulyS24} to do so is the following Lemma, which is based on the Chernoff-Hoeffding bound.

\begin{lemma}[Pointwise approximation {\cite[Lemma 1.3]{DBLP:journals/jacm/AharonovR05},\cite[Lemma 6]{DBLP:conf/eurocrypt/PoulyS24}}]\label{theorem: Poitwise approximation lemma}
    Consider $\L$ a full-rank lattice in $\R^m$, and $h: \R^m \to \R$ that is periodic over $\L$ and whose Fourier series $\hat h$ is a probability measure over $\hat \L$. Let $N$ be an integer, $\delta>0$, and $X\subset \R^m$ a finite set. Let $W = (\vec w_1,\dotsc,\vec w_N)$ be a list of vectors in the dual lattice chosen randomly and independently from the distribution $\hat h$. Then, with probability at least $1-|X|2^{-\Omega(N\delta^2)}$, for all $\vec x \in \L+X$:
    \begin{equation*}
    |h_W(\vec x) - h(x)|\leqslant \delta,\text{ where } h_W(\vec x) := \frac{1}{N}\sum_{i=1}^N \cos(2\pi\langle \vec w_i,\vec x\rangle )\enspace .
    \end{equation*}
\end{lemma}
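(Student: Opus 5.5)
The proof rests on one observation: for each fixed $\vec x$, $h_W(\vec x)$ is an empirical average of $N$ independent bounded random variables whose mean is exactly $h(\vec x)$. The plan is therefore to apply the Chernoff--Hoeffding bound at each point of $X$, take a union bound over $X$, and extend from $X$ to $\L+X$ by periodicity.

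\emph{Step 1: identify the mean.} Fix $\vec x\in\R^m$. By Lemma~\ref{lemma:fourier serie equal to f}, the Fourier series of $h$ at $\vec x$ equals $h(\vec x)$; this requires the smoothness assumption that the paper makes for the functions it considers. Hence
\begin{equation*}
h(\vec x)=\sum_{\vec w\in\hat\L}\hat h(\vec w)\exp{2\pi i\langle \vec w,\vec x\rangle}=\mathbb{E}_{\vec w\sim\hat h}\bigl[\exp{2\pi i\langle \vec w,\vec x\rangle}\bigr],
\end{equation*}
where the second equality uses that $\hat h$ is a probability measure on $\hat\L$. Since $h$ is real-valued, we take real parts on both sides and obtain $h(\vec x)=\mathbb{E}_{\vec w\sim\hat h}[\cos(2\pi\langle\vec w,\vec x\rangle)]$.

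\emph{Step 2: concentrate at each point of $X$.} For each $\vec x\in X$, the values $\cos(2\pi\langle\vec w_i,\vec x\rangle)$, $i=1,\dots,N$, are i.i.d., lie in $[-1,1]$, and have mean $h(\vec x)$. Hoeffding's inequality then gives
\begin{equation*}
\Pr\bigl[|h_W(\vec x)-h(\vec x)|>\delta\bigr]\leqslant 2\exp{-N\delta^2/2}=2^{-\Omega(N\delta^2)}.
\end{equation*}
A union bound over the $|X|$ points of $X$ shows that, with probability at least $1-|X|2^{-\Omega(N\delta^2)}$, the inequality $|h_W(\vec x)-h(\vec x)|\leqslant\delta$ holds simultaneously for all $\vec x\in X$.

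\emph{Step 3: extend to $\L+X$.} Let $\vec y\in\L$ and $\vec w_i\in\hat\L$. Since $\hat\L\subset\spann_\R(\L)$ pairs integrally with $\L$, we have $\langle\vec w_i,\vec y\rangle\in\Z$. Cosine has period $2\pi$, so $\cos(2\pi\langle\vec w_i,\vec x+\vec y\rangle)=\cos(2\pi\langle\vec w_i,\vec x\rangle)$, and therefore $h_W(\vec x+\vec y)=h_W(\vec x)$. The function $h$ is periodic over $\L$ by hypothesis, so $h(\vec x+\vec y)=h(\vec x)$. Consequently the good event from Step 2 already implies the bound for every point of $\L+X$, with the same probability.

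The only delicate point is Step 1. Identifying $h(\vec x)$ with the expectation requires pointwise convergence of the Fourier series, which is exactly what Lemma~\ref{lemma:fourier serie equal to f} provides. One must also justify replacing the complex exponential by the cosine: $h$ is real, so the imaginary part of the expectation vanishes. Alternatively, one can note that when $\hat h$ is symmetric, for instance for $\hat h=D_{\hat\L,1/\sigma}$ as in Lemma~\ref{lemma:Fourier serie of f L,sigma}, the sine terms cancel in pairs $\pm\vec w$.
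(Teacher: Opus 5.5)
Your proof is correct and follows the argument the paper has in mind. The paper does not reprove this lemma: it cites Aharonov--Regev and Pouly--Shen and says only that the result rests on the Chernoff--Hoeffding bound, which is what you do (mean identified via Lemma~\ref{lemma:fourier serie equal to f}, Hoeffding at each point of $X$, a union bound, then extension to $\L+X$ by $\L$-periodicity of both $h$ and $h_W$). You are also right that equating $h(\vec x)$ with the expectation needs a regularity hypothesis the statement leaves implicit. Continuity of $h$ suffices, because $\hat h$ being a probability measure makes the Fourier series converge absolutely and uniformly.
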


By~\autoref{lemma:Fourier serie of f L,sigma}, the pointwise approximation lemma applies to $f_{\L,\sigma}$. 

\subsubsection{Variants of Discrete Gaussian Distributions for Finite Probability Spaces.}\label{sec: Variants of Gaussian lattice distributions for finite probability spaces}

In the rest of the paper, we will build registers $\ket s$ representing probability distributions close to either $D_{\L, \sigma}$, or the ``Klein distribution'' (that we will define later). 
However, a register contains only finitely many qubits, so it can only represent a probability distribution over a finite space, while $D_{\L, \sigma}$ is defined over an infinite space. Let $\Omega$ be a finite subset of $\L$, we define the \emph{finite discrete gaussian distribution} with standard deviation $\sigma$ as
\begin{equation*}
    D_{\Omega, \sigma}(\vec x\in \Omega) = \frac{\rho_{\sigma}(\vec x)}{\rho_{\sigma}(\Omega)} = \frac{\exp{-\frac{\pi}{\sigma^2}||\vec x||^2}}{\sum_{\vec y\in \Omega} \exp{-\frac{\pi}{\sigma^2}||\vec y||^2}} \enspace.
\end{equation*}

We have:
\begin{align}\label{eq: dtv of gaussian distrib over finite space vs infinite space}
    2d_{TV}(D_{\Omega, \sigma},D_{\L, \sigma}) &= \sum_{\vec x\in \Omega}|D_{\Omega, \sigma}(\vec x) - D_{\L, \sigma}(\vec x)| + \sum_{\vec x\in \L\backslash \Omega}|D_{\Omega, \sigma}(\vec x) - D_{\L, \sigma}(\vec x)|\nonumber \\
    &= \sum_{\vec x\in \Omega}D_{\Omega, \sigma}(\vec x)\left(1-\frac{\rho_{\sigma}(\Omega)}{\rho_{\sigma}(\L)}\right) + \sum_{\vec x\in \L\backslash \Omega}D_{\L, \sigma}(\vec x)\nonumber \\
    &= \left(1-\frac{\rho_{\sigma}(\Omega)}{\rho_{\sigma}(\L)}\right) + \rho_{\sigma}(\L\backslash \Omega)\nonumber\\
    &= \rho_{\sigma}(\L\backslash \Omega)(1+1/\rho_{\sigma}(\L))\leqslant 2\rho_{\sigma}(\L\backslash \Omega) \enspace.
\end{align}

\begin{restatable}{proposition}{somedtvgaussianmass} \label{prop: bound on some dtv and gaussian mass}
    Consider a full rank lattice $\L(B)\in \R^m$, along with a finite subset $\Omega \supset B\prod_{i=1}^m \iintv{-2^{M-1},2^{M-1}}$, for some integer $M>0$. 
    We have:
    \begin{equation*}
    \rho_{\sigma}(\L\backslash \Omega) \leqslant 2 \left(2^{M-1}\sqrt{2\pi e}\right)^m\times \exp{-\pi\cdot m\cdot 2^{2M-2}}  (3\sigma / \min_i \| \widetilde{\vec b}_i\| + 4)^m \enspace.
    \end{equation*}
\end{restatable}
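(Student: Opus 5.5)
We write every lattice vector as $\vec x = B\vec a$ with $\vec a\in\Z^m$ and pass to Gram--Schmidt coordinates. Set $B = \widetilde B R$ with $R$ upper unitriangular. Then
\[
\|\vec x\|^2=\sum_i \|\widetilde{\vec b}_i\|^2 \bigl(a_i + c_i(a_{i+1},\dots,a_m)\bigr)^2,
\]
where $c_i$ depends only on the later coordinates. If $\vec x\notin\Omega$, then $\vec a\notin\prod_i\iintv{-2^{M-1},2^{M-1}}$, so some coordinate satisfies $|a_j|>2^{M-1}$. A union bound over $j$ gives
\[
\rho_\sigma(\L\setminus\Omega)\le\sum_{j=1}^m \sum_{\vec a:\,|a_j|>2^{M-1}}\rho_\sigma(B\vec a).
\]

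Each term is handled by summing coordinates from $a_1$ up to $a_m$, as in the analysis of Klein's sampler. For fixed $a_{i+1},\dots,a_m$, the sum over $a_i$ is a one-dimensional shifted Gaussian sum $\rho_{\sigma_i, -c_i}(\Z)$ with $\sigma_i=\sigma/\|\widetilde{\vec b}_i\|$.
\begin{itemize}
\item For an unrestricted coordinate $i\neq j$, Lemma~\ref{lemma:rho sigma c} removes the shift and Lemma~\ref{lemma:rho Z approx} gives the bound $3\sigma_i+4\le 3\sigma/\min_i\|\widetilde{\vec b}_i\|+4$.
\item For the restricted coordinate $a_j$, we need the tail $\rho_{\sigma_j,c}(\Z\setminus\iintv{-2^{M-1},2^{M-1}})$. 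The plan is to use the second inequality of Lemma~\ref{lemma: inequality on gaussian masses on finite sets} (with $M-1$ in place of $M$) or rather its explicit form from the appendix proof, which should read $2^{M-1}\sqrt{2\pi e}\,\exp{-\pi 2^{2M-2}}$ times a factor at most $3\sigma_j+4$.
\end{itemize}
Multiplying these bounds gives a single $j$-term of order $2^{M-1}\sqrt{2\pi e}\,\exp{-\pi 2^{2M-2}}(3\sigma/\min\|\widetilde{\vec b}_i\|+4)^m$.

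The stated bound has the $m$-th power of $2^{M-1}\sqrt{2\pi e}$ and $\exp{-\pi m 2^{2M-2}}$, so the authors presumably bound each coordinate by a tail-type factor. A cheaper route to exactly their form is the Banaszczyk-type tail bound for $\Z^m$: $\rho(\Z^m\setminus r\mathcal B)\le (r\sqrt{2\pi e/m}\,e^{-\pi r^2/m})^m\rho(\Z^m)$, applied in the lattice after noting that $\vec x\notin\Omega$ forces a large norm. I would follow that route. Scale by $\sigma$, use $\|\vec x\|\ge$ some threshold ($r=\sqrt m\,2^{M-1}$ after normalization), and apply Banaszczyk's lemma in the form $\rho_\sigma(\L\setminus r\sigma\mathcal B)\le (r\sqrt{2\pi e/m}\,e^{-\pi r^2/m})^m\rho_\sigma(\L)$. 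With $r=\sqrt m\,2^{M-1}$ this yields exactly $(2^{M-1}\sqrt{2\pi e})^m e^{-\pi m 2^{2M-2}}\rho_\sigma(\L)$. Finally, bound $\rho_\sigma(\L)\le\prod_i\rho_{\sigma_i}(\Z)\le(3\sigma/\min\|\widetilde{\vec b}_i\|+4)^m$ using the same Gram--Schmidt iteration together with Lemmas~\ref{lemma:rho sigma c} and \ref{lemma:rho Z approx}. The leading $2$ absorbs slack, for example from a $1-\dots$ denominator as in Lemma~\ref{lemma: inequality on gaussian masses on finite sets}.

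The main obstacle is the geometric step linking ``$\vec a$ outside the box'' to ``$\|\vec x\|$ large relative to $\sigma$''. Being outside the coefficient box does not directly give a Euclidean lower bound in $\sigma$ units. I would handle this by working in Gram--Schmidt coordinates. Each rounded coordinate deviation is penalized by $\sigma_i$, so applying the tail bound coordinatewise to $\Z$ with parameter $\sigma_i$ produces the per-coordinate factor. If this mismatch cannot be reconciled, I fall back on the union-bound/Klein iteration above, which gives a bound at least as strong up to the form of constants.
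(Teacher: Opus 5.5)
\paragraph{There is a genuine gap, and it cannot be closed: the statement is false as written.} You flagged the obstacle yourself. Your Banaszczyk route needs $\L\setminus\Omega\subseteq\L\setminus\sqrt{m}\,2^{M-1}\sigma\mathcal{B}$, with $\mathcal{B}$ the unit ball. The inclusion goes the other way. For $B=I$, $\sigma=1$ the box is inscribed in that ball, so for $m\geq 4$ the point $(2^{M-1}+1)\vec{e}_1$ lies outside the box but inside the ball.

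Your proposed repair, a tail factor from every coordinate, is in substance what the paper's appendix proof does. There it rests on two false steps:
\begin{itemize}
\item The bound $\|R\vec{a}\|\geq\min_i\|\widetilde{\vec b}_i\|\,\|\vec{a}\|$, justified by the eigenvalues of the triangular $R$. The smallest singular value of a non-normal matrix can be far below $\min_i|R_{i,i}|$.
\item The identity $\Z^m\setminus\prod_i\iintv{-2^{M-1},2^{M-1}}=(\Z\setminus\iintv{-2^{M-1},2^{M-1}})^m$. The complement of a box consists of vectors with \emph{some} coordinate outside, not \emph{all}. This identity is where the $m$-th power of the tail factor comes from.
\end{itemize}
A concrete counterexample: take $B=I_2$, $\sigma=1$, $M=3$, $\Omega=\iintv{-4,4}^2$. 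The left-hand side is at least $\rho_1(5\vec{e}_1)=e^{-25\pi}\approx 8\cdot 10^{-35}$. The right-hand side is $98\cdot 32\pi e\cdot e^{-32\pi}\approx 6\cdot 10^{-40}$.

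The cut-off is also mis-scaled in $\sigma$. For $m=1$, $B=(1)$, $M=2$, $\sigma=100$, $\Omega=\iintv{-2,2}$, the left side is about $95$ and the right side about $0.02$.

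\paragraph{Your fallback does not rescue the statement either.} A union bound over the offending coordinate yields only one tail factor. That gives a bound of order $m\,2^{M-1}\sqrt{2\pi e}\,e^{-\pi 2^{2M-2}}(3\sigma/\min_i\|\widetilde{\vec b}_i\|+4)^m$, which is much weaker than the claimed $m$-th power, not ``at least as strong''.

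Worse, the one-dimensional estimate you borrow (the second inequality of Lemma~\ref{lemma: inequality on gaussian masses on finite sets}) is false for large centers. Banaszczyk's lemma, correctly stated, gives $\sum_{k\in\Z:\,|k-c|>t\sigma}\rho_{\sigma,c}(k)<2t\sqrt{2\pi e}\,e^{-\pi t^2}\rho_\sigma(\Z)$ for $t>1/\sqrt{2\pi}$. There the excluded interval is centered at $c$ and scales with $\sigma$. By contrast, $\rho_{\sigma_j,c}(\Z\setminus\iintv{-2^{M-1},2^{M-1}})\geq 1$ whenever $c$ is an integer with $|c|>2^{M-1}$.

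In your iteration the shift $c_j$ is a Gram--Schmidt combination of $a_{j+1},\dots,a_m$, and it is easily that large. Take $\vec b_1=(1,0)^T$, $\vec b_2=(10,1)^T$, $\sigma=1$, $M=2$, $\Omega=B\iintv{-2,2}^2$. The vector $-10\vec b_1+\vec b_2=(0,1)^T$ is not in $\Omega$ and contributes $e^{-\pi}\approx 0.04$, against a right-hand side below $10^{-7}$. This example also refutes the eigenvalue step above.

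\paragraph{What is provable.} A single-tail bound holds for the actual Klein support. Writing lattice vectors as $B\vec x$, coordinate $x_i$ ranges over $\iintv{-2^M,2^M}+\lfloor\widetilde x_i\rceil$, so leaving $\Omega$ forces $|x_i-\widetilde x_i|>2^M-\tfrac12$ for some $i$. Using the concentric Banaszczyk bound inside your Gram--Schmidt iteration then gives
\[
\rho_\sigma(\L\setminus\Omega)\leq 2\sum_i t_i\sqrt{2\pi e}\,e^{-\pi t_i^2}\prod_k\rho_{\sigma_k}(\Z),\qquad t_i=(2^M-\tfrac12)\|\widetilde{\vec b}_i\|/\sigma,
\]
assuming each $t_i>1/\sqrt{2\pi}$. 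This suffices for the $d_{TV}$ step of Theorem~\ref{theorem: performance of quantum gaussian sampling algorithm} once $2^M\min_i\|\widetilde{\vec b}_i\|/\sigma$ is large enough.
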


The proof is given in Appendix~\ref{sec: Proof of some results}.

Now, we define the (finite) \emph{Klein distribution} which arises from a variant of the Klein sampling algorithm~\cite{DBLP:conf/soda/Klein00}. Indeed, the probability space of the original Klein distribution of~\cite{DBLP:conf/soda/Klein00} is $\L(B)$, which is an infinite space. Therefore we define a variant that samples from a finite set $\Omega$ of $\L(B)$.

\begin{algorithm}[htbp]
    \caption{Klein's sampling algorithm variant}\label{alg:Klein's algorithm}
\begin{algorithmic}[1]
\Statex \textbf{Input:} $B$, $\sigma$, $\vec c$, and an integer $M>1$
\smallskip
\Statex  \textbf{Output: } $B\vec x \in \Omega$
\medskip
\State Write $B$ as $B = QR$, with $Q$ orthogonal and $R$ upper triangular, with $|R_{i,i}| = ||\widetilde{\vec{b}_i}||$
\State $\vec c' \leftarrow Q^{-1}\vec c$
\For{$i=m \to 1$}
	\State $\sigma_i\leftarrow \sigma/|R_{i,i}|$
    \State $\widetilde x_i \leftarrow \frac{c'_i-\sum_{j=i+1}^mR_{i,j}x_j}{R_{i,i}}$
    \State Sample $x_i$ from $D_{\iintv{-2^M+\lfloor \widetilde x_i \rceil ,2^M+\lfloor \widetilde x_i \rceil},\sigma_i,\widetilde x_i}$
\EndFor
\State \textbf{Return} $B\vec x$
\end{algorithmic}
\end{algorithm}

Here $\vec c$ is, again, the center of the distribution. Let us denote by $q$ the Klein distribution yielded by~\autoref{alg:Klein's algorithm}. We have, since $Q$ is orthogonal:
\begingroup \allowdisplaybreaks
\begin{align}
    \label{equation:q expression}
    q&(B\vec x) = \prod_{i=1}^m \mathbb P(x_{m+1-i}|(x_{m-i},\dotsc,x_1)) = \prod_{i=m}^1 \frac{\rho_{\sigma_{i},\widetilde{x}_{i}}(x_{i})}{\rho_{\sigma_{i},\widetilde{x}_{i}}(\iintv{-2^M+\lfloor \widetilde x_i \rceil,2^M+\lfloor \widetilde x_i \rceil})}\nonumber \\
    &= \prod_{i=m}^1 \frac{\exp{-\frac{\pi}{\sigma^2}((Q^{-1}\vec c)_{i}-\sum_{j=i+1}^mR_{i,j}x_j-x_{i}R_{i,i})^2}}{\rho_{\sigma_{i},\widetilde{x}_{i}}(\iintv{-2^M+\lfloor \widetilde x_i \rceil,2^M+\lfloor \widetilde x_i \rceil})} = \frac{\exp{-\frac{\pi}{\sigma^2} \sum_{i=m}^1\left((Q^{-1}\vec c)_{i}-\sum_{j=i}^mR_{i,j}x_j\right)^2}}{ \displaystyle\prod_{i=m}^1 \rho_{\sigma_{i},\widetilde{x}_{i}}(\iintv{-2^M \! + \! \lfloor \widetilde x_i \rceil,2^M \! + \! \lfloor \widetilde x_i \rceil})}\nonumber \\
    &= \frac{\rho_{\sigma,\vec c}(B\vec x)}{ \displaystyle\prod_{i=m}^1 \rho_{\sigma_{i},\widetilde{x}_{i}}(\iintv{-2^M \! + \! \lfloor \widetilde x_i \rceil,2^M \! + \! \lfloor \widetilde x_i \rceil})}
\end{align}
\endgroup

The probability space $\Omega$ is a bit hard to fully describe. One can however note that it contains the space $B\times \prod_{i=1}^m\iintv{-2^{M-1},2^{M-1}}$, which, when $M$ is chosen large enough, ensures that $D_{\Omega, \sigma}$ is close to $D_{\L, \sigma}$. Later (in the proof of~\autoref{theorem: Klein distrib}) we give an upper bound on the size of the vectors sampled by the algorithm.

\subsection{LWE and SIS}\label{sec:LWE-SIS}

The LWE problem was first introduced by Regev in~\cite{DBLP:journals/jacm/Regev09}. Recent NIST standardized lattice-based public key cryptosystems, Kyber~\cite{DBLP:conf/eurosp/BosDKLLSSSS18} and Dilithium~\cite{DBLP:journals/tches/DucasKLLSSS18}, both rely on the classical and quantum computational hardness of (module version of) LWE.

Let $n,m,q\in \N$, and let $\chi_e$ be a probability distribution over $\Z_q$, which is called the \emph{noise distribution}. For any vector $\vec s\in \Z_q^n$, we denote by $LWE(m,\vec s,\chi_e)$ the probability distribution obtained by sampling a matrix $A\in \Z_q^{m\times n}$ uniformly at random, sampling a vector $\vec e\in \Z_q^m$ according to $\chi_e^m$, and outputting $(A,\vec b := A\vec s + \vec e)$. We denote by $A_i$ the $i$-th row of $A$, $e_i$ the $i$-th coefficient of $e$, and so $b_i = \langle A_i,\vec s\rangle + \vec e_i\in \Z_q$. The search-LWE problem is, given oracle access to a sampler $LWE(m,\vec s,\chi_e)$, to find $\vec s$.

There exists two main strategies to solve LWE: the \emph{primal attack} and the \emph{dual attack}. The latter has been studied both in the classical and quantum setting, and will be detailed in~\autoref{sec: Improving the quantum dual attack}.


The short integer solution (SIS) problem is an important problem in lattice-based cryptography. It asks to find a short non-zero vector in a randomly generated 
$q$-ary lattice of a certain form under the $\ell_p$ norm.

\begin{definition}\label{def:sis}
	Let $m\geqslant n\geqslant1$, $q$ be an integer, $\ell>0$ and $p\in(0,\infty]$.
	The search problem $\SIS_{n,m,q,\ell}^p$ (Short Integer Solution) is defined as follows.
	The input is a random matrix $\mat{A}\in\Z_q^{n\times m}$.
	The goal is to output a vector $\vec{x}\in\Z^m$ such that $\mat{A}\vec{x}=0\bmod q$ and $0<\|\vec{x}\|_p \leqslant \ell$.
\end{definition}

The $\ell_2$-norm case was first introduced by Ajtai~\cite{Ajtai96} and admits worst-case to average-case reduction for certain parameter regimes. It has found numerous applications in cryptography, including the construction of collision-resistant hash functions, digital signatures, identification schemes, and more recently, zero-knowledge proofs. The hardness of the recently NIST standardized signature scheme Dilithium~\cite{Ducas2017CRYSTALSDilithiumAS} relies on the hardness of the (module version of the) SIS problem under the $\ell_{\infty}$-norm.



\subsection{Quantum Search}\label{sec:qaa}

A fundamental building block of many algorithms (including several of this paper) is quantum amplitude amplification (QAA).

\begin{theorem}[\cite{brassard2002quantum}, adapted]
Let $\mathcal{A}$ be a quantum algorithm with no measurements constructing a quantum state:
\[ \mathcal{A} \ket{0} = \alpha \ket{\psi} \ket{1} + \sqrt{1-\alpha^2} \ket{\phi} \ket{0} \]
where $\alpha$ is a real number and $\ket{\psi}, \ket{\phi}$ are normalized quantum states. Let $QAA(\mathcal{A}) = - \mathcal{A} O_0 \mathcal{A}^{\dagger} O$ be the \emph{QAA iterate}, where $O$ flips the phase of $\ket{\psi}$ (basis states marked by 1) only, and $O_0$ flips the phase of the all-0 basis state only. Then, for all $k$:
\[ QAA(\mathcal{A})^k \mathcal{A} \ket{0} = \sin( (2k+1) \arcsin \alpha) \ket{\psi} \ket{1} + \cos((2k+1) \arcsin \alpha) \ket{\phi} \ket{0} \enspace. \]
\end{theorem}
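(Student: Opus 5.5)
The plan is to analyse $\mathcal{A}\ket{0}$ entirely inside the two-dimensional real subspace spanned by $\ket{\psi}\ket{1}$ and $\ket{\phi}\ket{0}$. I will show that this plane is invariant under $QAA(\mathcal{A})$ and that the iterate acts on it as a rotation by angle $2\theta$, where $\theta = \arcsin\alpha$. Writing $\ket{g}=\ket{\psi}\ket{1}$, $\ket{b}=\ket{\phi}\ket{0}$ and $\ket{s}=\mathcal{A}\ket{0}$, we have $\ket{s}=\sin\theta\ket{g}+\cos\theta\ket{b}$. This holds because $\alpha\in[-1,1]$ is real and $\sqrt{1-\alpha^2}=\cos\theta$ when $\theta\in[-\pi/2,\pi/2]$.

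The first step is to describe the two reflections.
\begin{itemize}
\item The oracle $O$ acts as $I-2\Pi_1$, where $\Pi_1$ is the projector onto states whose last qubit is $1$. Restricted to the plane, it fixes $\ket{b}$ and negates $\ket{g}$, so it is the reflection across $\ket{b}$.
\item Since $O_0 = I - 2\ket{0}\bra{0}$, we get $\mathcal{A}O_0\mathcal{A}^\dagger = I - 2\ket{s}\bra{s}$, using only that $\mathcal{A}$ is unitary (it has no measurements).
\item Therefore $-\mathcal{A}O_0\mathcal{A}^\dagger = 2\ket{s}\bra{s}-I$, which is the reflection across $\ket{s}$.
\end{itemize}
Both reflections map the plane $\spann\{\ket{g},\ket{b}\}$ to itself: the first maps $\ket{g}$ and $\ket{b}$ to multiples of themselves, and the second only adds multiples of $\ket{s}$, which lies in the plane. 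So $QAA(\mathcal{A})$ preserves the plane.

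The second step identifies the product of the two reflections. The product of reflections across lines at angles $0$ (for $\ket{b}$) and $\theta$ (for $\ket{s}$), measured from $\ket{b}$ toward $\ket{g}$, is a rotation by $2\theta$ toward $\ket{g}$. I will check this directly on $\ket{s}$:
\begin{itemize}
\item $O\ket{s}=-\sin\theta\ket{g}+\cos\theta\ket{b}$.
\item Applying $2\ket{s}\bra{s}-I$ then gives $2\langle s|O s\rangle\ket{s}-O\ket{s}$, where $\langle s|Os\rangle=\cos 2\theta$.
\item Expanding with double-angle identities yields $\sin 3\theta\ket{g}+\cos3\theta\ket{b}$.
\end{itemize}
To confirm this is a genuine rotation on the whole plane, and not merely correct on $\ket{s}$, I will also compute its action on $\ket{b}$ (or on the vector orthogonal to $\ket{s}$), or argue that a composition of two reflections has determinant $1$ and is therefore a rotation.

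The final step is a simple induction on $k$: $QAA(\mathcal{A})^k\ket{s}=\sin((2k+1)\theta)\ket{g}+\cos((2k+1)\theta)\ket{b}$, which is exactly the claim. The only delicate points are the sign conventions, namely the leading minus sign in the iterate and the fact that $O$ negates $\ket{g}$ rather than $\ket{b}$. The verification on $\ket{s}$ above pins these down. A degenerate case arises when $\alpha=\pm1$ or $\alpha=0$, since the plane collapses to a line. There the formula can be checked directly: the state is an eigenvector of the iterate with the consistent sign, for example $\theta=0$ gives $\ket{b}\mapsto\ket{b}$.
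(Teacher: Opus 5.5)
Your proposal is correct. The paper gives no proof of its own and only cites Brassard--H{\o}yer--Mosca--Tapp, whose argument is the one you give: restrict to the invariant real plane spanned by $\ket{\psi}\ket{1}$ and $\ket{\phi}\ket{0}$, where $QAA(\mathcal{A})$ is the product of the reflections across $\ket{\phi}\ket{0}$ and across $\mathcal{A}\ket{0}$, hence a rotation by $2\arcsin\alpha$, and then induct on $k$. One small correction: since $\ket{\psi}$ and $\ket{\phi}$ are normalized by hypothesis, $\ket{\psi}\ket{1}$ and $\ket{\phi}\ket{0}$ are orthonormal for every $\alpha$, so the plane never collapses to a line and your general argument already covers $\alpha\in\{0,\pm 1\}$ without a separate check.
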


This theorem tells us that if the algorithm $\mathcal{A}$ ``succeeds'' (output marked by 1) with probability $\alpha^2 = p$, then by making $\bigO{1/\alpha} = \bigO{1/\sqrt{p}}$ iterations, we will measure a success with constant probability. This can be increased arbitrarily close to 1 if $p$ is exactly known when running the algorithm~\cite{brassard2002quantum}.

In our paper, we will also need to succeed when only an upper bound on the probability $p$ is given, but it is not exactly known. In that case one can rely on fixed-point quantum search~\cite{yoder2014fixed}. It modifies the iteration operators to ensure a probability of success $1-\delta^2$ with $\bigO{ \log(1/\delta) \frac{1}{\sqrt{p}}}$ iterations, where $p$ is a lower bound on the probability of success of $\mathcal{A}$. Therefore, we can also have a probability of success exponentially close to 1 at the cost of a polynomial factor in the complexity.


\subsection{Quantum Algorithm for Finding the Maximum}\label{sec:Finding the maximum in a register}

Within the quantum attack on LWE of~\cite{DBLP:conf/eurocrypt/PoulyS24}, one has to be able, at some point, to find and measure the \emph{maximum} in a superposition, in the following sense. We have a function: $s~: \{0,1\}^\ell \to \mathbb{R}$ and a unitary operator computing $s$ in superposition: $\ket{i} \ket{0} \mapsto \ket{i} \ket{s(i)}$. The goal is to find $i = \mathrm{argmax}_{ \{0,1\}^\ell } s$.
An algorithm that allows to do so is presented in~\cite{DBLP:journals/corr/quant-ph-9607014}. However, in the setting in which we will have to use this algorithm, we will have to deal with a function $s$ that is probabilistic. More precisely, we will use a function $s$ that, if called several times upon the same input $i$, will most of the time return the ``right'' value $s(i)$, and will sometime return a ``wrong'' value that differs from $s(i)$. 
To do so, we rely on a combination of maximum-finding with bounded-error quantum search~\cite{DBLP:conf/icalp/HoyerMW03}, proving the following result in Appendix~\ref{appendix:Finding the maximum in a register}. This is essentially a generalization of Theorem 8 in~\cite{DBLP:conf/eurocrypt/PoulyS24}.

\begin{restatable}{theorem}{thmmaxfinding}[{\cite{DBLP:journals/corr/quant-ph-9607014,DBLP:conf/icalp/HoyerMW03}}]\label{theorem: cost of finding the maximum with faulty function}
Let $s$ be an efficient randomized algorithm that, on input $i \in \{0,1\}^\ell$, returns $s(i)$, where $s(i)$ equals a value $s_i$ (``good'') with probability at least 9/10, and $s$ returns a ``faulty'' output with probability at most 1/10.~\autoref{alg:finding the maximum in a register}, parametrised by $k > 0$, finds the $i$ such that $s_i$ is maximal, with probability at least $1-1/2^k$. This algorithms calls $s$ a number of times in $O(k\sqrt{2^\ell})$, and uses $O(k\ell \sqrt{2^\ell})$ additional gates, and $\poly(\ell)$ qubits.
\end{restatable}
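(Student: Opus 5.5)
The plan is to adapt the Dürr–Høyer maximum-finding algorithm~\cite{DBLP:journals/corr/quant-ph-9607014}, replacing each exact Grover search by the bounded-error quantum search of~\cite{DBLP:conf/icalp/HoyerMW03}, and then to boost the success probability by repetition.

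\textbf{Single run.} Fix a threshold index $y$, initially chosen uniformly at random. The search predicate is $g_y(i) = [s(i) > s(y)]$. Since $s$ is randomized, $g_y$ is a bounded-error oracle. First I would reduce its error: evaluate $s(i)$ a constant number of times and take the majority vote, and do the same for $s(y)$. This makes the probability of a wrong predicate value at most a small constant (say $1/10$ or less). Then I use the robust search of~\cite{DBLP:conf/icalp/HoyerMW03}. When there are $t$ marked elements, it finds one with expected $O(\sqrt{2^\ell/t})$ calls to the noisy oracle, and it certifies that no marked element exists with $O(\sqrt{2^\ell})$ calls. Both use $O(\ell)$ gates per iteration and $\poly(\ell)$ qubits, because the error-reduction bits are uncomputed (or garbage-collected coherently, as in HMW).

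The outer loop replaces $y$ by the found index whenever the search returns some $i$ with $s_i > s_y$. We then check the comparison classically, re-running $s$ a few times, so that with good probability the threshold only increases. The Dürr–Høyer analysis goes through as follows. If the value of $y$ is the $r$-th largest, the probability that the threshold ever visits the element of rank $r$ is $1/r$. The expected cost of searching from rank $r$ is $O(\sqrt{2^\ell/(r-1)})$. Summing over $r$ gives an expected total of $O(\sqrt{2^\ell})$ oracle calls. Running until the budget $c\sqrt{2^\ell}$ is exhausted, for a suitable constant $c$, and applying Markov's inequality, the final $y$ is the argmax with probability at least $1/2$. The final $y$ is checked once more by one certified "no marked element" search.

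\textbf{Main obstacle.} The hardest step is controlling faulty evaluations of $s$ on the \emph{threshold} $y$ itself. Noise inside the searched superposition is absorbed by HMW. A wrong classical value $s(y)$, however, could make the threshold jump to a non-maximal element, or could falsely certify maximality. I would handle this by keeping, as the classical threshold value, the majority of $O(\log \ell)$ or $O(1)$ independent evaluations, and by union-bounding the failures over the $O(\ell)$ expected threshold updates. This costs at most a constant or logarithmic overhead, which I would absorb into the constants, using the $9/10$ margin.

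\textbf{Amplification.} Repeat the whole procedure $O(k)$ times, obtaining candidates $y_1, \dots, y_{O(k)}$. Then select the best candidate via repeated majority evaluations of $s$ on each. Alternatively, output the majority candidate, since each run returns the true argmax with probability at least $1/2 + \Omega(1)$ after constant-factor boosting. A Chernoff bound gives failure probability at most $2^{-k}$. The total is $O(k\sqrt{2^\ell})$ calls to $s$, $O(k\ell\sqrt{2^\ell})$ gates, and $\poly(\ell)$ qubits, as claimed.
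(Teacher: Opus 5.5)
Your proposal follows the paper's route: D\"urr--H{\o}yer maximum finding in which each (exponential) Grover search is replaced by the bounded-error search of H{\o}yer, Mosca and de Wolf, a budget-truncated run that returns the maximum with probability at least $1/2$, and $O(k)$ independent repetitions to reach success probability $1-2^{-k}$. The only difference is that you are more careful about noise. You protect the classical threshold value with a majority of $O(\log \ell)$ evaluations and a union bound over the $O(\ell)$ expected updates, at an additive $O(\ell\log\ell)$ cost, and you also protect the final choice among candidates. The paper instead asserts that $s(j)$, evaluated once before the search, is reliable, and it compares the $k$ outputs using single evaluations of $s$. Your version therefore closes a small gap in the paper's argument rather than departing from it.
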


\subsection{Quantum Algorithm for Computing the Mean}

Let $\mathcal{C}$ be a quantum algorithm without measurements constructing a superposition over $\ell$ qubits: $\mathcal{C} \ket{0}^{\otimes \ell} = \ket{\psi}$. Let $\phi:\{0,1\}^\ell \to \R$ be a function. 
Let $Y$ be the random variable obtained by measuring the output of $\mathcal C \ket 0^{\otimes \ell}$, and applying $\phi$ on it. We use a result of~\cite{Montanaro} to estimate the mean value of $Y$, assuming that $\mathrm{Var}(Y) \leq v^2$ for some value $v$. 

\begin{theorem}[{\cite[Theorem 5 and Algorithm 3, adapted]{Montanaro}}]\label{thm:montanaro algo}
    For any $\epsilon > 0$, there exists a quantum algorithm that calls $\mathcal C$ (and its inverse) a total of $O((v/\epsilon)\log^{3/2}(v/\epsilon)\log \log(v/\epsilon))$ times, a quantum implementation of $\phi$ a number $O(\log(v/\epsilon)\log \log(v/\epsilon))$ of times, and that estimates the mean value of $Y$ up to additive error $\epsilon$, with success probability at least $2/3$.
\end{theorem}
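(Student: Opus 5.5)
The plan is to obtain this statement as a direct instantiation of Montanaro's bounded-variance mean estimator (Theorem 5 / Algorithm 3 of \cite{Montanaro}). The work is to show that the pair $(\mathcal C,\phi)$ fits Montanaro's input model, and then to redo the resource accounting so that calls to $\mathcal C$ and calls to $\phi$ are counted separately.

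\textbf{Setting up the algorithm.} Montanaro's model takes a unitary $\mathcal A$ whose measured output is a real random variable $v(\mathcal A)$. We set
\[
\mathcal A := U_\phi\,(\mathcal C\otimes I), \qquad U_\phi\ket{x}\ket{0}=\ket{x}\ket{\phi(x)},
\]
and measure the second register. This register is distributed exactly as $Y$, so $\mathrm{Var}(v(\mathcal A))\le v^2$. Montanaro's algorithm then outputs an estimate $\tilde\mu$ with $|\tilde\mu-\mathbb E[Y]|\le\epsilon$ with probability at least $2/3$.

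\textbf{Structure of Algorithm 3.} We will follow its stages in order.
\begin{enumerate}
\item Run $\mathcal A$ a constant number of times classically and take the median $\tilde m$. By Chebyshev, with constant probability $|\tilde m-\mathbb E[Y]|=O(v)$.
\item Shift and rescale to $Y'=(Y-\tilde m)/v$.
\item Split $Y'$ into its positive and negative parts, and split each part into dyadic ranges $[2^{k-1},2^k)$ for $k=0,\dots,O(\log(v/\epsilon))$.
\item On each range, run the bounded-output estimator (Montanaro's Algorithm 2). This estimator uses amplitude estimation on a controlled rotation that encodes the truncated value into an ancilla amplitude. Its target accuracy is about $\epsilon/(v\,2^{k})$.
\end{enumerate}
The finite-variance bound controls the tail mass $\Pr[|Y'|\ge 2^k]\le 2^{-2k}$. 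Summing the amplitude-estimation costs over the dyadic scales then gives $O((v/\epsilon)\log^{3/2}(v/\epsilon)\log\log(v/\epsilon))$ uses of $\mathcal C$ and $\mathcal C^\dagger$. The $\log\log$ factor comes from boosting each sub-estimate by a median of $O(\log\log(v/\epsilon))$ repetitions, so that a union bound over the $O(\log(v/\epsilon))$ ranges still leaves total success probability at least $2/3$.

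\textbf{Main obstacle: counting calls to $\phi$.} This is the step I expect to be the main obstacle, because it is where the statement departs from a black-box citation. The claim is that $\phi$ is invoked only $O(\log(v/\epsilon)\log\log(v/\epsilon))$ times, that is, once per (range, repetition) pair, rather than once per amplitude-estimation iterate. My first attempt would be as follows.
\begin{itemize}
\item For each range $k$ and each repetition, build a single state-preparation circuit in which $U_\phi$ writes $\phi(x)$ into a register that is then left in place.
\item Check that the Grover/amplitude-estimation iterate can reflect about this prepared state while operating only on the value register and the rotation ancilla. This amounts to regarding $\mathcal C$ composed with the stored $\phi$-value as the state-preparation unitary.
\item Count the single $\phi$-evaluation per subroutine, together with its uncomputation.
\end{itemize}
If this argument fails and $\phi$ must appear inside every iterate, the fallback is to state the $\phi$-count as equal to the $\mathcal C$-count. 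That would not affect the later applications, since there $\phi$ is polynomial-time.
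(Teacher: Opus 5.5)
Your treatment of the $\mathcal{C}$-count and the success probability matches the paper, which gives no argument beyond citing Montanaro's Theorem~5 and Algorithm~3. You set $\mathcal{A}=U_\phi(\mathcal{C}\otimes I)$, use $\mathrm{Var}(Y)\leqslant v^2$, and read off the bound, with the $\log\log$ factor coming from median-boosting the $O(\log(v/\epsilon))$ per-range estimates. (Small slip: the bounded-output estimator run on each dyadic range is Montanaro's Algorithm~1. Algorithm~2 is the dyadic-splitting routine, and the $\log^{3/2}$ comes from $O(\log(v/\epsilon))$ ranges each using $O(\sqrt{\log(v/\epsilon)}\,v/\epsilon)$ iterations.)

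The genuine gap is the $\phi$-count, and your first attempt cannot work. Amplitude estimation iterates $-R_\Psi S_\chi$ with $R_\Psi=2\ket{\Psi}\bra{\Psi}-I=\mathcal{A}'S_0(\mathcal{A}')^{\dagger}$, where $\mathcal{A}'=W\,U_\phi(\mathcal{C}\otimes I)$ and $W$ is the $\phi$-controlled rotation. $R_\Psi$ mixes different $x$, so it cannot act only on the value register and the rotation ancilla. Such an operator is block-diagonal in $x$, and amplitude estimation with it would estimate $\phi(x)$ branch by branch, which amounts to one sample of $Y$, not $\mathbb{E}[Y]$. If instead you apply $\mathcal{C}^{\dagger}$ with $\ket{\phi(x)}$ still present, you get $\sum_x\sqrt{p_x}\,(\mathcal{C}^{\dagger}\ket{x})\ket{\phi(x)}$. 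This is not of the form $\ket{0}\otimes(\cdot)$, because the value register is entangled with $x$, so the reflection about $\ket{0}$ is wrong. Hence every iterate must uncompute and recompute $\phi$, and $\phi$ is called as often as $\mathcal{C}$.

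No other scheme can do better for a black-box $\phi$. Take $\mathcal{C}=H^{\otimes\ell}$ with $2^\ell\geqslant 1/\epsilon^2$ and $\phi$ Boolean, so $v=1/2$. The task is then approximate counting to additive error $\epsilon 2^\ell$, which needs $\Omega(1/\epsilon)$ queries to $\phi$ (Nayak--Wu), and calls to the $\phi$-independent $\mathcal{C}$ do not help. So the $O(\log(v/\epsilon)\log\log(v/\epsilon))$ figure in the statement cannot be proved as written; it matches the number of amplitude-estimation runs, not the number of coherent evaluations of $\phi$. Your fallback, with the $\phi$-count equal to the $\mathcal{C}$-count up to constants, is the correct version. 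As you say, it leaves the paper's overall complexities unchanged because $\phi$ is polynomial-time. The $\phi$-counts quoted in \autoref{corollary:montanaro bis} and \autoref{theorem: performance of main algorithm} do, however, pick up the $1/\epsilon$ factor.
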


\begin{remark}\label{remark:montanaro qubit count}
The algorithm of ~\autoref{thm:montanaro algo} heavily relies on the quantum amplitude estimation algorithm of~\cite{brassard2002quantum}. The number of qubits is $O(\log(1/\epsilon))$ summed with the number of qubits used by $\phi$ and $\mathcal{C}$.
\end{remark}


Given a full rank lattice $\L\in \R^n$, an element $\vec x\in \R^n$, and a real number $\sigma>0$, we apply this theorem to estimate $f_{\L,\sigma}(\vec x)$. Indeed, we saw in~\autoref{sec:gaussian lattice distrib} that, by ~\autoref{lemma:fourier serie equal to f} and~\ref{lemma:Fourier serie of f L,sigma}, for all $\vec x\in \R^n$,
\begin{equation}\label{eq:mean}
    f_{\L,\sigma}(\vec x) = \sum_{w\in \hat \L} D_{\hat \L,1/\sigma}(\vec w)\exp{2\pi i \langle \vec x,\vec w\rangle} = \sum_{w\in \hat \L} D_{\hat \L,1/\sigma}(\vec w)\cos(2\pi \langle \vec x,\vec w\rangle)\enspace.
\end{equation}
(the second equality is obtained by regrouping $\exp{2\pi i \langle \vec x,\vec w\rangle}$ and $\exp{2\pi i \langle \vec x,\vec -w\rangle}$). We can then bound the variance $\mathcal V$ of $\cos(2\pi \langle \vec x,\vec w\rangle)$ when $\vec w$ is sampled from $D_{\hat \L,1/\sigma}$:
\begin{align*}
    \mathcal V &= - \bigg(\sum_{w\in \hat \L} D_{\hat \L,1/\sigma}(\vec w)\cos(2\pi \langle \vec x,\vec w\rangle)\bigg)^2+\sum_{w\in \hat \L} D_{\hat \L,1/\sigma}(\vec w)\cos(2\pi \langle \vec x,\vec w\rangle)^2\\
    &\leqslant \sum_{w\in \hat \L} D_{\hat \L,1/\sigma}(\vec w)\cos(2\pi \langle \vec x,\vec w\rangle)^2 \leqslant \sum_{w\in \hat \L} D_{\hat \L,1/\sigma}(\vec w) = 1 \enspace.
\end{align*}
Therefore, to estimate $f_{\L,\sigma}(\vec x)$, we can use~\autoref{thm:montanaro algo} with $v=1$. We also replace $\hat \L$ by a finite subset, ensuring a good approximation by a total variation distance bound. Finally, we use the powering lemma~\cite{10.5555/11534.11537} to increase the success probability of the algorithm, so that it can be guaranteed to work for all inputs $\vec{x}$ simultaneously.


\begin{corollary}\label{corollary:montanaro bis}
    Consider a full rank lattice $\L\in \R^n$, a finite set $X \subseteq \R^n$, and real numbers $\epsilon,\sigma>0$. Let $\Omega$ be a finite subset of $\hat \L$ such that $d_{TV}(D_{\hat \L,1/\sigma},D_{\Omega,1/\sigma})\leqslant \gamma$ for some $\gamma>0$.
    Let $\mathcal{C}$ be a quantum algorithm without measurements that does: $\mathcal C : \ket 0^{\otimes \ell} \mapsto \ket{D_{\Omega,1/\sigma}}$ and $\phi : \vec w \mapsto \cos(2\pi \langle \vec x,\vec w\rangle)$.
    
    There exists a quantum algorithm that calls $\mathcal C$ a number
     $\bigOt{(1/\epsilon) \log(1/\delta)}$
     times, calls a quantum implementation of $\phi$ a number 
     $\bigOt{\log(1/\epsilon) \log (1/\delta)}$ 
     of times, and that for all $\vec{x} \in X$, estimates the value of $f_{\L,\sigma}(\vec x)$ up to additive error $\epsilon+2\gamma$, with success probability at least $1 - |X|\delta$. If $\mathcal{C}$ and $\phi$ use $\poly(\ell, \log(1/\epsilon))$ qubits, then this algorithm uses $\poly(\ell, \log(1/\epsilon))$ qubits.
\end{corollary}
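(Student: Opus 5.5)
We fix $\vec x \in X$ and apply \autoref{thm:montanaro algo} to the state preparation $\mathcal C$ and the function $\phi(\vec w) = \cos(2\pi\langle \vec x,\vec w\rangle)$. Let $Y$ be the random variable obtained by measuring $\mathcal C\ket 0^{\otimes \ell}$ and applying $\phi$. Then $\vec w$ is distributed according to $D_{\Omega,1/\sigma}$, so $\mathbb E[Y] = \sum_{\vec w\in\Omega} D_{\Omega,1/\sigma}(\vec w)\cos(2\pi\langle\vec x,\vec w\rangle)$. Since $|\phi|\leqslant 1$, we have $\mathrm{Var}(Y)\leqslant \mathbb E[Y^2]\leqslant 1$, exactly as in the computation preceding the corollary, so we may take $v=1$. \autoref{thm:montanaro algo} then returns, with probability at least $2/3$, an estimate $\tilde\mu$ with $|\tilde\mu - \mathbb E[Y]|\leqslant \epsilon$. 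It uses $O((1/\epsilon)\log^{3/2}(1/\epsilon)\log\log(1/\epsilon)) = \bigOt{1/\epsilon}$ calls to $\mathcal C$ and $O(\log(1/\epsilon)\log\log(1/\epsilon))$ calls to $\phi$.

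Next we compare $\mathbb E[Y]$ with $f_{\L,\sigma}(\vec x)$. By~\eqref{eq:mean}, $f_{\L,\sigma}(\vec x)$ is the expectation of the same bounded function $\phi$ under $D_{\hat\L,1/\sigma}$, where $D_{\Omega,1/\sigma}$ is viewed as a distribution on $\hat\L$ that vanishes outside $\Omega$. For any $g$ with $|g|\leqslant 1$ we have $|\mathbb E_{\mu_1}g-\mathbb E_{\mu_2}g|\leqslant \sum|\mu_1-\mu_2| = 2d_{TV}(\mu_1,\mu_2)$. Hence $|\mathbb E[Y]-f_{\L,\sigma}(\vec x)|\leqslant 2\gamma$, and the triangle inequality gives total error at most $\epsilon+2\gamma$.

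To boost the success probability we use the powering lemma~\cite{10.5555/11534.11537}. We run the estimator $O(\log(1/\delta))$ times independently and output the median. The median lies within $\epsilon+2\gamma$ of $f_{\L,\sigma}(\vec x)$ unless at least half of the runs fail. Each run fails with probability at most $1/3$, so by a Chernoff bound this happens with probability at most $\delta$. This multiplies both call counts by $\log(1/\delta)$, giving the stated $\bigOt{(1/\epsilon)\log(1/\delta)}$ and $\bigOt{\log(1/\epsilon)\log(1/\delta)}$ bounds. A union bound over the $|X|$ inputs then gives simultaneous success with probability at least $1-|X|\delta$.

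For the qubit count, \autoref{remark:montanaro qubit count} gives $O(\log(1/\epsilon))$ qubits plus those used by $\mathcal C$ and $\phi$. The median step can be carried out by measuring each run and keeping the $O(\log(1/\delta))$ classical outcomes. Even if it is done coherently, it needs only $O(\log(1/\delta)\log(1/\epsilon))$ qubits to hold the estimates. I expect the only delicate point to be this accounting. In particular, the unspecified $\delta$ enters the space bound only through classical storage, or through a $\log(1/\delta)$ factor that we treat as polynomial. We must also check that $\phi$ is implemented to precision $O(\epsilon)$ with $\poly(\ell,\log(1/\epsilon))$ qubits, which holds since $\vec w$ is stored on $\ell$ qubits.
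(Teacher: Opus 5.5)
Your proposal is correct and follows the paper's proof essentially step for step. For a fixed $\vec x$ you apply Montanaro's estimator with $v=1$, bound the bias by $2\gamma$ via total variation and \eqref{eq:mean}, boost with the powering-lemma median at an $O(\log(1/\delta))$ multiplicative overhead, and then extend to all $\vec x\in X$. Where the paper writes $(1-\delta)^{|X|}\geqslant 1-|X|\delta$, you use a union bound, which is slightly cleaner since it needs no independence across inputs.
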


\begin{proof}
Let us first assume that $\vec{x} \in X$ is fixed. By~\autoref{thm:montanaro algo}, the mean value that the algorithm outputs is an estimation $Y'$ of the value 
    \begin{equation*}
        \mu(\vec x) := \sum_{w\in \Omega} D_{\Omega,1/\sigma}(\vec w)\cos(2\pi \langle \vec x,\vec w\rangle)\enspace,
    \end{equation*}
    such that with probability at least $2/3$, we have $ \left|Y' -\mu(\vec x)\right|\leqslant \epsilon  \enspace.$

Next, we use~\cite[Powering Lemma 6.1]{10.5555/11534.11537} to increase the accuracy of the estimate. Given an error parameter $\delta > 0$, we run the estimation algorithm $O(\log(1/\delta))$ times. Each run is a sample of a random variable $Y'$ such that $|Y' - \mu(\vec x)| \leqslant \epsilon$ with probability 2/3. By taking the median $\widetilde \mu(\vec x)$ of all samples, we have $ \left|\widetilde \mu(\vec x) -\mu(\vec x)\right|\leqslant \epsilon$ with probability at least $1 - \delta$.
The probability that this holds for all $\vec x\in X$ simultaneously is at least $(1-\delta)^{|X|}\geqslant 1-|X|\delta$.

Then, by~\autoref{eq:mean}, we have 
    \begin{align*}
        f_{\L,\sigma}(\vec x) &= \sum_{w\in \hat \L} D_{\hat \L,1/\sigma}(\vec w)\cos(2\pi \langle \vec x,\vec w\rangle)\\
      \implies  f_{\L,\sigma}(\vec x) - \mu(\vec x) &= \sum_{w\in \hat \L} (D_{\hat \L,1/\sigma}(\vec w)-D_{\Omega,1/\sigma}(\vec w))\cos(2\pi \langle \vec x,\vec w\rangle)\\
        \implies \left| f_{\L,\sigma}(\vec x) - \mu(\vec x) \right| & \leq \sum_{w\in \hat \L} \left|D_{\hat \L,1/\sigma}(\vec w)-D_{\Omega,1/\sigma}(\vec w)\right| \leqslant 2\gamma \enspace.
    \end{align*}
    By the triangle inequality, we have $\left|f_{\L,\sigma}(\vec x)-\widetilde \mu(\vec x)\right|\leqslant \epsilon+2\gamma$ with probability at least $1-|X|\delta$, for all $\vec{x}$. From~\autoref{thm:montanaro algo} and~\autoref{remark:montanaro qubit count}, we get the complexities for a single call to the estimation algorithm, which we multiply by $\log (1/\delta)$ (the number of calls). This concludes the proof.
    \qed
\end{proof}

We note that in the algorithm of~\autoref{corollary:montanaro bis}, which contains multiple independent runs of Montanaro's algorithm, all measurements can be deferred to the end. Doing this only increases the qubit count by a multiplicative factor $\log (1/\delta)$ (since all runs must be performed coherently).

\section{Quantum Gaussian Sampling}\label{sec: quantum ways of samplig from pi}

In this section we explain how to sample elements from the Gaussian lattice distribution quantumly, by combining a quantum Klein sampler with quantum rejection sampling. Pouly and Shen~\cite[Section 5.2]{DBLP:conf/eurocrypt/PoulyS24} also noticed that the Klein sampler could be turned into a quantum algorithm, which they could use after a BKZ preprocessing of the lattice basis -- though they did not give any details on the quantum implementation. de Boer and Felderhoff~\cite{DBLP:journals/iacr/BoerF25} described a detailed quantum version of the Klein sampler (Algorithm 6.1 in~\cite{DBLP:journals/iacr/BoerF25}), which we reproduce in~\autoref{sec:gaussian-state} for completeness, with slightly different, somewhat simplified complexity estimates for our case. Afterwards, the remainder of this section is dedicated to the rejection sampling step.



Throughout this section, we work with approximate distribution states:
$$ \ket{D \pm 2^{-\nu}} := \sum_{x \in X} \alpha_x \Ket{x}, \ \alpha_x\in \left[\sqrt{D(x)} - 2^{-\nu},\sqrt{D(x)} + 2^{-\nu}\right]$$
where $\ket{D} = \sum_{x \in X} \sqrt{D(x)} \Ket{x}$ would have been our wanted state. The total variation distance between the two probability distributions can be bounded as:
$$ \frac{1}{2} \sum_{x \in X} \left| (\sqrt{D(x)} \pm 2^{-\nu})^2 - D(x) \right| \leq \frac{3}{2} |X| 2^{-\nu} \enspace. $$
Thus, it suffices to choose $\nu \gg \log_2 |X|$ to obtain a negligible total variation distance with respect to the target distribution.

\subsection{Preliminary: Estimating a Gaussian Mass}

An important operation, both for the quantum Klein sampler and quantum rejection sampling, is to estimate a Gaussian mass over a finite subset of $\Z$, i.e., for $\widetilde x\in \R$:
\begin{equation*}
    \rho_{\sigma,\widetilde{x}}(\iintv{-2^M,2^M}) = \sum_{k=-2^M}^{2^M} \exp{-\frac{\pi}{\sigma^2}(k-\widetilde{x})^2} \enspace.
\end{equation*}

For this estimation we can follow Lemma A.30 in~\cite{boer2022random}, which gives the explicit complexity of a method proposed by Kitaev and Webb~\cite{kitaev2008wavefunction}. In a nutshell, one starts from an approximate quantum implementation of $x \mapsto \exp{x}$ using fixed-point arithmetic, a summation over a smaller interval, and uses the Poisson formula to switch between the cases of a small or a large $\sigma$. The only technicality in our result is that we switch from a relative precision to an absolute one. In the rest of the paper, $M$ will always be large enough to satisfy the condition of~\autoref{corollary:gaussian-mass}.

\begin{corollary}\label{corollary:gaussian-mass}
There exists a universal constant $c$ such that, for any $\nu, \sigma, M$ with $\nu \geq \log \sigma$ and $\log \sigma + M\log(2) + \nu \leq c 2^{2M}$, the following holds.

There exists a quantum algorithm that has a gate count $\bigOt{\nu^{3/2}}$ and a qubit count $\bigOt{\nu}$, and that computes:
$$ \ket{ \widetilde{x} } \ket{0} \mapsto \ket{\widetilde{x}} \ket{\rho_{\sigma,\widetilde{x}}(\iintv{-2^M,2^M}) \pm 2^{-\nu}}\enspace . $$
\end{corollary}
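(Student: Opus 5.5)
We will derive the corollary from Lemma A.30 of~\cite{boer2022random}. That lemma gives a quantum circuit computing $\ket{\widetilde x}\ket{0}\mapsto\ket{\widetilde x}\ket{\rho_{\sigma,\widetilde x}(\Z)(1\pm 2^{-\nu'})}$, with gate count $\bigOt{\nu'^{3/2}}$ and qubit count $\bigOt{\nu'}$. The circuit is built from the Kitaev--Webb method: fixed-point evaluation of $x\mapsto e^x$ summed over a window of $O(\sqrt{\nu'})$ terms around $\widetilde x$, using the Poisson summation formula to pass to the dual sum when $\sigma$ is large.

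The plan has three steps. First, reduce the target quantity, the mass on $\iintv{-2^M,2^M}$, to the mass on all of $\Z$. Second, convert the relative precision of the lemma into absolute precision. Third, check that the parameter changes only affect logarithmic factors.

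\textbf{Step 1: truncation.} By \autoref{lemma: inequality on gaussian masses on finite sets},
$$\rho_{\sigma,\widetilde x}(\Z\setminus\iintv{-2^M,2^M})\leqslant O(\sigma 2^M e^{-\pi 2^{2M}}).$$
Taking logarithms, this is at most $2^{-\nu-1}$ as soon as $\log\sigma+M\log 2+\nu\leqslant c\,2^{2M}$ for a suitable universal constant $c$. This is exactly why the hypothesis appears in the statement. We may therefore compute an approximation of $\rho_{\sigma,\widetilde x}(\Z)$ and accept an extra additive error of $2^{-\nu-1}$.

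One caveat: the lemma is stated for a centered window, whereas Klein's algorithm uses the shifted window $\iintv{-2^M+\lfloor\widetilde x\rceil,2^M+\lfloor\widetilde x\rceil}$. I would first reduce $\widetilde x$ modulo $1$ and shift the window accordingly, so that the same tail bound applies. This is a simple reversible arithmetic operation.

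\textbf{Step 2: relative to absolute precision.} By \autoref{lemma:rho sigma c} and \autoref{lemma:rho Z approx}, $\rho_{\sigma,\widetilde x}(\Z)\leqslant\rho_\sigma(\Z)\leqslant 3\sigma+4$. A relative error of $2^{-\nu'}$ therefore yields an absolute error of at most $(3\sigma+4)2^{-\nu'}$. We choose $\nu'=\nu+\log_2\sigma+O(1)$. Since $\nu\geqslant\log\sigma$, this gives $\nu'=O(\nu)$, and the absolute error is at most $2^{-\nu-1}$. Adding the truncation error from Step 1 gives total error at most $2^{-\nu}$.

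\textbf{Step 3: cost.} The fixed-point registers holding $\widetilde x$ and the output need $O(\nu')$ bits. With $\nu'=O(\nu)$, the gate and qubit bounds $\bigOt{\nu'^{3/2}}$ and $\bigOt{\nu'}$ become $\bigOt{\nu^{3/2}}$ and $\bigOt{\nu}$. Any garbage left by the lemma's circuit is uncomputed in the standard way, at most doubling the cost.

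The main obstacle I expect is matching the exact hypotheses of Lemma A.30. In particular, I need to confirm that its cost does not hide a dependence on $\sigma$ or on the size of $\widetilde x$ beyond $\mathrm{polylog}$ factors. The condition $\nu\geqslant\log\sigma$ is what absorbs $\log\sigma$ into $\nu$, and reducing $\widetilde x$ modulo $1$ removes any dependence on the size of $\widetilde x$.
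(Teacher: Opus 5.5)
Your proposal is essentially the paper's proof. You call Lemma A.30 of \cite{boer2022random} for $\rho_{\sigma,\widetilde x}(\Z)$ at relative precision $2^{-\nu'}$, convert to an absolute error via $\rho_{\sigma,\widetilde x}(\Z)\leqslant\rho_\sigma(\Z)\leqslant 3\sigma+4$, and absorb the truncation error $O(\sigma 2^M e^{-\pi 2^{2M}})$ from \autoref{lemma: inequality on gaussian masses on finite sets} using the hypothesis on $M$; your explicit choice $\nu'=\nu+\log_2\sigma+O(1)=O(\nu)$, justified by $\nu\geqslant\log\sigma$, fills in what the paper only asserts as $\nu'=O(\nu)$. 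Your mod-$1$ caveat also makes explicit a point the paper glosses over: a tail bound on $\Z\setminus\iintv{-2^M,2^M}$ cannot hold uniformly in $\widetilde x$, so the corollary is really meant for centers with $|\widetilde x|\leqslant 1/2$, which is how it is invoked (with center $\widetilde x-\lfloor\widetilde x\rceil$) for the shifted windows in \autoref{lemma:implem of B}.
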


\begin{proof}
We start by computing an estimation $\widetilde \rho$ of $\rho_{\sigma,\widetilde x}(\Z) = \sum_{k=-\infty}^{\infty} \exp{-\frac{\pi}{\sigma^2}(k-\widetilde{x})^2}$ with relative precision $2^{-\nu'}$, for some $\nu'$, using Lemma A.30 in~\cite{boer2022random}. The algorithm of~\cite{boer2022random} is unitary, has gate count $\bigOt{(\nu')^{3/2}}$ and qubit count $\bigOt{\nu'}$.
We have:
\begin{align*}
| \widetilde \rho - \rho_{\sigma,\widetilde{x}}(\iintv{-2^M,2^M}) | &\leq  | \widetilde \rho - \rho_{\sigma,\widetilde x} | + | \rho_{\sigma,\widetilde x} \ - \rho_{\sigma,\widetilde{x}}(\iintv{-2^M,2^M}) | \\
| \widetilde \rho - \rho_{\sigma,\widetilde{x}}(\iintv{-2^M,2^M}) | &\leq 2^{-\nu'} \rho_{\sigma,\widetilde x} + | \rho_{\sigma,\widetilde x} \ - \rho_{\sigma,\widetilde{x}}(\iintv{-2^M,2^M}) | \text{ By~\cite{boer2022random}} \\
& \leq 2^{-\nu'} \rho_{\sigma,\widetilde x} + O(\sigma\times 2^M \exp{-\pi 2^{2M}}) \text{ By~\autoref{lemma: inequality on gaussian masses on finite sets}} \\
& \leq 2^{-\nu'} (3\sigma + 4) + O(\sigma  2^M \exp{-\pi 2^{2M}})  \text{ By Lemma~\ref{lemma:rho sigma c} and~\ref{lemma:rho Z approx}. }
\end{align*}

We take $\nu' = \bigO{\nu}$ and bound the second term thanks to the condition on $M$, concluding the proof. 
\qed
\end{proof}

\subsection{Step 1: Constructing a Gaussian State}\label{sec:gaussian-state}

In the following, we give the quantum Klein sampler algorithm from~\cite{DBLP:journals/iacr/BoerF25}. 

\paragraph{One-dimensional Gaussian State.}
The algorithm starts by approximating a one-dimensional quantum Gaussian state, using the method of Kitaev and Webb~\cite{kitaev2008wavefunction}, whose complexity is detailed by de Boer~\cite{boer2022random}.
We adapt Theorem A.29 from de Boer~\cite{boer2022random}, who constructs a one-dimensional centered Gaussian state. In our case we just need to shift the Gaussian by a variable quantity $\widetilde{x}$, which is also made possible by Algorithm 10 in~\cite{boer2022random}.

The input $\widetilde x$, which is real, is stored in qubits using a fixed-point approximation with a sufficient precision. This is taken into account by the $\bigO{M + \nu}$ memory complexity. The idea of the algorithm is to produce the Gaussian state bit by bit, recursively. At each bit, one computes a Gaussian sum to determine the left and right weight, then produces a superposition of $\ket{0} / \ket{1}$ with appropriate amplitudes, and uncomputes the sum. The entire algorithm runs without measurements.

\begin{theorem}[A.29 in~\cite{boer2022random}, adapted to our notation]\label{thm:boer}
For $M \in \mathbb{N}$, $\nu \in \mathbb{N}$ and $\sigma $, there exists a quantum algorithm that computes:
\[ \ket{\widetilde{x}} \ket{0} \mapsto \ket{\widetilde{x}} \ket{ D_{\sigma,\widetilde x}(\iintv{-2^M + \lfloor \widetilde x\rceil,2^M+\lfloor \widetilde x\rceil})} \]
within trace distance $\exp{- 2^{2M} / (2\sigma^2)} + M 2^{-\nu}$, using $\bigO{M \nu^{3/2} \polylog{\nu}}$ quantum gates and $\bigO{M + \nu}$ qubits.
\end{theorem}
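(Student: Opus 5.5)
The state is built bit by bit in the style of Kitaev and Webb. We shift by $\lfloor \widetilde x\rceil$ and write the offset $k \in \iintv{-2^M,2^M}$ in binary, so the goal becomes preparing $\sum_k \sqrt{D(k)}\ket{k}$ with $D$ the one-dimensional Gaussian of centre $\widetilde x - \lfloor \widetilde x\rceil$. First, reduce to the range $\iintv{0,2^{M+1}}$ by adding $2^M$, which is a reversible shift; the leftover endpoint is handled by a single extra amplitude or absorbed into the tail error. Then decompose
\[ D(k_1k_2\dotsm k_{M+1}) = \prod_j \Pr[k_j \mid k_1\dotsm k_{j-1}] . \]
Each conditional probability is a ratio of Gaussian masses of arithmetic progressions: fixing the low-order bits already chosen leaves the integers $\equiv r \pmod{2^j}$ in the range. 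Such a mass is itself a rescaled one-dimensional Gaussian sum, $\rho_{\sigma/2^j,(\widetilde x - r)/2^j}$ over an interval. So processing least-significant bits first, each step needs only a Gaussian mass of the form handled in \autoref{corollary:gaussian-mass}.

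At step $j$ the plan is as follows. Compute coherently the two masses $\rho_0,\rho_1$ for the next bit equal to $0$ or $1$, using \autoref{corollary:gaussian-mass} with precision $2^{-\nu'}$. Next compute $\theta = \arccos\sqrt{\rho_0/(\rho_0+\rho_1)}$ in fixed point. Apply a controlled rotation $R_y(2\theta)$ to a fresh qubit. Finally uncompute $\theta,\rho_0,\rho_1$ by running the same circuits in reverse.

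For the cost, each step uses $\bigOt{\nu^{3/2}}$ gates (mass estimation plus arithmetic such as division and $\arccos$ at $\nu$ bits, all within $\polylog{\nu}$ overhead of multiplication). Over $M+1$ steps this gives $\bigO{M\nu^{3/2}\polylog{\nu}}$ gates. Ancillas are reused across steps, so the qubit count is $\bigO{M+\nu}$: the output register plus $\bigO{\nu}$ workspace.

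The main obstacle is the error analysis. Each rotation angle is correct to within $\bigO{2^{-\nu}}$ once $\nu'=\nu+O(1)$; one needs relative rather than absolute precision on the ratio, which holds because the masses are bounded below by a constant after rescaling. By unitarity and the triangle inequality, replacing each ideal step with the approximate one costs $\bigO{2^{-\nu}}$ in trace distance, which accumulates to $M2^{-\nu}$. It remains to compare the truncated distribution with $D_{\sigma,\widetilde x}$ on $\iintv{-2^M + \lfloor \widetilde x\rceil,2^M+\lfloor \widetilde x\rceil}$. There the discrepancy is at most the tail mass outside the interval relative to the total, which is bounded by $\exp{-2^{2M}/(2\sigma^2)}$ using a standard Gaussian tail bound in the spirit of \autoref{lemma: inequality on gaussian masses on finite sets}. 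The delicate point I would check first is the regime of small $\sigma/2^j$, where relative precision of the masses matters. There I would use the Poisson-summation switch already built into \autoref{corollary:gaussian-mass}.
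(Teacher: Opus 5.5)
Your construction is the bit-by-bit Kitaev--Webb recursion that the paper sketches. The paper gives no proof of its own and simply cites de Boer's Theorem A.29: compute two Gaussian weights, rotate a fresh qubit, uncompute. Your LSB-first version over cosets $r+2^j\Z$, with masses $\rho_{\sigma/2^j,(c-r)/2^j}$, is a legitimate way to organise it.

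The gap is in the precision argument. You get $\rho_0,\rho_1$ only to \emph{absolute} error $2^{-\nu'}$ from Corollary~\ref{corollary:gaussian-mass}. You then convert this to relative precision by asserting that "the masses are bounded below by a constant after rescaling". That holds only while the rescaled width $s_j=\sigma/2^j$ is $\Omega(1)$. For $s<1$ the best general bound is $\rho_{s,c}(\Z)\geq e^{-\pi/(4s^2)}$, essentially attained when $c$ is a half-integer.

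Since $e^{-2^{2M}/(2\sigma^2)}$ is small only when $2^M\gg\sigma$, all but about $\log_2\sigma$ of your $M+1$ levels have $s_j<1$. Whenever the tail term is below $2^{-\nu}$, the last levels reach $s_j\lesssim 1/\sqrt{\nu'}$. There, the branches whose rescaled centre $(c-r)/2^j$ sits near a half-integer have $\rho_0+\rho_1\approx 2e^{-\pi/(4s_j^2)}\ll 2^{-\nu'}$. The computed ratio is then meaningless and the rotation angle is off by $\Theta(1)$.

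Worse, take $\sigma<1$ with $\widetilde x$ near a half-integer, which does occur for the $\sigma_i=\sigma/|R_{i,i}|$ of Algorithm~\ref{alg:quantum-klein}. Already the first level has total mass about $2e^{-\pi/(4\sigma^2)}$, so no $\nu'=\nu+\bigO{1}$ works even on the dominant branch. The uniform per-level bound $\bigO{2^{-\nu}}$ on which your telescoping to $M2^{-\nu}$ rests therefore does not follow.

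The Poisson-summation switch you suggest does not repair this. It is the tool for \emph{large} width, where the direct sum converges slowly; at small width the direct sum is already short, and the only problem is that its value is tiny. What is missing is relative precision. Take $\rho_0$ and $\rho_1$ from de Boer's Lemma~A.30 itself, which gives $\rho_{s,c}(\Z)$ to relative error $2^{-\nu'}$ in $\bigOt{(\nu')^{3/2}}$ gates. Corollary~\ref{corollary:gaussian-mass} is derived from that lemma by deliberately passing to absolute error. Alternatively, normalise both masses by a common dominant term $e^{-\pi d^2/s_j^2}$ before dividing. With relative precision, $\sqrt{\rho_b/(\rho_0+\rho_1)}$ has relative error $\bigO{2^{-\nu'}}$ on every branch and for every $\sigma$. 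The rest of your argument then goes through: exact reversible uncomputation, telescoping over $M+1$ levels, and $\bigO{M+\nu}$ qubits.

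Keeping absolute precision and arguing instead that the badly approximated branches are light would force $\nu'$ to absorb $\log_2(1/\rho_{\sigma,\widetilde x}(\Z))$. That quantity is unbounded as $\sigma\to 0$ and breaks the stated gate count.

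A smaller point: Corollary~\ref{corollary:gaussian-mass} is stated only for the symmetric interval $\iintv{-2^M,2^M}$, under $\log\sigma+M\log 2+\nu\leq c2^{2M}$. That condition fails for the short intervals at the top levels of your recursion. It is cleaner to use full theta sums $\rho_{s_j,\cdot}(\Z)$ at every level, as Kitaev and Webb do, and pay the truncation error once at the end.
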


Recall that the trace distance between two (pure) states $\ket{\varphi}$ and $\ket{\psi}$ is $\sqrt{1-|\braket{\varphi |\psi}|^2}$, and is higher than the total variation distance of the probability distributions resulting from a measurement of the states.

By adapting the precision used inside the algorithm, we can bound the trace distance exactly as $2^{-\nu}$ if the complexity formula is slightly changed. It suffices to choose $M$ large enough compared to $\sigma$ and $\nu$. This very mild constraint will easily be satisfied in our algorithms.




\begin{corollary}\label{cor:quantum-states}
There exists a universal constant $c$ such that, for any $\sigma $, for any $\nu$, for any $M \geq c (\log \sigma + \log \nu)$, 
there exists a quantum algorithm that uses using $\bigO{M + \nu}$ qubits and $\bigOt{ M \nu^{3/2}}$ quantum gates, and that computes:
\[ \ket{\widetilde{x}} \ket{0} \mapsto \ket{\widetilde{x}} \ket{ D_{\sigma,\widetilde x}(\iintv{-2^M + \lfloor \widetilde x\rceil,2^M+\lfloor \widetilde x\rceil}) \pm 2^{-\nu} }\enspace . \]
\end{corollary}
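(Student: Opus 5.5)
The plan is to derive the statement directly from \autoref{thm:boer}. We run that algorithm with an internal precision parameter $\nu' = \nu + c_1\log\nu + c_2 M$, for suitable constants $c_1, c_2$. We then convert its trace-distance guarantee into the amplitude-wise guarantee $\ket{D \pm 2^{-\nu}}$.

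\textbf{Step 1: small trace distance.} By \autoref{thm:boer}, the output state $\ket{\varphi}$ is within trace distance $\tau := \exp{-2^{2M}/(2\sigma^2)} + M2^{-\nu'}$ of the target state $\ket{\psi} = \sum_k \sqrt{D(k)}\ket{k}$. We bound the two terms separately.
\begin{itemize}
\item For the second term, choosing $\nu' \geq \nu + \log M + 2$ gives $M 2^{-\nu'} \leq 2^{-\nu-2}$.
\item The first term requires $2^{2M}/(2\sigma^2) \geq (\nu+2)\ln 2$, i.e.\ $2M \geq 2\log\sigma + \log\nu + O(1)$. This holds as soon as $M \geq c(\log\sigma+\log\nu)$ for a large enough universal constant $c$.
\end{itemize}
This is exactly where the hypothesis on $M$ is used. (If $\sigma$ is so large that this reading of the exponent seems off, one would instead use the trace-distance bound as stated and the assumption on $M$ relative to $\sigma$.)

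\textbf{Step 2: from trace distance to amplitudes.} Any unitary implementation leaves a global phase freedom, and we may assume $\braket{\psi|\varphi}$ is real and nonnegative after absorbing the phase. Then
\[ \|\ket{\varphi}-\ket{\psi}\|^2 = 2 - 2\braket{\psi|\varphi} \leq 2\bigl(1-|\braket{\psi|\varphi}|^2\bigr) = 2\tau^2 . \]
Each amplitude therefore satisfies $|\alpha_k - \sqrt{D(k)}| \leq \sqrt2\,\tau \leq 2^{-\nu}$ once $\tau \leq 2^{-\nu-1}$.

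The one subtlety is that the amplitudes $\alpha_k$ must be real and nonnegative, so that the state is genuinely of the form $\ket{D\pm 2^{-\nu}}$. We expect this to be the main obstacle: one would have to check in \cite{boer2022random} that the bit-by-bit construction only applies real rotations $\ket 0 \mapsto \cos\theta\ket0+\sin\theta\ket1$, with $\theta\in[0,\pi/2]$ computed from Gaussian sums. In that case all amplitudes are real and nonnegative, and the global phase issue disappears.

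\textbf{Step 3: the $\widetilde x$ register.} The bound holds for every basis value $\ket{\widetilde x}$ separately, because the algorithm acts controlled on $\widetilde x$. The precision loss from the fixed-point representation of $\widetilde x$ is already included in \autoref{thm:boer}.

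\textbf{Step 4: complexity.} Substituting $\nu' = O(\nu + \log M)$ into the costs of \autoref{thm:boer} gives $O(M\nu'^{3/2}\polylog{\nu'}) = \bigOt{M\nu^{3/2}}$ gates and $O(M+\nu') = O(M+\nu)$ qubits. Here the $\log M$ term is absorbed because $M$ is polynomially bounded in the regime of interest, and otherwise it is absorbed into $\widetilde O$.
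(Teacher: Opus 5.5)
Your proposal is correct and follows the paper's own argument. You run Theorem~\ref{thm:boer} with internal precision $\nu' = \bigO{\nu + \log M}$, use the hypothesis on $M$ to push $\exp{-2^{2M}/(2\sigma^2)}$ below $2^{-\nu}$, and note that the costs only pick up a $\log M$ dependence absorbed by the $\widetilde{\mathcal{O}}$. Your Step~2 (converting the trace-distance bound into a per-amplitude bound, with real nonnegative amplitudes coming from the rotation-based construction) makes explicit a step the paper leaves implicit.

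The only slip is your opening choice $\nu' = \nu + c_1\log\nu + c_2 M$: it should read $c_2\log M$, as in your Steps~1 and~4, since a term linear in $M$ would break the $\bigOt{M\nu^{3/2}}$ gate count.
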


\begin{proof}
We run the algorithm of~\autoref{thm:boer} using a slightly larger precision $\nu' = \bigO{\nu + \log M}$. Using $M \geq c  \log \sigma$ we can ensure that $\exp{- 2^{2M} / (2\sigma^2)}$ is dominated by $\exp{- 2^{2M-1}}$. Using $\nu'$ we also ensure that the second term $M 2^{-\nu'}$ in~\autoref{thm:boer} is dominated by $2^{-\nu}$. The distance becomes: $\exp{- 2^{2M-1}} + 2^{-\nu}$. Finally since $M \geq c \log \nu$ the first term can be bounded by $2^{-\nu}$ as well. The asymptotic gate and qubit counts are unchanged, except for the dependency in $\log M$, which is absorbed by the $\widetilde{\mathcal{O}}$ notation.\qed
\end{proof}

\paragraph{Many-dimensional Gaussian State.}
We now detail the algorithm of~\cite{DBLP:journals/iacr/BoerF25}. 
Recall that the Klein distribution $q$ parametrised by $\sigma$, $B\in GL_m(\R)$, $\vec c\in \R^m$ and $M\in \N$, $M>1$, is the distribution of lattice vectors obtained by running~\autoref{alg:Klein's algorithm} classically. In the quantum setting,~\autoref{alg:quantum-klein} allows to construct an approximation of it, as a quantum state: $\Ket{q \pm 2^{-\nu}}$. The idea is to run the same procedure as classically, where one-dimensional Gaussian samples are replaced by the construction of Gaussian quantum states. Note that we assume that the QR decomposition of $B$ and $\vec c' = Q^{-1}\vec c$ are precomputed classically.

\begin{algorithm}[tb]
    \caption{Quantum Klein sampling (variant of Algorithm 6.1 in~\cite{DBLP:journals/iacr/BoerF25}).}\label{alg:quantum-klein}
\begin{algorithmic}[1]
\Statex \textbf{Input:} $B = QR$, $\sigma$, $\vec c' = Q^{-1}\vec c$, $M\in \N$, $M>1$
\smallskip
\Statex \textbf{Output:} $\Ket{q}$
\medskip
\State Initialize a quantum state $\Ket{\vec{0}} = \Ket{0} \Ket{0} \ldots \Ket{0}$ with $m$ registers
    \For{$i=m \to 1$}
        \State $\sigma_i\leftarrow \sigma/|R_{i,i}|$
        \State Let the current state be $\sum \Ket{0} \Ket{0} \ldots \Ket{0} \alpha_{x_{i+1}, \ldots, x_m} \Ket{x_{i+1}} \ldots \Ket{x_m}$
        \State Compute:
        $$ \sum \Ket{0} \Ket{0} \ldots \Ket{0} \alpha_{x_{i+1}, \ldots, x_m} \Ket{x_{i+1}} \ldots \Ket{x_m} \Ket{\widetilde x_i} $$
        where $\widetilde x_i = \frac{c'_i-\sum_{j=i+1}^mR_{i,j}x_j}{R_{i,i}}$
        \State Construct a Gaussian state in the register $i$:
        $$\sum \Ket{0} \Ket{0} \ldots \Ket{D_{\sigma_i, \widetilde{x}_{i}}(\iintv{-2^M+\lfloor \widetilde x_i \rceil,2^M+\lfloor \widetilde x_i \rceil})} \alpha_{x_{i+1}, \ldots, x_m} \Ket{x_{i+1}} \ldots \Ket{x_m}$$
	\EndFor
    \State  Multiply by $B$ in-place on the quantum state
\end{algorithmic}
\end{algorithm}

Note that in~\autoref{alg:quantum-klein}, we can compress the space by storing $(x_i - \floor{\widetilde{x}_i})$ in Step 5, which is within $\iintv{-2^M,2^M}$. Then at each loop iterate, we can recompute the center of the previous distribution. We choose to simplify the notations in order to make the algorithm easier to understand. 


\begin{theorem}\label{theorem: Klein distrib}
There exists a universal constant $c$ such that the following holds.

Let $q$ be the Klein distribution resulting from~\autoref{alg:Klein's algorithm}, which is parametrised by $\sigma$,
$B\in GL_m(\R)$, $\vec c$ and $M\in \N$, $M>1$. Let $r = \max \mathsf{size}(R_{ij}, c_i', \sigma)$. Let $\nu \geq c (r + \log m)$ and $M \geq c(\log r + \log \nu + m)$.
There exists a quantum circuit of qubit count $\bigOt{m (M + mr + \nu ) }$ and gate count $ \bigOt{ m^2(mr + M + \nu) + m M \nu^{3/2}}$ that prepares $\ket{q \pm 2^{-\nu}}$.
\end{theorem}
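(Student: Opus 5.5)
The circuit is \autoref{alg:quantum-klein}, and I would analyse it loop by loop. There are three things to establish: (i) the cost of computing the centers $\widetilde x_i$ together with a bound on the bit-size of all registers; (ii) the cost of the one-dimensional Gaussian state preparations, via \autoref{cor:quantum-states}; (iii) how the per-step amplitude errors combine into a global $\pm 2^{-\nu}$ error on $\ket{q}$.

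\emph{Register sizes and arithmetic.} At iteration $i$, the sampled $x_i$ lies in $\iintv{-2^M+\lfloor \widetilde x_i\rceil, 2^M+\lfloor \widetilde x_i\rceil}$. Since $\widetilde x_i = (c'_i - \sum_{j>i} R_{i,j}x_j)/R_{i,i}$, an induction from $i=m$ down to $1$ gives a bound on $|x_i|$ of roughly $2^{M}\cdot 2^{O(r)(m-i+1)}$. So every $x_i$ fits in $O(M+mr)$ bits. I would store $\widetilde x_i$ in fixed point with $O(M+mr+\nu)$ bits, which gives enough fractional precision.

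Computing $\widetilde x_i$ needs $m$ products $R_{i,j}x_j$ of $O(mr+M+\nu)$-bit numbers, a sum, and a division. With quasi-linear multiplication this costs $\bigOt{m(mr+M+\nu)}$ gates per iteration, hence $\bigOt{m^2(mr+M+\nu)}$ gates in total. The final in-place multiplication by $B$ costs the same. The $\widetilde x_i$ registers must be uncomputed, or recomputed as in the remark after the algorithm, so that no garbage stays entangled; this only changes constant factors. The qubit count is $m$ registers of size $\bigOt{M+mr+\nu}$.

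\emph{Gaussian steps.} Each iteration calls \autoref{cor:quantum-states} with width $\sigma_i=\sigma/|R_{i,i}|$ and precision $\nu'=\nu+O(\log m + M)$. The size of $\sigma_i$ is $O(r)$, so the hypothesis $M\ge c(\log r+\log\nu)$ covers the corollary's condition $M\ge c(\log\sigma_i+\log\nu')$. The cost is $\bigOt{M\nu^{3/2}}$ per iteration and $\bigOt{mM\nu^{3/2}}$ in total.

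\emph{Error propagation, the main obstacle.} The amplitude of $B\vec x$ in the ideal state is the product $\prod_i \sqrt{P(x_i\mid x_{>i})}$. In the real state each factor is perturbed by at most $2^{-\nu'}$, and the finite-precision $\widetilde x_i$ perturbs it further. The first thing I would try is to work with $\ell_2$ (trace) distance, which is subadditive across a sequence of unitaries. Each step is within distance $\le 2^{-\nu'}\cdot 2^{(M+1)/2}$ of the ideal controlled preparation, since the pointwise error is summed over $2^{M+1}$ amplitudes. So after $m$ steps the total $\ell_2$ error is at most $m2^{(M+1)/2-\nu'}$.

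A pointwise amplitude error is bounded by the $\ell_2$ error. Hence choosing $\nu'=\nu+M+\log m+O(1)$ gives every amplitude within $2^{-\nu}$ of the target. The rounding of $\widetilde x_i$ must also be controlled. Because the Gaussian is Lipschitz in its center with constant $\mathrm{poly}(1/\sigma_i)$, the extra $O(\nu+r)$ fractional bits absorb this error. This is where the assumption $\nu\ge c(r+\log m)$ is used.

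Substituting $\nu'=\bigO{\nu+M}$ into the gate count only affects lower-order terms once $M$ is comparable to $\nu$. If it is not, I would instead track relative rather than $\ell_2$ errors to avoid the $2^{M/2}$ factor.
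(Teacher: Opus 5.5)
Your overall structure matches the paper's proof: you bound the bit-size by unrolling the recurrence for $\widetilde x_i$, count $O(m^2)$ multiplications, make $m$ calls to \autoref{cor:quantum-states}, and accumulate $m$ errors. The gap is in the error propagation, which as written does not give the stated gate count. You convert the pointwise guarantee $\pm 2^{-\nu'}$ of \autoref{cor:quantum-states} into an $\ell_2$ bound over $2^{M+1}$ amplitudes, which loses a factor $2^{M/2}$ and forces $\nu' = \nu + M + \log m + O(1)$. Each Gaussian preparation then costs $\bigOt{M(\nu+M)^{3/2}}$, not the $\bigOt{M\nu^{3/2}}$ you state. The total $\bigOt{mM(\nu+M)^{3/2}}$ is not within $\bigOt{m^2(mr+M+\nu)+mM\nu^{3/2}}$ when $M\gg\nu$, and the hypotheses allow that regime. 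For example, $r=O(1)$, $\nu=\Theta(\log m)$, $M=\Theta(m)$ gives $\bigOt{m^{7/2}}$ against the claimed $\bigOt{m^3}$. Your fallback of tracking relative errors is not carried out. It is also not available from \autoref{cor:quantum-states}, which only gives additive amplitude errors, while the tail amplitudes are tiny.

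The missing idea is that no $\ell_2$ detour is needed: the target $\ket{q\pm 2^{-\nu}}$ is itself a pointwise statement, and the state has product structure. Once the $\widetilde x_i$ are uncomputed, the arithmetic steps only permute basis states. Each controlled preparation acts as $\ket{\widetilde x_i}\ket{0}\mapsto\ket{\widetilde x_i}\sum_k \alpha^{(i)}_k\ket{k}$ with $|\alpha^{(i)}_k-\sqrt{P_i(k)}|\le 2^{-\nu'}$, where $\sqrt{P_i(k)}$ is the exact conditional amplitude. So the final amplitude of $\ket{B\vec x}$ is exactly $\prod_i \alpha^{(i)}_{x_i}$. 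Since the exact factors lie in $[0,1]$, expanding the product gives $|\prod_i \alpha^{(i)}_{x_i}-\prod_i\sqrt{P_i(x_i)}|\le(1+2^{-\nu'})^m-1\le 2m2^{-\nu'}$. Hence $\nu'=\nu+\log m+O(1)=O(\nu)$ suffices, using $\nu\ge c\log m$. This is exactly the paper's argument, and it keeps the per-step cost at $\bigOt{M\nu^{3/2}}$. Equivalently, you could start from the trace-distance guarantee of \autoref{thm:boer}, which already has no $2^{M/2}$ factor; your loss comes only from re-inflating a pointwise bound into an $\ell_2$ one. Your Lipschitz argument for the finite-precision centers $\widetilde x_i$ is a useful addition that the paper glosses over, and it fits into this pointwise framework unchanged.
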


Here $\mathsf{size}$ denotes the bit-size required to represent the constants $(R_{ij}, c_i', \sigma)$ as rational numbers.

\begin{proof}
The quantum circuit this theorem refers to is~\autoref{alg:quantum-klein}. We bound precisely its cost and precision.

Within the algorithm, while $x_i$ are always integers, each $\widetilde{x_i}$ is computed using a fixed-point approximation to precision $\nu' = \bigO{\nu}$. Since each $R_{i,j}$ and $c_i'$ are smaller than $2^r$, at step $i$, the numerator of $\widetilde{x_i}$ can be bounded by: $|c_i'| + 2^r (\max_{j \leq i} |x_j|)(m-i) \leq 2^r + 2^r (m-1) \max_{j \leq i} |x_j|$, while the denominator can be bounded by $2^r$. Since $|x_m| \leq 2^M$, we can unroll this recurrence relation and obtain: $\forall i,  2^r |\widetilde{x_i}| \leq ( 2^r m)^i 2^M + \bigO{ ( 2^r m)^{i-1}  2^r } = \bigO{ (2^r m)^i 2^M }$. As a consequence, each coordinate $x_i$ requires at most $M + m \log ( 2^r m) + \nu = \bigOt{M + m r + \nu}$ bits of space to be represented. This means that multiplications required to compute $\widetilde{x_i}$ can be implemented within $\bigOt{M + m r + \nu}$ gates, and there are $\bigO{m^2}$ of them. All operations in Step 4 therefore cost $\bigOt{ m^2(m r + M + \nu) }$ gates.

Finally, the result of~\autoref{cor:quantum-states} is used $m$ times, with the different $\sigma_i$ (using fixed-point approximations again). Each step runs without measurements. The cost of this part is $\bigOt{m M \nu^{3/2} }$, and the qubit count $\bigO{\nu + M}$ is dominated by the rest of the algorithm.

At the end of~\autoref{alg:quantum-klein}, the amplitudes in the quantum state are the product of $m$ approximations of Gaussian amplitudes. The final error is approximately the sum of $m$ errors. In order to have a final precision $\nu$, we use a precision $\nu' = \nu + \log_2 m$ internally in the Gaussian state creation and arithmetic. Since $\nu \geq c \log m$ we still have $\nu' = \bigO{\nu}$. Thus, we can reach a precision $2^{-\nu}$ with a gate count $ \bigOt{m^2(m r + M + \nu) + m M \nu^{3/2}}$ and a qubit count $\bigOt{m (M + m r + \nu ) }$.
\qed
\end{proof}

This result is similar to Lemma L.5 in~\cite{DBLP:journals/iacr/BoerF25}, as the two algorithms are equivalent (only the order of operations in~\cite{DBLP:journals/iacr/BoerF25} is performed in a slightly different way, as the quantum state is rescaled between each iteration of the for loop).

\subsection{Step 2: Transforming the Distribution}\label{sec:gaussian-distrib}

Let $\Omega$ be a finite probability space within $\L(B)$. Given a quantum circuit that produces quantum samples for a distribution $D_1$ over $\Omega$:
$$ \Ket{0} \xmapsto{\mathcal{A}_1} \Ket{D_1} := \sum_{\vec{x} \in \Omega} \sqrt{D_1(\vec{x})} \Ket{\vec{x}} $$
we use \emph{quantum rejection sampling}~\cite{DBLP:journals/toct/OzolsRR13} to construct quantum samples for another distribution $D_2$ over $\Omega$, up to some approximation. This section starts with a generic statement (\autoref{theorem: amplitude transduction}) which is essentially from~\cite{DBLP:journals/toct/OzolsRR13}, and continues with the details in our case.

\subsubsection*{Quantum Rejection Sampling.}
The main building block from quantum rejection sampling is \emph{amplitude transduction}. We use the method of~\cite{sanders2019black} to implement the following unitary with $\bigO{\nu}$ basic gates:
\begin{equation}
\Ket{\alpha \pm 2^{-\nu}} \Ket{0} \mapsto \Ket{\alpha \pm 2^{-\nu}}\left( (\alpha \pm 2^{-\nu}) \Ket{0} + \Ket{*} \right)
\end{equation}
where $\Ket{*}$ is a superposition of non-zero basis vectors, and $\alpha \pm 2^{-\nu}$ is a fixed-point encoding, at precision of $\nu$ bits, of an amplitude between 0 and 1.
We then have the following result.

\begin{theorem}\label{theorem: amplitude transduction}
Let $\mathcal{A}_1$ be a quantum circuit (without measurement) that produces an approximation of $\Ket{D_1}$: $\mathcal{A}_1\Ket{0} = \Ket{D_1 \pm 2^{-\nu}}$. Let $w \geq 0$ be such that for all $\vec x\in \Omega$, $D_2(\vec{x})/ D_1(\vec{x}) \leq w$. Let $\mathcal{B}$ be a quantum circuit that computes a fixed-point approximation of $  \sqrt{ \frac{1}{w} \frac{D_2(\vec{x})}{ D_1(\vec{x}) } } $:
$$ \mathcal{B} \Ket{\vec{x}} \Ket{0} = \Ket{\vec{x}} \Ket{  \sqrt{\frac{1}{w} \frac{D_2(\vec{x})}{ D_1(\vec{x}) } } \pm 2^{-\nu} }  $$
then there exists a quantum algorithm that outputs $\Ket{D_2 \pm 2^{-\nu+1}}$ after making an average $\bigO{ \sqrt{w} }$ calls to $\mathcal{A}_1$ and $\mathcal{B}$.
\end{theorem}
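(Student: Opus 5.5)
The plan is the standard quantum rejection sampling construction of~\cite{DBLP:journals/toct/OzolsRR13}: one transduction step followed by amplitude amplification, as in the Quantum Search subsection. We define an algorithm $\mathcal{A}$ in three steps.
\begin{enumerate}
\item Run $\mathcal{A}_1$ to obtain $\sum_{\vec x} \beta_{\vec x}\Ket{\vec x}$, where $\beta_{\vec x} = \sqrt{D_1(\vec x)} \pm 2^{-\nu}$.
\item Apply $\mathcal{B}$ to write $\gamma_{\vec x} = \sqrt{D_2(\vec x)/(wD_1(\vec x))} \pm 2^{-\nu}$ in an ancilla. This value lies in $[0,1]$, up to clipping, because of the hypothesis $D_2/D_1\le w$.
\item Apply amplitude transduction (the unitary of~\cite{sanders2019black}) to a flag qubit, then uncompute the ancilla with $\mathcal{B}^\dagger$.
\end{enumerate}
Take the flag value $0$ (the $\Ket{0}$ branch of the transduction) as the ``good'' outcome; the QAA theorem's convention of marking by $1$ is just a relabelling. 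The result is
\[\mathcal{A}\Ket{0} = \sum_{\vec x}\beta_{\vec x}\gamma_{\vec x}\Ket{\vec x}\Ket{0} + \Ket{\mathrm{junk}}.\]
Uncomputation works because $\mathcal{B}$ is deterministic on basis states $\Ket{\vec x}$ and transduction leaves the amplitude register unchanged.

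Next I compute the good amplitudes. Up to error terms, $\beta_{\vec x}\gamma_{\vec x} \approx \sqrt{D_1(\vec x)}\sqrt{D_2(\vec x)/(wD_1(\vec x))} = \sqrt{D_2(\vec x)/w}$. The good branch is therefore, up to normalisation, $\frac{1}{\sqrt w}\Ket{D_2}$, and its squared norm is $\alpha^2\approx 1/w$. Applying $QAA(\mathcal{A})$ about $\bigO{\sqrt{w}}$ times makes the good branch dominate. One can use the exact-count version, since $\alpha$ is essentially known, or the fixed-point variant of~\cite{yoder2014fixed}, or the randomized unknown-$\alpha$ schedule of~\cite{brassard2002quantum}; the last is where ``an average $\bigO{\sqrt w}$ calls'' comes from. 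Each iterate uses $\mathcal{A}_1$, $\mathcal{B}$ and their inverses a constant number of times. Measuring the flag and post-selecting on the good outcome gives the normalised good state.

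The main obstacle is the error bookkeeping, namely showing that the normalised good state has amplitudes within $2^{-\nu+1}$ of $\sqrt{D_2(\vec x)}$. I would expand the product
\[\beta_{\vec x}\gamma_{\vec x} = \sqrt{D_2(\vec x)/w} \pm 2^{-\nu}\sqrt{D_2(\vec x)/(wD_1(\vec x))} \pm 2^{-\nu}\sqrt{D_1(\vec x)} \pm 2^{-2\nu}.\]
Each cross term is at most $2^{-\nu}$, since $\gamma_{\vec x}\le 1$ and $\sqrt{D_1}\le 1$. After rescaling by $\sqrt w$, each amplitude is $\sqrt{D_2(\vec x)} \pm \bigO{\sqrt w\,2^{-\nu}}$, and renormalisation contributes a relative factor $1\pm\bigO{|\Omega|2^{-\nu}}$.

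To reach the exact constant $2^{-\nu+1}$, I would run the internal precisions $\nu'$ of $\mathcal{A}_1$ and $\mathcal{B}$ at $\nu + \bigO{\log w + \log|\Omega|}$. This is the same device used in~\autoref{cor:quantum-states} and~\autoref{theorem: Klein distrib}, and it does not change the count of $\bigO{\sqrt w}$ calls. Alternatively, if the statement's ``$\pm$'' is read loosely, I would treat these polynomial factors as absorbed by the convention on $\Ket{D\pm 2^{-\nu}}$. The fixed-point arithmetic error of the transduction itself, which is $\bigO{2^{-\nu}}$ per \cite{sanders2019black}, is handled the same way. The residual amplitude left on the bad branch after QAA is handled by the post-selection.
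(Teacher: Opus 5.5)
Your construction is the paper's: apply $\mathcal{A}_1$, then $\mathcal{B}$, transduce the computed value onto a flag with the circuit of \cite{sanders2019black}, uncompute $\mathcal{B}$, and amplify the flag-$0$ branch. Before normalisation that branch is $\frac{1}{\sqrt w}\sum_{\vec x}\sqrt{D_2(\vec x)}\Ket{\vec x}$, up to per-amplitude errors of about $2^{-\nu+1}$. Your iteration count is also the correct one: the branch has squared norm about $1/w$, hence $\bigO{\sqrt w}$ iterations. The paper's proof says ``amplitude $1/w$'' and ``$\bigO{w}$ iterations'', which is a slip.

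Where you go beyond the paper is the error bookkeeping, and you are right to worry. The paper stops at the unnormalised good branch and never accounts for the renormalisation that projection performs, which multiplies every amplitude error by roughly $\sqrt w$. With inputs at precision exactly $2^{-\nu}$, the stated $2^{-\nu+1}$ cannot be reached by this construction. For example, take $D_1=(1-2^{-2\nu},2^{-2\nu})$, $D_2=(0,1)$ and $w=2^{2\nu}$, and let $\mathcal{B}$ return the admissible value $2^{-\nu}$ on the first point; after projection that point keeps amplitude about $1/\sqrt2$.

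So your remedy of running $\mathcal{A}_1$ and $\mathcal{B}$ at precision $\nu+\bigO{\log w+\log|\Omega|}$ is genuinely needed. Note, though, that it proves a version with a strengthened hypothesis, not the statement as written. It is harmless where the theorem is used. In \autoref{theorem: performance of quantum gaussian sampling algorithm} the unitaries are already run at precision $\nu+\log_2|\Omega|$. There $\log w=\bigO{\log|\Omega|}$, because $\rho_\sigma(\Omega)\ge 1$, each $\rho_{\sigma_i}(\iintv{-2^M,2^M})\le 2^{M+1}+1$, and $|\Omega|\ge(2^M+1)^m$. Adding $\tfrac12\log_2 w$ to the precision therefore changes nothing asymptotically.

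Your fallback of absorbing the loss as ``polynomial factors'' does not work, since $\sqrt w$ is exponential in the applications. Finally, the renormalisation factor is $1\pm\bigO{\sqrt{w|\Omega|}\,2^{-\nu}}$, not $1\pm\bigO{|\Omega|2^{-\nu}}$. This follows from Cauchy--Schwarz on the cross term $\sum_{\vec x}e_{\vec x}\sqrt{D_2(\vec x)/w}$ in the squared norm. The same precision boost covers it.
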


\begin{proof}
From the output of $\mathcal{A}_1$, we apply $\mathcal{B}$, and obtain:
\begin{equation}
 \sum_{\vec{x} \in \Omega} (\sqrt{D_1(\vec{x})} \pm 2^{-\nu}) \Ket{\vec{x}} \Ket{ \frac{1}{w} \sqrt{ \frac{D_2(\vec{x})}{ D_1(\vec{x}) } } \pm 2^{-\nu} } \enspace.
\end{equation}
Since $ \sqrt{\frac{1}{w} \frac{D_2(\vec{x})}{ D_1(\vec{x}) } } \pm 2^{-\nu}$ is a $\nu$-bit fixed-point approximation of a number between 0 and 1, we apply the algorithm of~\cite{sanders2019black} to an ancilla register $\Ket{0}$:
\begin{multline*}
\sum_{\vec{x} \in \Omega} (\sqrt{D_1(\vec{x})} \pm 2^{-\nu} )\Ket{\vec{x}} \Ket{  \sqrt{\frac{1}{w} \frac{D_2(\vec{x})}{ D_1(\vec{x}) } } \pm 2^{-\nu} } \left( \left( \sqrt{\frac{1}{w}  \frac{D_2(\vec{x})}{ D_1(\vec{x}) } } \pm 2^{-\nu} \right) \Ket{0} + \Ket{*} \right) \\
= \left( \left(\frac{1}{\sqrt{w}} \sum_{\vec{x}} \sqrt{D_2(\vec{x})} \pm 2^{-\nu+1}\right) \Ket{\vec{x}} \Ket{  \sqrt{ \frac{1}{w}\frac{D_2(\vec{x})}{ D_1(\vec{x}) } } \pm 2^{-\nu} }  \right) \Ket{0} + \Ket{*}
\end{multline*}
We uncompute $\mathcal{B}$.
Then, we use quantum amplitude amplification to project on the component $\Ket{0}$. The total amplitude over this component being $1/w$, using quantum amplitude amplification we can succeed with $O(w)$ iterations.
\qed
\end{proof}




\subsubsection{Our Application of Rejection Sampling.}

We use quantum rejection sampling with the following setup. The initial distribution $D_1$ is the Klein distribution $q$. The target distribution $D_2$ is a Gaussian lattice distribution $D_{\Omega, \sigma}$. The domain $\Omega$ is the probability space of the Klein Algorithm (\autoref{alg:Klein's algorithm}) parametrised by $\sigma$, $\vec c=0$, and the integer $M>1$.

We need to determine a bound $w$ on the ratio of the two distributions: $w := \max_{\vec{x} \in \Omega} \frac{D_2(\vec{x})}{D_1(\vec{x})}\enspace,$
and we need to be able to compute $\sqrt{\frac{1}{w}  \frac{ D_2(\vec{x}) }{D_1(\vec{x}) }}$ on input $\vec{x}$ (though it is not necessary to know $w$ itself to run the algorithm). To achieve this, recall that the Gaussian lattice distribution that we are targeting has expression:
\begin{equation*}
\forall \vec{x} \in \Omega,  D_{\Omega, \sigma}(\vec x) = \frac{\rho_{\sigma}(\vec x)}{\rho_{\sigma}(\Omega)} = \frac{\exp{-\frac{\pi}{\sigma^2}||\vec x||^2}}{\sum_{\vec y\in \Omega} \exp{-\frac{\pi}{\sigma^2}||\vec y||^2}}
\end{equation*}
Following~\autoref{equation:q expression}, the Klein distribution has expression:
\begin{equation*}
\forall \vec{x} \in B^{-1}\Omega, q(B\vec{x}) = \frac{\rho_{\sigma}(B\vec x)}{\prod_{i=1}^m \rho_{\sigma_{i},\widetilde{x}_{i}}(\iintv{-2^M+\lfloor \widetilde x_i\rceil, 2^M+\lfloor \widetilde x_i\rceil})}
\end{equation*}
Given a vector $\vec x\in \R^n$, denote by $\widetilde x_{B,i}$ the coefficient $\widetilde y_i = \frac{-\sum_{j=i+1}^m R_{i,j}y_j}{R_{i,i}}$, where $\vec y = B^{-1}\vec x$. We rewrite the expression of $q$ as 
\begin{gather*}
\forall \vec{x} \in \Omega, q(\vec{x}) = \frac{\rho_{\sigma}(\vec x)}{\prod_{i=1}^m \rho_{\sigma_{i},\widetilde x_{B,i}}(\iintv{-2^M, 2^M}+\lfloor \widetilde x_{B,i}\rceil)}\enspace .\\
\Rightarrow \frac{D_{\Omega, \sigma}(\vec x)}{q(\vec x)}= \frac{\prod_{i=1}^m \rho_{\sigma_{i},\widetilde x_{B,i}}(\iintv{-2^M, 2^M}+\lfloor \widetilde x_{B,i}\rceil)}{ \rho_{\sigma}(\Omega) } \enspace.
\end{gather*}

According to~\autoref{lemma: inequality on gaussian masses on finite sets}, we always have 
\begin{align*}
    \rho_{\sigma_{i},\widetilde x_{B,i}}(\iintv{-2^M, 2^M}+\lfloor \widetilde x_{B,i}\rceil)&\leqslant C\rho_{\sigma_{i},\lfloor \widetilde x_{B,i}\rceil }(\iintv{-2^M, 2^M}+\lfloor \widetilde x_{B,i}\rceil)\\
    &\leqslant C\rho_{\sigma_{i}}(\iintv{-2^M, 2^M}),
\end{align*}
where $C:=(1-2^M\sqrt{2\pi e}\times \exp{-\pi 2^{2M}})^{-1}$. So, for all $\vec x\in \Omega$, we have 
\begin{equation*}\label{eq:w}
    \frac{D_{\Omega, \sigma}(\vec x)}{q(\vec x)} \leqslant \frac{ C^m\prod_{i=1}^{m} \rho_{\sigma_{i}}(\iintv{-2^M,2^M}) }{ \rho_{\sigma}(\Omega) } := w \enspace.
\end{equation*}
This gives our definition of $w$. 
Note that the constant $C$ is easy to compute, and gets closer to 1 as $M$ increases. 
Computing $w$ itself would be difficult, but it is not necessary. For all $\vec{x} \in \Omega$, we have:
\begin{align*}
    \frac{1}{w} \times \frac{ D_{\Omega, \sigma}(\vec x) }{q(\vec x)}  & = \frac{\rho_{\sigma}(\Omega) }{C^m\prod_{i=1}^{m} \rho_{\sigma_{i}}(\iintv{-2^M,2^M})}  \times \frac{\prod_{i=1}^m \rho_{\sigma_{i},\widetilde x_{B,i}}(\iintv{-2^M, 2^M}+\lfloor \widetilde x_{B,i}\rceil)}{ \rho_{\sigma}(\Omega) } \\
    &= \frac{\prod_{i=1}^m \rho_{\sigma_{i},\widetilde x_{B,i}}(\iintv{-2^M, 2^M}+\lfloor \widetilde x_{B,i}\rceil)}{C^m\prod_{i=1}^{m} \rho_{\sigma_{i}}(\iintv{-2^M,2^M})} \enspace.
\end{align*}

Let us then explain how to compute this quantity quantumly.
\begin{lemma}\label{lemma:implem of B}
Let $r = \max \mathsf{size}(R_{ij}, c_i', \sigma)$ as in~\autoref{theorem: Klein distrib}. Assume that the constant $C$ is precomputed classically and exactly, and $\nu \geq r$. There exists an implementation of the quantum circuit $\mathcal B:\ket{\vec x}\ket{0} \to \ket{\vec x}\Ket{\sqrt{\frac{1}{w} \times \frac{ D_2(\vec{x}) }{D_1(\vec{x}) }}\pm 2^{-\nu}}$ that has a gate count $\bigOt{m^2(mr + M + \nu) + m \nu^2}$ and uses $\bigOt{ mr + M + \nu }$ qubits.
\end{lemma}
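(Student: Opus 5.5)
The circuit $\mathcal B$ will compute the closed-form expression derived just above,
\[
\frac{1}{w}\cdot\frac{D_{\Omega,\sigma}(\vec x)}{q(\vec x)} = \prod_{i=1}^m \frac{\rho_{\sigma_i,\widetilde x_{B,i}}(\iintv{-2^M,2^M}+\lfloor \widetilde x_{B,i}\rceil)}{C\,\rho_{\sigma_i}(\iintv{-2^M,2^M})},
\]
factor by factor, then take a square root, and finally uncompute all intermediate registers. The key point is that neither $w$ nor $\rho_\sigma(\Omega)$ is needed. Only $m$ numerators depend on $\vec x$; the $m$ denominators $C\rho_{\sigma_i}(\iintv{-2^M,2^M})$ are input-independent constants.

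\textbf{Step 1 (coordinates and centers).} From $\ket{\vec x}$, compute $\vec y = B^{-1}\vec x$ in superposition. Since $\vec x\in\Omega\subset\L(B)$, $\vec y$ is an integer vector. It can be obtained by back-substitution with $R$ after applying $Q^{-1}$, or more simply we keep the integer coordinates produced by the quantum Klein sampler and skip the final multiplication by $B$. Then compute each $\widetilde x_{B,i} = -\sum_{j>i}R_{i,j}y_j/R_{i,i}$ in fixed point at precision $\nu'=\bigO{\nu+\log m}$. The bit-size analysis from the proof of \autoref{theorem: Klein distrib} applies verbatim: every quantity has $\bigOt{M+mr+\nu}$ bits, and there are $\bigO{m^2}$ multiplications. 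This step costs $\bigOt{m^2(mr+M+\nu)}$ gates and $\bigOt{mr+M+\nu}$ qubits per live register. The quantities can be recomputed one index at a time to keep the qubit count at the stated $\bigOt{mr+M+\nu}$.

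\textbf{Step 2 (numerators and product).} For each $i$, shift so that $\widetilde x_{B,i}-\lfloor\widetilde x_{B,i}\rceil\in[-1/2,1/2]$ is the center. Then apply \autoref{corollary:gaussian-mass} with parameter $\sigma_i$ to get $\rho_{\sigma_i,\widetilde x_{B,i}}(\iintv{-2^M,2^M}+\lfloor\widetilde x_{B,i}\rceil)\pm 2^{-\nu'}$, at cost $\bigOt{\nu^{3/2}}$. The hypotheses of \autoref{corollary:gaussian-mass} hold because $M$ is large, as in \autoref{theorem: Klein distrib}, and $\nu\ge r\ge\log\sigma_i$ up to constants. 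The classically precomputed constants $1/(C\rho_{\sigma_i}(\iintv{-2^M,2^M}))$, each at precision $\nu'$, multiply the $i$-th factor. We then accumulate the running product with $m$ fixed-point multiplications at $\bigOt{\nu}$ bits, each costing $\bigOt{\nu^2}$ with schoolbook arithmetic, for $\bigOt{m\nu^2}$ in total. A final fixed-point square root, for example by Newton iteration, costs $\bigOt{\nu^2}$. The main sum is therefore $\bigOt{m^2(mr+M+\nu)+m\nu^{3/2}+m\nu^2}$, which matches the claim.

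\textbf{Step 3 (error control, the main obstacle).} Each factor lies in $[0,1]$ by the bound defining $w$ (via \autoref{lemma: inequality on gaussian masses on finite sets}). Absolute errors of $2^{-\nu'}$ on the numerator could still be amplified by dividing by a small $\rho_{\sigma_i}$. The denominator satisfies $\rho_{\sigma_i}(\iintv{-2^M,2^M})\ge 1$ because it contains the term $k=0$, so division does not amplify the error. Next, a product of $m$ factors in $[0,1]$, each off by $\epsilon$, is off by at most $m\epsilon$. Finally, the square root is only $1/2$-Hölder near $0$, so $\sqrt{a\pm\epsilon}$ is within $\sqrt{\epsilon}$ of $\sqrt a$. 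This forces the internal precision to be $\nu'=2\nu+\log m+\bigO{1}=\bigO{\nu}$, and that choice leaves the asymptotic counts unchanged. Checking that the preconditions of \autoref{corollary:gaussian-mass} are satisfied uniformly over all $\sigma_i$ is the part I expect to need the most care. Afterwards we uncompute all garbage by running Steps 1 and 2 in reverse, keeping only the output register.
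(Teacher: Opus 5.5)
Your proposal is correct, but it organizes the computation differently from the paper.

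The paper evaluates all $2m$ Gaussian masses quantumly: the $m$ numerators and also the $m$ input-independent masses $\rho_{\sigma_i}(\iintv{-2^M,2^M})$. It then forms each ratio, takes $m$ separate square roots, and transduces each factor separately. So $\frac{1}{w}\frac{D_2(\vec x)}{D_1(\vec x)}$ is never written into a register; it only appears as the amplitude of the all-zero flag state. The paper's error bookkeeping uses $\rho_{\sigma_i}(\iintv{-2^M,2^M})\geq 1$ for the division and the bound $3\sigma_i+4$ on the numerator, which gives its internal precision $\nu''=\log m+\nu+\bigO{r}$.

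You instead precompute the constant denominators classically, which halves the number of quantum Gaussian-mass evaluations. You then accumulate the product in a register and take a single square root. This implements $\mathcal B$ literally as stated, with an output register, as \autoref{theorem: amplitude transduction} consumes it. The arithmetic cost ($m$ multiplications versus $m$ square roots) is $\bigOt{m\nu^2}$ either way.

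You are also more careful than the paper on two points it glosses over:
\begin{itemize}
\item re-centering at $\widetilde x_{B,i}-\lfloor \widetilde x_{B,i}\rceil$, so that \autoref{corollary:gaussian-mass} applies to the shifted interval;
\item the $\sqrt{\epsilon}$ loss of the square root, which forces roughly doubling the internal precision (asymptotically harmless).
\end{itemize}

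Two bookkeeping points need tightening.

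\textbf{Rounding error of the constants.} Rounding $1/(C\rho_{\sigma_i}(\iintv{-2^M,2^M}))$ to $\nu'$ bits contributes an error of $2^{-\nu'}$ times the numerator. The numerator can be as large as $3\sigma_i+4=2^{\bigO{r}}$. Your Step 3 only controls the numerator's own error, and with $\nu'=2\nu+\log m+\bigO{1}$ and $\nu$ close to $r$ this extra term is not small. Precompute the constants to $\nu'+\bigO{r}$ bits, or to relative precision $2^{-\nu'}$. This is exactly the $\bigO{r}$ slack the paper adds, and it remains $\bigO{\nu}$ since $\nu\geq r$.

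\textbf{Qubit count of the running product.} A reversible fixed-point running product cannot discard the previous partial product, because truncated multiplication is not injective. As written, Step 2 therefore keeps $m$ registers of $\bigO{\nu}$ bits. Your remark about recomputing ``one index at a time'' only covers Step 1. Your construction thus uses $\bigO{m\nu}$ work qubits, not the claimed $\bigOt{mr+M+\nu}$. The paper avoids a product register altogether by transducing factor by factor. Either way, this is harmless for \autoref{theorem: performance of quantum gaussian sampling algorithm}, whose qubit budget is already $\bigOt{m\nu'}$.
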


\begin{proof}
In order to compute $\frac{1}{w} \times \frac{ D_2(\vec{x}) }{D_1(\vec{x}) }$, we need to estimate $2m$ Gaussian masses, using~\autoref{corollary:gaussian-mass}. Then, we note that, since $\rho_{\sigma_{i}}(\iintv{-2^M,2^M}) \geqslant 1$, we have
\begin{align*}
    \frac{1}{\rho_{\sigma_{i}}(\iintv{-2^M,2^M})+2^{-\nu''}} &= \frac{1+2^{-\nu''}/\rho_{\sigma_{i}}(\iintv{-2^M,2^M})-2^{-\nu''}/\rho_{\sigma_{i}}(\iintv{-2^M,2^M})}{\rho_{\sigma_{i}}(\iintv{-2^M,2^M})+2^{-\nu''}}\\
    &= \frac{1}{\rho_{\sigma_{i}}(\iintv{-2^M,2^M})} - \frac{2^{-\nu''}}{\rho_{\sigma_{i}}(\iintv{-2^M,2^M})(\rho_{\sigma_{i}}(\iintv{-2^M,2^M})+2^{-\nu''})}\\
    &\geqslant \frac{1}{\rho_{\sigma_{i}}(\iintv{-2^M,2^M})} - 2^{-\nu''}
\end{align*}
\begin{equation*}
\text{and likewise  } \qquad  \frac{1}{\rho_{\sigma_{i}}(\iintv{-2^M,2^M})-2^{-\nu''}}\leqslant \frac{1}{\rho_{\sigma_{i}}(\iintv{-2^M,2^M})} + 2^{-\nu''} \enspace.
\end{equation*}
Therefore, assuming $\rho_{\sigma_{i}}(\iintv{-2^M,2^M})$ and $\rho_{\sigma_{i},\widetilde x_{B,i}}(\iintv{-2^M, 2^M}+\lfloor \widetilde x_{B,i}\rceil)$ are computed with precision $2^{-\nu''}$, we can compute $\frac{\rho_{\sigma_{i},\widetilde x_{B,i}}(\iintv{-2^M, 2^M}+\lfloor \widetilde x_{B,i}\rceil)}{\rho_{\sigma_{i}}(\iintv{-2^M,2^M})}$ up to precision 
\begin{equation*}
    2^{-\nu''}\left(\rho_{\sigma_{i},\widetilde x_{B,i}}(\iintv{-2^M, 2^M}+\lfloor \widetilde x_{B,i}\rceil) + 1/\rho_{\sigma_{i}}(\iintv{-2^M,2^M})\right)+2^{-2\nu''}\enspace.
\end{equation*}
We already know that we have $1/\rho_{\sigma_{i}}(\iintv{-2^M,2^M})\leqslant 1$, and by Lemma~\ref{lemma:rho sigma c} and~\ref{lemma:rho Z approx}, we have:
\begin{equation*}
\rho_{\sigma_{i},\widetilde x_{B,i}}(\iintv{-2^M, 2^M}+\lfloor \widetilde x_{B,i}\rceil)\leqslant \rho_{\sigma_{i},\widetilde x_{B,i}}(\Z)\leqslant \rho_{\sigma_{i}}(\Z)\leqslant 3\sigma_i + 4 \enspace.
\end{equation*}
So we can compute $\frac{\rho_{\sigma_{i},\widetilde x_{B,i}}(\iintv{-2^M, 2^M}+\lfloor \widetilde x_{B,i}\rceil)}{\rho_{\sigma_{i}}(\iintv{-2^M,2^M})}$ up to precision $2^{-\nu''} (3\sigma_i + 4) +2^{-2\nu''}$.
Setting $\nu'' = \nu'+\bigO{ \log(\sigma_i + 1) }$ allows to get an additive precision of $2^{-\nu'}$. Finally, since $\sigma_i = \sigma / R_{i,i}$ and since we defined $r \geq \max \mathsf{size}( R_{i,i}, \sigma)$, we can bound $\nu'' = \nu' + \bigO{r}$. By~\autoref{lemma: inequality on gaussian masses on finite sets}, we have
\begin{equation}\label{eq:amplitude-bound-1}
    \frac{\rho_{\sigma_{i},\widetilde x_{B,i}}(\iintv{-2^M, 2^M}+\lfloor \widetilde x_{B,i}\rceil)}{C\rho_{\sigma_{i}}(\iintv{-2^M,2^M})}\leqslant 1\enspace.
\end{equation}
Therefore, if we can compute each $\frac{\rho_{\sigma_{i},\widetilde x_{B,i}}(\iintv{-2^M, 2^M}+\lfloor \widetilde x_{B,i}\rceil)}{\rho_{\sigma_{i}}(\iintv{-2^M,2^M})}$ up to precision $2^{-\nu'}$, then we can compute $\frac{1}{w} \times \frac{ D_2(\vec{x}) }{D_1(\vec{x}) }$ up to precision $m 2^{-\nu'}$. Setting $\nu' = \nu+\log(m)$ allows precision $2^{-\nu}$. This means that $\nu'' = \log m + \nu + \bigO{r}$ is the precision required for estimating the Gaussian masses. 

For each $i$, we then apply transduction separately:
\begin{equation}
\ket{\vec{x}} \ket{0} \mapsto \ket{\vec{x}}\left( \sqrt{  \frac{\rho_{\sigma_{i},\widetilde x_{B,i}}(\iintv{-2^M, 2^M}+\lfloor \widetilde x_{B,i}\rceil)}{C\rho_{\sigma_{i}}(\iintv{-2^M,2^M})}  }\ket{0} + \ket{*} \right) \enspace.
\end{equation}
The quantity $\frac{1}{w} \times \frac{ D_2(\vec{x}) }{D_1(\vec{x}) }$ then appears naturally as the amplitude over the 0-state. 
The gate and qubit count can be summarized as follows:
\begin{itemize}
\item We need to recompute the $\widetilde x_{B,i}$, similarly to~\autoref{alg:quantum-klein}. The gate count will be $\bigOt{m^2(mr + M + \nu)}$ gates. We also need $\bigOt{ mr + M + \nu }$ qubits to store the $x_i$, but this is the input of the algorithm.
\item We need to compute $2m$ times a Gaussian mass at precision $\log m + \nu + \bigO{r}$. By~\autoref{corollary:gaussian-mass}, this takes $\bigOt{m (\log m + \nu + r)^{3/2}} = \bigOt{m (\nu + r)^{3/2}}$ gates and $\bigOt{\log m + \nu + r}$ qubits.
\item We need to compute $m$ times a square root of a number smaller than 1, with additive precision $2^{-\nu}$.
\end{itemize}
For the latter, a simple method which uses $\bigO{\nu}$ space and $\bigO{\nu^2}$ gates is to use an iterative integer square root algorithm~\cite{knuth2011taocp2}.
Since we assumed that $\nu \geq r$, we can simplify the gate count as follows:
\begin{equation*}
\bigOt{m^2(mr + M + \nu) + m (\nu + r)^{3/2} + m \nu^2} = \bigOt{m^2(mr + M + \nu) + m \nu^2}
\end{equation*}
and the qubit count is $\bigOt{ mr + M + \nu }$ (dominated by the storage of $x_i$).
\qed
\end{proof}


\subsubsection{Constructing a New Gaussian State.}
We can now combine the quantum Klein sampler (\autoref{theorem: Klein distrib}), and the unitary of~\autoref{lemma:implem of B} as components of the quantum rejection sampling algorithm (\autoref{theorem: amplitude transduction}).

%

\begin{theorem}\label{theorem: performance of quantum gaussian sampling algorithm}
Consider a lattice $\L$, and $\Omega$ a finite subset of $\L$ that corresponds to the probability space of the output distribution of~\autoref{alg:Klein's algorithm}, parametrised by $\sigma$, $\vec c=0$ and $M\in \N$. Let $r = \max \mathsf{size}(R_{ij}, c_i', \sigma)$ as in~\autoref{theorem: Klein distrib} and assume $M \geq c(\log r + \log \nu)$ for some universal constant $c$.

There exists a quantum algorithm of gate count:
\begin{equation*}
\sqrt{\frac{ C^m\prod_{i=1}^{m} \rho_{\sigma_{i}}(\iintv{-2^M,2^M}) }{ \rho_{\sigma}(\Omega) }} \times \bigOt{ m M (\nu + mM + m^2 r)^2 }
\end{equation*}
and of qubit count $\bigOt{m (\nu + mM + m^2 r)} $ that outputs a distribution $D$ such that $d_{TV}(D,D_{\L,\sigma})\leqslant 2^{-\nu}$. 
\end{theorem}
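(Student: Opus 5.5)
We combine the three ingredients already established: the quantum Klein sampler of~\autoref{theorem: Klein distrib} plays the role of $\mathcal{A}_1$ with $D_1 = q$, the circuit of~\autoref{lemma:implem of B} plays the role of $\mathcal{B}$, and~\autoref{theorem: amplitude transduction} (quantum rejection sampling via amplitude amplification) turns these into a state close to $\ket{D_{\Omega,\sigma}}$. The number of amplitude-amplification rounds is governed by the bound $w = C^m\prod_i \rho_{\sigma_i}(\iintv{-2^M,2^M})/\rho_\sigma(\Omega)$ derived just before~\autoref{lemma:implem of B}, which already satisfies $D_{\Omega,\sigma}(\vec x)/q(\vec x)\leq w$ on $\Omega$. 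This gives the $\sqrt{w}$ prefactor. Since $w$ is not known exactly, we would use fixed-point amplitude amplification~\cite{yoder2014fixed} with a lower bound $1/w$ on the success amplitude squared. This costs only a polynomial (logarithmic in the target error) overhead, which is absorbed in the $\widetilde{O}$.

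Next we fix the internal precision. We run both $\mathcal{A}_1$ and $\mathcal{B}$ at a precision $\nu'$ that yields a state $\ket{D_{\Omega,\sigma}\pm 2^{-\nu'+1}}$. By the bound $\frac32|X|2^{-\nu'}$ on the total variation distance of approximate states, we need $\nu' \geq \nu + \log_2|\Omega| + O(1)$. From the proof of~\autoref{theorem: Klein distrib}, every coordinate uses $O(M + m\log(2^r m)+\nu)$ bits, so $\log|\Omega| = O(m(M + mr))$. Hence $\nu' = O(\nu + mM + m^2 r)$. We would also check that the hypotheses of the earlier results ($\nu'\geq c(r+\log m)$, $\nu'\geq r$, and the constraint on $M$) hold for this $\nu'$, possibly after enlarging the constant $c$.

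We then bound the distance to $D_{\L,\sigma}$ by the triangle inequality: $d_{TV}(D, D_{\Omega,\sigma}) + d_{TV}(D_{\Omega,\sigma}, D_{\L,\sigma})$. The second term is at most $\rho_\sigma(\L\setminus\Omega)$ by~\autoref{eq: dtv of gaussian distrib over finite space vs infinite space}. Since $\Omega \supset B\prod_i\iintv{-2^{M-1},2^{M-1}}$,~\autoref{prop: bound on some dtv and gaussian mass} bounds this by a quantity of the form $2^{O(mM)}e^{-\pi m 2^{2M-2}}(3\sigma/\min_i\|\widetilde{\vec b}_i\|+4)^m$. This is at most $2^{-\nu-1}$ once $M \geq c(\log r + \log \nu)$ with $c$ large enough, because $\log(\sigma/\|\widetilde{\vec b}_i\|)=O(r)$ and the doubly exponential decay in $M$ dominates. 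I expect this tail bound, with its interplay between $M$, $r$ and $\nu$, to be the main obstacle. The fallback is to absorb any remaining mismatch into the choice of $c$.

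Finally we substitute $\nu'$ into the per-iteration costs. The Klein sampler costs $\widetilde{O}(m^2(mr+M+\nu') + mM\nu'^{3/2})$ and $\mathcal{B}$ costs $\widetilde{O}(m^2(mr+M+\nu')+m\nu'^2)$. Both are dominated by $\widetilde{O}(mM\nu'^2) = \widetilde{O}(mM(\nu+mM+m^2r)^2)$: the term $m\nu'^2$ is at most $mM\nu'^2$, and $m^2(mr+M+\nu')\leq m\nu'^2$. Multiplying by $\sqrt{w}$ gives the claimed gate count. For the qubit count, the Klein sampler uses $\widetilde{O}(m(M+mr+\nu'))$ qubits, which is $\widetilde{O}(m(\nu+mM+m^2r))$ and dominates the qubits used by $\mathcal{B}$ and by the amplification ancillas.
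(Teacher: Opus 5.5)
Your proposal is correct and follows the paper's proof essentially step for step: $O(\sqrt{w})$ rounds of \autoref{theorem: amplitude transduction} with the quantum Klein sampler and the circuit $\mathcal B$, internal precision $\nu' = \nu + \log_2|\Omega|$ with $\log_2|\Omega| = \bigOt{mM + m^2 r}$, a per-iterate cost of $\bigOt{mM(\nu')^2}$, and the triangle inequality together with \autoref{prop: bound on some dtv and gaussian mass} for the tail (your split of the error budget into two halves of $2^{-\nu-1}$ is slightly cleaner than the paper's). The one weak point is the fixed-point aside: fixed-point search needs an a priori known lower bound on the success probability $1/w$, which you do not have because $w$ is hard to compute, so you should instead rely on the average $O(\sqrt{w})$ guarantee of \autoref{theorem: amplitude transduction}, as the paper does.
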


\begin{proof}
According to~\autoref{theorem: amplitude transduction}, we need $\bigO{\sqrt{w}}$ calls to the algorithms of~\autoref{theorem: Klein distrib} and~\autoref{lemma:implem of B}, in order to produce the wanted distribution, where: $w =\frac{ C^m \prod_{i=1}^{m} \rho_{\sigma_{i}}(\iintv{-2^M,2^M}) }{ \rho_{\sigma}(\Omega) } $ (\autoref{eq:w}). 

We set the precision of the unitaries to $\nu' = \nu+\log_2(|\Omega|)$, where $\log_2 |\Omega| = m \log_2( (2^r m)^m 2^M) = \bigOt{ m M + m^2 r} $ due to the bounding of $x_i$ given previously. 

Due to this, the qubit count is: $\bigOt{ m(mr + M + \nu') } = \bigOt{m \nu'}$. The gate count of the quantum Klein sampler is $\bigOt{ m^2 \nu' + m M (\nu')^{3/2} } = \bigOt{ m M (\nu')^{3/2} }$ (observe that $\nu'$ is bigger than $m^2$). The gate count of the circuit $\mathcal{B}$ is given by $\bigOt{m^2 \nu' + m (\nu')^2} = \bigOt{ m (\nu')^2 }$. Summing both gives us the gate count for the iterate of the rejection sampling algorithm, which we upper bound by $\bigOt{ m M (\nu')^2 }$ for simplicity.

The precision chosen ensures that $|\Omega|2^{-(\nu+\log_2(|\Omega|))} = 2^{-\nu}$, so $d_{TV}(D,D_{\Omega,\sigma})$ is upper bounded by $2^{-\nu}$. Then, by~\autoref{eq: dtv of gaussian distrib over finite space vs infinite space} we have:
    \begin{align*}
        d_{TV}(D,D_{\L,\sigma})&\leqslant d_{TV}(D,D_{\Omega,\sigma}) + d_{TV}(D_{\Omega,\sigma},D_{\L,\sigma}) \leqslant 2^{-\nu}+\rho_\sigma(\L\backslash \Omega) \enspace. 
    \end{align*}
   
Finally, we use~\autoref{prop: bound on some dtv and gaussian mass} to bound $d_{TV}(D,D_{\L,\sigma})$ as follows:
\begin{equation}
d_{TV}(D,D_{\L,\sigma}) \leq 2 \left(2^{M-1}\sqrt{2\pi e}\right)^m\times \exp{-\pi\cdot m\cdot 2^{2M-2}}  (3\sigma / \min_i \| \widetilde{\vec b}_i\| + 4)^m
\end{equation}
Choosing $M$ a multiple of $\log r + \log \nu$ (at least) we can make this smaller than $2^{-\nu}$, finishing the proof. 
\qed
\end{proof}



\subsection{Comparison with Classical Sampling}\label{sec:comparison}

On a classical computer, the Klein sampler can be used to sample for large values of  $\sigma$ (which depends on the qualitiy of the basis) in polynomial time. For smaller values of $ \sigma$, the Markov Chain Monte Carlo (MCMC) algorithm~\cite{MTMK} allows to sample at the expense of a larger time complexity. The cost of approximately sampling from $D_{\L(B),\sigma}$ using this algorithm is
\begin{equation*}
    \poly(m) \times \nu\times \frac{1}{\Delta}, \quad \Delta = \frac{\rho_{\sigma}(\L(B))}{\prod_{i=1}^m \rho_{\sigma/||\widetilde{\vec b}_i||}(\Z)} \enspace,
\end{equation*}
where $2^{-\nu}$ indicates the distance to the discrete Gaussian we target. The $\widetilde{\vec b}_i$ are again the columns of the Gram-Schmidt transform of $B$. 

With~\autoref{theorem: performance of quantum gaussian sampling algorithm}, we proved that we can sample quantumly from a distribution that can be made arbitrarily close to $D_{\L(B),\sigma}$, in time:
\begin{gather*}
    \sqrt{\frac{ C^m\prod_{i=1}^{m} \rho_{\sigma_{i}}(\iintv{-2^M,2^M}) }{ \rho_{\sigma}(\Omega) }} \times  \bigOt{ m M (\nu + mM + m^2 r)^2 },\\
    \text{where }C :=\frac{1}{1-2^M\sqrt{2\pi e}\times \exp{-\pi 2^{2M}}} \enspace.
\end{gather*}
Suppose that $M = km$ where $k > 1$ is a constant. We can notice that, as $m$ grows, $C^m$ is very close to 1, and therefore has a negligible influence on the complexity:
\begin{align*}
C^m =  \left( 1-2^M\sqrt{2\pi e}\times \exp{-\pi 2^{2M}} \right)^{-m} \simeq 1 + m 2^M\sqrt{2\pi e}\times \exp{-\pi 2^{2M}} = 1 + o(1) \enspace.
\end{align*}



Moreover, since $\Omega\supset B\prod_{i=1}^m \iintv{-2^{M-1},2^{M-1}}$, if $M=km$ with a constant $k>1$, we can reuse~\autoref{lemma: inequality on gaussian masses on finite sets} and~\autoref{prop: bound on some dtv and gaussian mass} to show that:
\begin{equation*}
    w = \frac{\prod_{i=1}^{m} \rho_{\sigma_{i}}(\iintv{-2^M,2^M}) }{ \rho_{\sigma}(\Omega) }\simeq \frac{\prod_{i=1}^{m} \rho_{\sigma_{i}}(\Z) }{ \rho_{\sigma}(\L(B)) } = \frac{1}{\Delta} \enspace,
\end{equation*}
up to a polynomial factor. Therefore, if we sample from $D_{\L(B),\sigma}$ using~\autoref{theorem: performance of quantum gaussian sampling algorithm}, we roughly gain a quadratic speedup with respect to the MCMC algorithm, since the complexity was $1 / \Delta$ (up to polynomial factors in $\nu, m$ and $r = \max \mathsf{size}(R_{ij}, c_i', \sigma)$) and becomes $1 / \sqrt{\Delta}$.


\section{Application to Dual Attacks on LWE}\label{sec: Improving the quantum dual attack}

In this section, we apply our quantum discrete Gaussian sampling algorithm to dual attacks on the LWE problem. Recall from~\autoref{sec:LWE-SIS} that the search-LWE problem is to find $\vec s \in \Z_q^n$, given a pair $(A,\vec b)$, where $A$ is a matrix in $\Z_q^{m\times n}$, and $\vec b := A\vec s + \vec e$, with $\vec e$ a small error vector in $\Z_q^m$.

The general idea behind dual attacks goes as follows. One splits the matrix $A$ column-wise, and writes $A = [\Aguess{}||\Adual{}]$. One also writes $\vec s = [\sguess{}|\sdual{}]$ accordingly, so that we have $\vec b =  \Aguess{} \sguess{} + \Adual{} \sdual{} + \vec e \enspace.$
Then, one samples short vectors according to a discrete Gaussian distribution, in the dual of the lattice generated by $\Adual{}$. Finally, one guesses $\sguess{}$. Whether the guess is correct is checked with the help of the short vectors sampled just before.


For the first step, the typical approach is to use sieving algorithms in order to obtain short vectors in the desired lattice~\cite{DBLP:conf/eurocrypt/Albrecht17,DBLP:conf/indocrypt/EspitauJK20}. For the second step, the most efficient approach is to use a Fast Fourier Transform (FFT) to speed up the test of candidates for $\sguess{}$~\cite{DBLP:conf/asiacrypt/GuoJ21,MATZOV,DBLP:journals/iacr/CarrierST22,DBLP:conf/crypto/DucasP23,DBLP:conf/crypto/CarrierMST25}. These two strategies are combined to obtained the current most efficient versions of the attack. While these are classical algorithms, quantum speedups are known for sieving~\cite{DBLP:conf/pqcrypto/LaarhovenMP13,laarhoven2015,DBLP:conf/asiacrypt/ChaillouxL21}, which is an important building blocks of all attacks on LWE, and quantum dual attacks were studied in~\cite{DBLP:journals/iacr/AlbrechtS22}.

However, Ducas and Pulles~\cite{DBLP:conf/crypto/DucasP23} pointed out a problem in several dual attacks using sieving algorithms along with the FFT. One of the ground statistical assumptions supporting the correctness of these dual attacks, is false in a lot of cases. This especially affects~\cite{DBLP:conf/asiacrypt/GuoJ21,MATZOV}, and prompted further scrutiny into the correctness of dual attacks. As a result, Pouly and Shen~\cite{DBLP:conf/eurocrypt/PoulyS24} presented a new statistical analysis to justify that a revisited dual attack works -- which they call a ``modern dual attack'', that allows for more rigorous correctness analysis. They also gave a quantum variant of this modern dual attack.

In this section, we first describe the ``modern dual attack'', both classical and quantum, of Pouly and Shen~\cite{DBLP:conf/eurocrypt/PoulyS24}. We then propose two variants of this quantum algorithm. The first variant naively plugs the quantum Gaussian sampling algorithm of~\autoref{sec: quantum ways of samplig from pi} into the first step (sampling short vectors). This improves the time complexity of the entire attack, but similarly to~\cite{DBLP:conf/eurocrypt/PoulyS24}, requires qRAM in the second step. The second variant combines the quantum Gaussian sampling algorithm with the quantum algorithm for mean estimation of~\autoref{corollary:montanaro bis}, directly into the second step. This algorithm improves over the classical complexity of~\cite{DBLP:conf/eurocrypt/PoulyS24} without using qRAM.


\subsection{The Modern Dual Attack on LWE}

Here we present the idea of the ``modern dual attack'' of Pouly and Shen~\cite{DBLP:conf/eurocrypt/PoulyS24}, first the classical algorithm (\autoref{alg:classical modern dual attack}), then its quantum version. 

\subsubsection{The Classical Version and its Complexity.}

Assume that $(A,\vec b)$ is sampled from $LWE(m,\vec s, \chi_e)$. The secret $\vec s$ is in $\Z_q^n$. Let us write $n = \nguess{} +\ndual{}$. We split $\vec s$ into two parts $\sguess{} \in \Z_q^{\nguess{}}$ and $\sdual{} \in \Z_q^{\ndual{}}$, and we also split $A := [\Aguess{} \| \Adual{} ]$ accordingly. Now, we have $\vec b = \Aguess{} \sguess{} + \Adual{} \sdual{} + \vec e \enspace.$
We guess the coefficients of $\sguess{}$. Let us write this guess $\osguess{}$ to avoid confusion. The intuition that underlies the attack is that the good $\sguess{}$ minimizes $\mathrm{dist}(\vec{b}-\Aguess{} \osguess{},\L_q(\Adual{}))$. The goal is therefore to estimate this distance for all $\osguess{}$ with sufficient precision to spot the best candidate.
Fortunately, the periodic Gaussian function $f_{\L_q(\Adual{}),1/\sigma}(\vec y)$, defined in~\autoref{eq:periodic Gaussian function}, can serve as a measure of the distance of a point $\vec y$ from the lattice $\L_q(\Adual{})$: closer points will have a larger value. More precisely, we have

\begin{restatable}{lemma}{lemdistance}[{\cite[Lemma 8]{DBLP:conf/eurocrypt/PoulyS24}}]\label{lemma:distance from a lattice estimation}
    Consider $\L\subset \R^m$ a lattice and $\sigma >0$. Then, for any $\vec x \in \R^m$
    \begin{itemize}
        \item $f_{\L,1/\sigma}(\vec x)\geqslant \rho_{1/\sigma}(\mathrm{dist}(\vec x,\L))$,
        \item if $\mathrm{dist}(\vec x,\L)\geqslant \tau := \frac{1}{\sigma}\sqrt{m/2\pi}$, then $f_{\L,1/\sigma}(\vec x)\leqslant \rho_{1/\sigma}(\mathrm{dist}(\vec x,\L)-\tau)$.
    \end{itemize}
\end{restatable}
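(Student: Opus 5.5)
Both bounds follow from writing $f_{\L,1/\sigma}(\vec x)=\rho_{1/\sigma}(\L+\vec x)/\rho_{1/\sigma}(\L)$ and using the shift invariance $\L+\vec x=\L+(\vec x-\vec v)$ for any $\vec v\in\L$. We choose $\vec v$ to be a closest lattice vector to $\vec x$ and set $\vec u:=\vec x-\vec v$, so that $\|\vec u\|=\mathrm{dist}(\vec x,\L)=:d$. Throughout, $s:=1/\sigma$ denotes the Gaussian width.

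\emph{First bullet.} By the Poisson summation formula,
\begin{equation*}
\rho_s(\L+\vec u)=\frac{s^m}{\mathrm{vol}(\L)}\sum_{\vec w\in\hat\L}\rho_{1/s}(\vec w)\cos(2\pi\langle \vec w,\vec u\rangle).
\end{equation*}
This is essentially \autoref{lemma:Fourier serie of f L,sigma} combined with \autoref{lemma:fourier serie equal to f}. The formula shows that $\vec y\mapsto f_{\L,s}(\vec y)$ is a positive combination of cosines. The cleanest route, however, is to expand the Gaussian directly:
\begin{equation*}
\rho_s(\L+\vec u)=\sum_{\vec y\in\L}\rho_s(\vec y+\vec u)=\rho_s(\vec u)\sum_{\vec y\in\L}\rho_s(\vec y)\,e^{-2\pi\langle\vec y,\vec u\rangle/s^2}.
\end{equation*}
Pairing $\vec y$ with $-\vec y$ turns each pair of exponentials into a $\cosh$, and $\cosh\geq1$. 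Hence $\rho_s(\L+\vec u)\geq\rho_s(\vec u)\rho_s(\L)$, that is, $f_{\L,s}(\vec x)\geq\rho_s(d)$.

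\emph{Second bullet.} Here we need an upper bound on $\rho_s(\L+\vec u)/\rho_s(\L)$. This is a Banaszczyk-type tail estimate, and it is the main obstacle. The plan is the standard argument: for $t>0$,
\begin{equation*}
\rho_s(\L+\vec u)\leq\rho_s(\L)\,\rho_s(d-\tau)
\end{equation*}
is obtained by comparing with the continuous Gaussian. Since $\vec v$ is a closest lattice vector, every $\vec y\in\L+\vec u$ satisfies $\|\vec y\|\geq d$. For $\lambda\in(0,1]$ we would use the inequality $\rho_s(\L+\vec u)\leq \lambda^{-m}\rho_s(\L)\cdot\sup_{\|\vec y\|\geq d}e^{-\pi(1-\lambda^2)\|\vec y\|^2/s^2}$, which is proved via Poisson summation by comparing $\rho_{s}$ and $\rho_{s/\lambda}$ (Banaszczyk's Lemma 1.5 technique). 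This gives
\begin{equation*}
f_{\L,s}(\vec x)\leq \lambda^{-m}e^{-\pi(1-\lambda^2)d^2/s^2}.
\end{equation*}
We then optimize over $\lambda$. Writing $\lambda^{-m}=e^{-m\ln\lambda}$, the optimal choice is $\lambda^2=\frac{m s^2}{2\pi d^2}=\tau^2/d^2$, which is admissible precisely when $d\geq\tau$. With this choice the exponent becomes $-\pi(d^2-\tau^2)/s^2+\frac m2\ln(d^2/\tau^2)$.

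It remains to check that this exponent is at most $-\pi(d-\tau)^2/s^2$. Using $\frac m2=\pi\tau^2/s^2$, the inequality reduces to
\begin{equation*}
\tau^2\ln(d^2/\tau^2)\leq d^2-\tau^2-(d-\tau)^2=2\tau(d-\tau).
\end{equation*}
Setting $r=d/\tau\geq1$, this becomes $\ln r\leq r-1$, which holds. This yields $f_{\L,1/\sigma}(\vec x)\leq\rho_{1/\sigma}(d-\tau)$.

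The delicate step is establishing the comparison inequality between $\rho_s(\L+\vec u)$ and $\rho_{s/\lambda}$ masses. If a rigorous derivation proves cumbersome, we would simply cite \cite[Lemma 8]{DBLP:conf/eurocrypt/PoulyS24} and Banaszczyk's lemma.
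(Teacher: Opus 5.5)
Your proof is correct. The paper does not prove this lemma; it imports it verbatim from \cite[Lemma 8]{DBLP:conf/eurocrypt/PoulyS24}. So what you have is a self-contained derivation rather than a variant of an argument in the text. You use the $\cosh$ pairing $\rho_s(\L+\vec u)=\rho_s(\vec u)\sum_{\vec y\in\L}\rho_s(\vec y)\cosh(2\pi\langle\vec y,\vec u\rangle/s^2)$ for the lower bound, and Banaszczyk's rescaling with the choice $\lambda=\tau/d$ for the upper bound. This has two merits over the bare citation. It explains where $\tau=\frac{1}{\sigma}\sqrt{m/2\pi}$ comes from: it is exactly the threshold at which the optimal $\lambda$ becomes admissible. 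It also shows that the clean form $\rho_{1/\sigma}(d-\tau)$ is just a relaxation, via $\ln r\le r-1$, of the Banaszczyk-type bound $\lambda^{-m}e^{-\pi(1-\lambda^2)d^2/s^2}=(t\sqrt{2\pi e}\,e^{-\pi t^2})^m$ with $t=d/(s\sqrt m)$. The citation costs the paper nothing but hides all of this.

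The comparison step should be proved rather than cited, and it takes only three lines. Write $\rho_s(\vec y)=\rho_{s/\lambda}(\vec y)\,e^{-\pi(1-\lambda^2)\|\vec y\|^2/s^2}$ and use $\|\vec y\|\ge d$ on $\L+\vec u$. Then apply:
(a) $\rho_{s/\lambda}(\L+\vec u)\le\rho_{s/\lambda}(\L)$, which is \autoref{lemma:rho sigma c};
(b) $\rho_{s/\lambda}(\L)=\frac{(s/\lambda)^m}{\mathrm{vol}(\L)}\rho_{\lambda/s}(\hat\L)\le\lambda^{-m}\rho_s(\L)$, by Poisson summation, since $\rho_{\lambda/s}\le\rho_{1/s}$ pointwise for $\lambda\le 1$.

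Your fallback of ``citing Banaszczyk's lemma'' is not safe. The coset version of that bound, as recorded in the second inequality of \autoref{lemma:rho sigma c on a finite subset}, carries a factor $2$. The quantity $2(t\sqrt{2\pi e}\,e^{-\pi t^2})^m$ exceeds $\rho_{1/\sigma}(d-\tau)$ when $d$ is close to $\tau$: as $d\to\tau^{+}$ it tends to $2$, while the target tends to $1$. Routing through the unshifted lattice via (a) is exactly what removes that factor, so make it explicit.
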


which can be loosely interpreted as:
\begin{equation*}
    f_{\L_q(\Adual{}),1/\sigma}(\vec y) \simeq \rho_{1/\sigma}(\mathrm{dist}(\vec y, \L_q(\Adual{}))) \enspace.
\end{equation*}
To approximate this function, we build a set $W = \{\vec w_1,\dotsc, \vec w_N\}$ of vectors sampled from $D_{\L_q^\perp(\Adual{}),q\sigma}$.
Then, we compute
\begin{equation*}
    g_W(\vec y) := \frac{1}{N}\sum_{j=1}^N \cos(2\pi\langle \vec y,\vec w_j\rangle/q), \ \vec y := \vec b-\Aguess{} \osguess{} \enspace.
\end{equation*}
The function $g_W$ is close to the function $f_{\L_q(\Adual{}),1/\sigma}$ (\autoref{theorem: Poitwise approximation lemma}). This gives the following algorithm (note that a Discrete Fourier Transform is used to accelerate the computation of scores $S$).


\begin{algorithm}[tb]
\caption{Classical modern dual attack~\cite[Algorithm 2]{DBLP:conf/eurocrypt/PoulyS24}}\label{alg:classical modern dual attack}
\begin{algorithmic}[1]
\Statex \textbf{Input:} Integers $m$, $n = \nguess{}+\ndual{}$, $q$, $N\in \mathbb N^*$. A LWE sample $(A,\vec b)$, a list $W = \{\vec w_1,\dotsc, \vec w_N\}$ of vectors sampled from $D_{\L_q^\perp(\Adual{}),q\sigma}$
\smallskip
\Statex \textbf{Output:} A guess of $\nguess{}$ coordinates of $\vec s$, or $\bot$
\medskip
\State $\sguess{}\leftarrow \perp$
\State  $S_{max} \leftarrow 0$

\For{$\osguess{}\in \Z_q^{\nguess{}}$}
     \State Compute each $y_1,\dotsc,y_N$, where $y_j := \vec w_j^T(b-\Aguess{} \osguess{})$
      \State $S\leftarrow \sum_{j=1}^N \cos(2\pi y_j/q)$
       \If{$S\geqslant S_{max}$}
           \State $S_{max}\leftarrow S$
            \State $\sguess{} \leftarrow \osguess{}$\
        \EndIf
\EndFor
\State \textbf{Return} $\sguess{}$
\end{algorithmic}
\end{algorithm}

\begin{theorem}[{\cite[Theorem 6]{DBLP:conf/eurocrypt/PoulyS24}}]
    Let $A\in \Z_q^{m\times n}$, $\vec e\in \Z^m$, $\vec s\in \Z_q^n$, $\sigma,\delta>0$, and $N\in \mathbb N$. Let $\tau := \frac{1}{\sigma}\sqrt{m/2\pi}$. Assume that $m\geqslant n$, $A$ has full rank, $\lambda_1(\L_q(A))\geqslant \tau + \|e\|$, and 
    \begin{equation*}
        \rho_{1/\sigma}(\vec e)-\rho_{1/\sigma}(\lambda_1(\L_q(A))- \|e \|-\tau)>2\delta.
    \end{equation*}
    Let $\vec b = A\vec s + \vec e \mod q$. Let $W = \{\vec w_1,\dotsc,\vec w_N\}$ be vectors sampled from $D_{\L_q^\perp(\Adual{}),q\sigma}$. Then,~\autoref{alg:classical modern dual attack} with inputs $(m,n=\nguess{}+\ndual{},q,N,(A,\vec b),W)$ runs in time 
    \begin{equation*}
        \poly(m,n)\times (N+q^{\nguess{}})
    \end{equation*}
    and returns $\sguess{}$ with probability at least $1-q^m\times 2^{-\Omega(N\delta^2)}$ over the choice of $W$.
\end{theorem}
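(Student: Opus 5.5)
The plan is to split the statement into a running-time bound and a correctness bound. The correctness bound comes from combining the pointwise approximation lemma (\autoref{theorem: Poitwise approximation lemma}) with the distance estimates of \autoref{lemma:distance from a lattice estimation}.

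\textbf{Running time.} For each guess $\osguess{}$, computing all $y_j=\vec w_j^T(\vec b-\Aguess{}\osguess{})$ naively costs $N\cdot\poly(m,n)$ per guess. To reach $N+q^{\nguess{}}$ instead of $Nq^{\nguess{}}$, I would use the DFT as indicated before the algorithm. Write $y_j=\vec w_j^T\vec b-(\Aguess{}^T\vec w_j)^T\osguess{}$ modulo $q$. Then the score is
\[
S(\osguess{})=\mathrm{Re}\sum_{j}\exp{2\pi i\,\vec w_j^T\vec b/q}\,\exp{-2\pi i\langle \Aguess{}^T\vec w_j,\osguess{}\rangle/q}.
\]
This is the Fourier transform over $\Z_q^{\nguess{}}$ of the table $T(\vec u)=\sum_{j:\Aguess{}^T\vec w_j=\vec u}\exp{2\pi i\,\vec w_j^T\vec b/q}$. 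The table is built in time $N\poly(m,n)$, and a fast Fourier transform on $\Z_q^{\nguess{}}$ gives all scores in time $\poly\cdot q^{\nguess{}}$. The final maximum scan costs $q^{\nguess{}}$.

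\textbf{Correctness.} The target quantity is
\[
g_W(\vec y)=\frac1N\sum_j\cos(2\pi\langle\vec y,\vec w_j\rangle/q)=h_W(\vec y),
\]
where $h_W$ uses the dual vectors $\vec w_j/q\in\widehat{\L_q(\Adual{})}=\frac1q\L_q^\perp(\Adual{})$. These are distributed as $D_{\widehat{\L},1/\sigma}$ with $\L=\L_q(\Adual{})$. By \autoref{lemma:Fourier serie of f L,sigma} this is exactly the Fourier series of $f_{\L,1/\sigma}$, so \autoref{theorem: Poitwise approximation lemma} applies. Take $X$ to be a set of representatives of $\Z^m/q\Z^m$, so $|X|\le q^m$. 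With probability at least $1-q^m2^{-\Omega(N\delta^2)}$, we get $|g_W(\vec y)-f_{\L,1/\sigma}(\vec y)|\le\delta$ for every integer $\vec y$, in particular for every $\vec y=\vec b-\Aguess{}\osguess{}$.

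\textbf{Separating the correct guess.} For the correct guess, $\vec y=\Adual{}\sdual{}+\vec e$, so $\mathrm{dist}(\vec y,\L)\le\|\vec e\|$. The first item of \autoref{lemma:distance from a lattice estimation} then gives $f\ge\rho_{1/\sigma}(\vec e)$.

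For a wrong guess, $\vec y=\Aguess{}(\sguess{}-\osguess{})+\Adual{}\sdual{}+\vec e$. The point $\Aguess{}(\sguess{}-\osguess{})+\Adual{}\sdual{}$ lies in $\L_q(A)$. Using full rank of $A$ and $\osguess{}\neq\sguess{}$ modulo $q$, any vector of $\L_q(\Adual{})$ differs from it by a nonzero element of $\L_q(A)$. Hence
\[
\mathrm{dist}(\vec y,\L)\ge\lambda_1(\L_q(A))-\|\vec e\|\ge\tau.
\]
The second item of \autoref{lemma:distance from a lattice estimation} then gives $f\le\rho_{1/\sigma}(\lambda_1(\L_q(A))-\|\vec e\|-\tau)$, using that $\rho$ is decreasing in the distance.

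The hypothesis says the gap between these two bounds exceeds $2\delta$. So on the good event the correct guess strictly maximizes $S=N g_W$, and the algorithm returns $\sguess{}$.

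\textbf{Main obstacle.} I expect the delicate step to be the wrong-guess distance bound. It needs care in identifying nonzero vectors of $\L_q(A)$ modulo $q$: the difference must be nonzero, which relies on $A$ having full rank modulo $q$. The other steps are routine.
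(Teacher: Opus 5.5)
Your proof is correct and is essentially the standard argument. The paper only quotes this theorem from Pouly--Shen, but its own proof of the quantum analogue (\autoref{theorem: performance of main algorithm} and the lemma just before it) follows your route: uniform pointwise approximation over representatives of $\Z^m/q\Z^m$, then the two bounds of \autoref{lemma:distance from a lattice estimation} to separate the correct guess, with the DFT giving the $N+q^{\nguess{}}$ running time. One small slip: since $\vec w_j\sim D_{\L_q^\perp(\Adual{}),q\sigma}$, the rescaled vectors $\vec w_j/q$ are distributed as $D_{\widehat{\L},\sigma}$, which is the Fourier series of $f_{\L,1/\sigma}$, and not as $D_{\widehat{\L},1/\sigma}$ (that would correspond to $f_{\L,\sigma}$); the rest of your argument already uses the correct function $f_{\L,1/\sigma}$.
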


\subsubsection{The Quantum Version and its Complexity.}

The algorithm presented above is an exhaustive search over the set of possible $\osguess{}$, where we look for $\osguess{}$ maximizing the score $S$. The quantum version uses two tools. The first is an approximation of the score $S$ (\autoref{th:algo to compute gW quantumly} below).

\begin{theorem}[{\cite[Theorem 5]{DBLP:journals/iacr/AlbrechtS22},\cite[Theorem 9]{DBLP:conf/eurocrypt/PoulyS24}}]\label{th:algo to compute gW quantumly}
    Let $N\in \mathbb N^*$, and let $W = \{\vec w_1,\dotsc,\vec w_N\}\subset \Z^m$. Let $g_W(\vec y) := \frac{1}{N}\sum_{i = 1}^N \cos(2\pi\langle \vec w_i,\vec y\rangle/q )$. Define $O_W$ to be an oracle such that
    \begin{equation}\label{eq: OW oracle definition}
        O_W : \ket j \ket 0 \to \ket j \ket{\vec w_j}.
    \end{equation}
    For any $\epsilon,\delta >0$, there exists a quantum algorithm $\mathcal A$ that, given $\vec y$ and oracle access to $O_W$, outputs $\mathcal A^{O_W}(\vec y)$ such that $|\mathcal A^{O_W}(\vec y)-g_W(\vec y)|\leqslant \epsilon$, with probability $1-\delta$. The algorithm makes $O(\epsilon^{-1}\log(1/\delta))$ queries to $O_W$, and requires $O(\log(1/\epsilon)+\poly(\log(m)))$ qubits.
\end{theorem}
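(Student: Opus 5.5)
We would prove the statement by reducing the estimation of $g_W(\vec y)$ to quantum mean estimation over the uniform distribution on indices $j\in\{1,\dots,N\}$. Let $\mathcal C$ be the algorithm that prepares the uniform superposition $\frac{1}{\sqrt N}\sum_{j=1}^N\ket j$ (doable with $\poly(\log N)$ gates, e.g.\ by exact amplitude amplification if $N$ is not a power of two). Let $\phi(j):=\cos(2\pi\langle \vec w_j,\vec y\rangle/q)$. The random variable $Y=\phi(J)$, with $J$ uniform, has mean exactly $g_W(\vec y)$. Since $|\phi|\leq 1$, we have $\mathrm{Var}(Y)\leq 1$, so \autoref{thm:montanaro algo} applies with $v=1$.

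The coherent implementation of $\phi$ is as follows. On input $\ket j$, call $O_W$ to load $\ket{\vec w_j}$. Compute the inner product $\langle \vec w_j,\vec y\rangle \bmod q$ in a fresh register; this is classical arithmetic on $\poly(\log m,\log q)$ qubits, since $\vec y$ is a fixed classical input. Then compute a fixed-point approximation of $\cos(2\pi t/q)$ to precision $O(\epsilon)$, using $O(\log(1/\epsilon))$ bits. Finally, uncompute the inner product and call $O_W$ again to erase $\vec w_j$. Each evaluation of $\phi$ therefore uses two queries to $O_W$, and $\mathcal C$ itself uses none.

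Running \autoref{thm:montanaro algo} with accuracy $\epsilon/2$ yields an estimate within $\epsilon/2$ of the mean of the discretised $\phi$, with probability $2/3$. The rounding error in $\phi$ adds at most $\epsilon/2$. We then boost the success probability to $1-\delta$ with the powering lemma (median of $O(\log(1/\delta))$ runs), exactly as in the proof of \autoref{corollary:montanaro bis}.

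The step we expect to be the main obstacle is the query count. A literal reading of \autoref{thm:montanaro algo} gives $\widetilde O(1/\epsilon)$ calls to $\mathcal C$ but only polylogarithmically many calls to $\phi$, which would imply too few oracle queries. This is impossible, so the counting must instead go through amplitude estimation. We would therefore fold $\phi$ into the state preparation: $\ket0\mapsto\frac{1}{\sqrt N}\sum_j\ket j\big(\sqrt{(1+\phi(j))/2}\ket1+\sqrt{(1-\phi(j))/2}\ket0\big)$, built by the transduction of \cite{sanders2019black}. The amplitude of $\ket1$ squared is then $(1+g_W(\vec y))/2$, and amplitude estimation \cite{brassard2002quantum} recovers it to additive error $\epsilon$ with $O(1/\epsilon)$ applications of this unitary. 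That gives $O(\epsilon^{-1})$ queries per run and $O(\epsilon^{-1}\log(1/\delta))$ in total, removing the extra logarithmic factors of Montanaro's general variance-based bound; this is possible because $Y$ is bounded.

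The qubit count is the phase register of amplitude estimation, $O(\log(1/\epsilon))$, plus the workspace for $j$, $\vec w_j$ and the arithmetic, $\poly(\log m)$, treating $\log q$ and $\log N$ as polylogarithmic in $m$. The median step can be performed with measurements between runs, so it adds no qubits.
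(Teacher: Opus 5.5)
Your final argument is correct and is essentially the standard proof of the cited Albrecht--Shen result, which this paper imports without proof. That argument encodes $\sqrt{(1+\phi(j))/2}$ as an ancilla amplitude, runs amplitude estimation with $O(1/\epsilon)$ iterations using $O(1)$ queries to $O_W$ each, and takes the median of $O(\log(1/\delta))$ runs. Your remark that the query count cannot be read off the paper's adapted statement of Montanaro's theorem is also sound: that statement would give only polylogarithmically many calls to $\phi$, contradicting the $\Omega(1/\epsilon)$ lower bound for approximate counting.

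One correction to the space accounting: the register into which $O_W$ writes $\vec w_j\in\Z^m$ needs $\Omega(m)$ qubits. So the $\poly(\log m)$ term holds only if the oracle's registers are excluded from the count, and that is how the statement has to be read. The same goes for your assumption that $\log N$ and $\log q$ are polylogarithmic in $m$, which the statement does not provide.
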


The second tool is a quantum maximum-finding algorithm that is robust to errors (since the computation of $S$ makes such errors). For this purpose, we can directly use~\autoref{theorem: cost of finding the maximum with faulty function} (generalization of~\cite[Theorem 8]{DBLP:conf/eurocrypt/PoulyS24}) with a constant probability of error. The combination of these two tools gives~\autoref{alg:quantum modern dual attack}.

\begin{algorithm}[tb]
    \caption{Quantum modern dual attack}\label{alg:quantum modern dual attack}
\begin{algorithmic}[1]
\Statex \textbf{Input:} Integers $m$, $n = \nguess{}+\ndual{}$, $q$, $N\in \mathbb N^*$. A real number $\eta>0$. A LWE sample $(A,\vec b)$, an oracle $O_W$ for a set $W = \{\vec w_1,\dotsc, \vec w_N\}$ of vectors sampled from $D_{\L_q^\perp(\Adual{}),q\sigma}$
\smallskip
\Statex \textbf{Output:} A guess of $\nguess{}$ coordinates of $\vec s$, of nothing ($\perp$).
\medskip
\State Set $\A$ as in planned by~\autoref{th:algo to compute gW quantumly}, with $\delta = 1/10$, $\epsilon = \eta$, and $q$
\State $\ket s \leftarrow \sum_{\osguess{}\in \Z_q^{\nguess{}}}  \ket{\osguess{}} \ket{\mathcal A^{O_W}(\vec b - \Aguess{}\osguess{})}$
\State Run the algorithm of~\autoref{theorem: cost of finding the maximum with faulty function} to find $\osguess{}$ such that $\mathcal A^{O_W}(\vec b - \Aguess{}\osguess{})$ is maximal
\State \textbf{Return} $\sguess{} = \osguess{}$
\end{algorithmic}
\end{algorithm}

\begin{theorem}[{\cite[Theorem 9]{DBLP:conf/eurocrypt/PoulyS24}}]\label{theorem: performances of the previous modern dual attack}
    Let $A\in \Z_q^{m\times n}$, $\vec e\in \Z^m$, $\vec s\in \Z_q^n$, $\sigma,\delta,\eta>0$, and $N\in \mathbb N$. Let $\tau := \frac{1}{\sigma}\sqrt{m/2\pi}$. Assume that $m\geqslant n$, $A$ has full rank, $\lambda_1(\L_q(A))\geqslant \tau + ||e||$, and 
    \begin{equation*}
        \rho_{1/\sigma}(\vec e)-\rho_{1/\sigma}(\lambda_1(\L_q(A))-||e||-\tau)>2\delta+\eta.
    \end{equation*}
    Let $\vec b = A\vec s + \vec e \mod q$. Let $W = \{\vec w_1,\dotsc,\vec w_N\}$ be vectors sampled from $D_{\L_q^\perp(\Adual{}),q\sigma}$, and let $O_W$ be an oracle for $W$ as defined in~\autoref{eq: OW oracle definition}. Then,~\autoref{alg:quantum modern dual attack} with inputs $(m,n=\nguess{}+\ndual{},q,N,\eta/2,(A,\vec b),W,O_W)$ makes $O(\eta^{-1}\times q^{\nguess{}/2})$ calls to $O_W$.
    It has a qubit cost in $O(\log(\eta^{-1})+\poly(\log(N)))$ and returns $\sguess{}$ with probability at least $1-q^m\times 2^{-\Omega(N\delta^2)}$.
\end{theorem}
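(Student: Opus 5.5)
The proof has three parts: correctness, query count, and qubit count. They combine the pointwise approximation lemma (\autoref{theorem: Poitwise approximation lemma}), the distance estimates of \autoref{lemma:distance from a lattice estimation}, the amplitude-estimation routine of \autoref{th:algo to compute gW quantumly}, and the faulty maximum-finding result \autoref{theorem: cost of finding the maximum with faulty function}.

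\textbf{Correctness.} We first reduce to the classical score gap, mirroring the proof for \autoref{alg:classical modern dual attack}. Apply \autoref{theorem: Poitwise approximation lemma} to $h = f_{\L_q(\Adual{}),1/\sigma}$, using \autoref{lemma:Fourier serie of f L,sigma} for its Fourier coefficients. The dual lattice is $\frac1q\L_q^\perp(\Adual{})$, so $W/q$ is distributed according to $\hat h$. Take $X$ to be a set of representatives of $\Z_q^m$, so $|X| = q^m$. Then with probability at least $1-q^m2^{-\Omega(N\delta^2)}$ over $W$, we have $|g_W(\vec y)-f(\vec y)|\le\delta$ for every candidate $\vec y=\vec b-\Aguess{}\osguess{}$.
\begin{itemize}
\item For the correct guess, $\vec y = \Adual{}\sdual{}+\vec e$. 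Its distance to the lattice is at most $\|\vec e\|$, so the first item of \autoref{lemma:distance from a lattice estimation} gives $f\ge\rho_{1/\sigma}(\vec e)$.
\item For a wrong guess, $\vec y$ differs from $\vec e$ modulo the lattice by $\Aguess{}(\sguess{}-\osguess{})$, and that vector is not in $\L_q(\Adual{})$ because $A$ has full rank. Hence $\mathrm{dist}(\vec y,\L_q(\Adual{}))\ge\lambda_1(\L_q(A))-\|\vec e\|\ge\tau$. The second item of the lemma then gives $f\le\rho_{1/\sigma}(\lambda_1-\|\vec e\|-\tau)$.
\end{itemize}
Under the hypothesis, the true scores $g_W$ therefore separate the correct guess from every wrong guess by a margin strictly greater than $\eta$.

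\textbf{Quantum estimation and maximum-finding.} Run the estimator $\A$ with precision $\eta/2$ and failure probability $1/10$. Whenever $\A$ succeeds on two candidates, the $\eta/2$-errors cannot overturn a gap larger than $\eta$, so the correct $\osguess{}$ has the strictly largest estimated score. The main obstacle is that \autoref{theorem: cost of finding the maximum with faulty function} requires a fixed ``good'' value $s_i$ per input, while $\A$ returns a random value within $\eta/2$ of $g_W$. My plan is to round the output to a grid of mesh $\eta/2$ and declare the good value to be the most likely rounded output. If needed, I would boost the success probability by a constant number of repetitions and a median, which costs only a constant factor. The gap must then survive one grid step of rounding; if it does not, I would use precision $\eta/4$ internally, again at a constant-factor cost. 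With this in place, maximum-finding over $2^\ell = q^{\nguess{}}$ inputs with constant $k$ succeeds with constant probability and makes $O(q^{\nguess{}/2})$ calls to $\A$.

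\textbf{Cost accounting.} Each call to $\A$ makes $O(\eta^{-1})$ queries to $O_W$, so the total is $O(\eta^{-1}q^{\nguess{}/2})$ queries. The qubit count is $\poly(\nguess{}\log q)$ for maximum-finding plus the $O(\log(1/\eta)+\poly(\log m))$ qubits of $\A$. Indexing $W$ requires $\log N$ qubits, which is absorbed into $\poly(\log N)$. Finally, the overall success probability combines the $1-q^m2^{-\Omega(N\delta^2)}$ bound over $W$ with the constant success probability of the quantum part, which can be amplified as needed.
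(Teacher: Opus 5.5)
The paper imports this statement from Pouly and Shen without proof. Your correctness reduction and cost accounting are the standard argument, the same one the paper runs for its own variant in the proof of \autoref{theorem: performance of main algorithm}. It consists of pointwise approximation over $q^m$ representatives, then the two items of \autoref{lemma:distance from a lattice estimation} giving a gap larger than $\eta$ between $g_W$ at the correct guess and at every wrong guess, then the observation that two estimation errors of $\eta/2$ cannot close it. The step that fails is the one you single out as the main obstacle. Rounding the output of $\A$ to a fixed grid does not produce a value $s_i$ returned with probability at least $9/10$. Any fixed grid has cell boundaries. When $g_W(\vec y)$ sits at a boundary, an estimator spread roughly symmetrically around $g_W(\vec y)$ splits its output about evenly between two adjacent cells, so the ``most likely rounded output'' has probability only about $1/2$. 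Medians of repetitions raise the probability of landing in the $\eta/2$-window, not in a single cell, and working at precision $\eta/4$ does not remove the boundary. With $q^{\nguess{}}$ arbitrary real scores you cannot keep all of them away from boundaries; a random shared offset helps each input only with constant probability, not all simultaneously. So the hypothesis of \autoref{theorem: cost of finding the maximum with faulty function} is not established this way.

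You do not need a canonical value. In the analogous proof the paper simply calls an output ``correct'' when it lies in the tolerance window and ``faulty'' otherwise, and what is actually used is your separation. With good evaluations, the estimate at $\vec y^\ast = \vec b - \Aguess{}\sguess{}$ is at least $g_W(\vec y^\ast)-\eta/2$. This exceeds $g_W(\vec y)+\eta/2$, and hence the good estimate, at every wrong candidate $\vec y$. So the correct guess is marked in every search whose threshold is a wrong candidate, and once it becomes the threshold no wrong candidate is marked. If you want more rigor than this convention, replace rounding with a maximum-finding routine whose guarantee is stated for additive-error evaluations, returning an index whose score is within a constant multiple of the precision of the maximum. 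Run it at precision a small constant fraction of $\eta$, so the gap forces the output to be $\sguess{}$. Finally, the constant failure probability of maximum-finding cannot be pushed below $q^m2^{-\Omega(N\delta^2)}$ for free within the $O(\eta^{-1}q^{\nguess{}/2})$ budget, since $k$ repetitions multiply the query count by $k$. The stated bound should be read as the probability over the choice of $W$, with the quantum part succeeding with constant probability.
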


The cost of sampling the necessary $\vec w_j$ should also be included in the cost of the attack. Pouly and Shen~\cite{DBLP:conf/eurocrypt/PoulyS24} proposed to do so by computing a BKZ-reduced basis $B$ of $\L_q^\perp(\Adual{})$, and using the Markov Chain Monte Carlo (MCMC) algorithm $N$ times to sample $N$ vectors in $\L_q^\perp(\Adual{})$~\cite{MTMK}. As recalled in~\autoref{sec:comparison}, the cost of the MCMC algorithm is 
\begin{equation*}
N \times \poly(n) \times \nu\times \frac{1}{\Delta}, \quad \Delta = \frac{\rho_{q\sigma}(\L_q^\perp(\Adual{}))}{\prod_{i=1}^m \rho_{q\sigma/||\widetilde{\vec b}_i||}(\Z)} \enspace,
\end{equation*}
where $2^{-\nu}$ is the distance between the distribution $D_{\L_q^\perp(\Adual{}),q\sigma}$, and the one we actually sample from using the MCMC algorithm.

Pouly and Shen show (\cite[Section 4.3]{DBLP:conf/eurocrypt/PoulyS24}) that one can take $\eta = \delta / 100$ and $\eta^{-1} = \poly(\log m)) \sqrt{N}$. They propose two methods for the quantum attack.
\begin{itemize}
\item Without qRAM: preprocessing the basis with BKZ and directly using the sampler in the algorithm $O_W$ (instead of using precomputed data). The preprocessing step admits a quantum speedup\footnote{BKZ relies on solving the shortest vector problem (SVP) in smaller dimensions. This can be done either by sieving or enumeration, both of which admit a quantum speedup \cite{DBLP:conf/asiacrypt/AonoNS18,DBLP:conf/eurocrypt/BonnetainCSS23}. In most crytographic applications, lattice sieving is always faster than enumeration.}. We denotes its time by $T_{\text{Q-BKZ}} $.
The quantum time complexity of the attack becomes:
\begin{equation}\label{eq:ps-no-qram}
T_{\text{Q-BKZ}} + \frac{1}{\Delta} \sqrt{N} q^{\nguess{}/2} \poly(\log m) \enspace.
\end{equation}
They note that there is no quantum speedup on the sampler (whether one uses Klein's algorithm directly or MCMC). 

\item With qRAM: using the same sampler to precompute the $\vec{w}_1, \ldots, \vec{w}_N$, they need to be stored in classical memory with quantum random access (QRACM) in order to implement the $O_W$ algorithm efficiently. The quantum time complexity becomes:
\begin{equation}\label{eq:ps-qram}
T_{\text{Q-BKZ}} + \frac{1}{\Delta} N \poly(\log m) +  \sqrt{N} q^{\nguess{}/2} \poly(\log m) \enspace.
\end{equation}
\end{itemize}


\subsection{Improvement with qRAM}



A first simple way to improve the quantum modern dual attack of~\cite{DBLP:conf/eurocrypt/PoulyS24}, is to sample the $N$ vectors in $\L_q^\perp(\Adual{})$ using the algorithm of~\autoref{theorem: performance of quantum gaussian sampling algorithm}, instead of the MCMC algorithm. As we have shown in~\autoref{sec:gaussian-distrib}, we gain a quadratic speedup on the MCMC sampler (up to polynomial factors). As a consequence, the quantum time complexity becomes:
\begin{equation}
T_{\text{Q-BKZ}} + \frac{1}{\sqrt{\Delta}} N \poly(\log m) +  \sqrt{N} q^{\nguess{}/2} \poly(\log m) \enspace,
\end{equation}
which is always better than~\autoref{eq:ps-qram}. The time complexity of the second step (maximum-finding) remains unchanged, as well as the quantum memory complexity $N$.

We do not provide concrete numbers for the attack on the Kyber cryptosystem because our analysis, which follows that of \cite{DBLP:conf/eurocrypt/PoulyS24} does not include modulus switching which is required to obtain a competitive complexity.

%
%

\subsection{Improvement without qRAM}


Denote by $B$ a basis of the lattice $\L_q^\perp(\Adual{})$, as before, and a finite set $\Omega\subset \L(B)$. Now, instead of classically sampling $N$ short vectors according to a distribution $D_{\L_q^\perp(\Adual{}),q\sigma}$, as in~\cite{DBLP:conf/eurocrypt/PoulyS24}, we use the algorithm of~\autoref{theorem: performance of quantum gaussian sampling algorithm}, along with~\autoref{corollary:montanaro bis}, to estimate the value $f_{\hat\L_q(\Adual{}),1/q\sigma}$. The resulting dual attack on LWE is presented in~\autoref{alg:new quantum modern dual attack}. Recall that our approximate distribution $D_{\Omega, q\sigma}$ replaces the distribution $D_{\L_q^\perp(\Adual{}),q\sigma}$ (\autoref{lemma:distance from a lattice estimation}).

\begin{algorithm}[tb]
    \caption{New quantum modern dual attack}\label{alg:new quantum modern dual attack}
\begin{algorithmic}[1]
\Statex \textbf{Input:} Integers $m$, $M>1$, $n = \nguess{}+n_{dual}$, $q$. Real numbers $\epsilon,\eta_1,\eta_2>0$, $\sigma>0$. A LWE sample $(A,\vec{b})$.
\smallskip
\Statex \textbf{Output: } A guess of $\nguess{}$ coordinates of $\vec s$.
\medskip

\State Set $\phi$ as the quantum algorithm $:\ket{\vec x}\ket{\vec w} \to \ket{\cos(2\pi\langle \vec x,\vec w\rangle/q)}$

\State Set $\mathcal{GS}$ as the sampling algorithm resulting from~\autoref{theorem: performance of quantum gaussian sampling algorithm}, which performs: $\ket{0} \mapsto \ket{D}$ where $D$ is a distribution close to $D_{\Omega,q\sigma}$



\State Set $\A^{\mathcal{GS},\phi}$ the mean estimation algorithm of~\autoref{corollary:montanaro bis} with parameters $\epsilon$ and $\delta=1/(20\times q^m)$ which performs: $\ket{\vec{x}} \ket{0} \mapsto \ket{\vec{x}} \ket{\A^{\mathcal{GS},\phi}(\vec{x})}$ with $\phi$ the function $\vec{w} \to \cos(2\pi\langle \vec x,\vec w\rangle)$ and $\mathcal{C}$ the circuit $\mathcal{GS}$



\State Run the algorithm of~\autoref{theorem: cost of finding the maximum with faulty function} to find $\overline{s}_{guess}$ such that $\mathcal \A^{\mathcal{GS},\phi}(\vec b - \Aguess{}\overline{s}_{guess})$ is maximal

\State \textbf{Return} $\sguess{} = \overline{s}_{guess}$
\end{algorithmic}
\end{algorithm}

\begin{restatable}{lemma}{lembound}\label{lemma: third distance bound between gw and f L,sigma}
    Let $B\in \Z_q^{m\times n}$, $\L\subset \R^m$ a lattice, $\vec e\in \Z^m$, $\epsilon,\sigma,\delta,\eta_1>0$, and $N\in \N$. Let $\tau := \frac{1}{\sigma}\sqrt{m/2\pi}$, $\eta_2\geqslant 0$, and assume that $\lambda_1(\L+\L_q(B))\geqslant \tau + ||\vec e||$, and 
    \begin{equation*}
        \rho_{1/\sigma}(\vec e) - \rho_{1/\sigma}(\lambda_1(\L+\L_q(B))-||\vec e||-\tau)>2\epsilon+2\eta_1 +\eta_2.
    \end{equation*}
    Let $D$ be a probability distribution over a finite subset $\Omega$ of $\L_q^\perp(B)$, such that $d_{TV}(D,D_{\L_q^\perp(B),q\sigma})\leqslant \eta_1/2$. Let $\mathcal{GS}$ be a quantum circuit such that $\mathcal{GS} : \ket 0 \to \ket{D}$, and let $\phi:\ket{\vec x}\ket{\vec w} \to \ket{\cos(2\pi\langle \vec x,\vec w\rangle/q)}$.  Denote by $\widetilde f(\vec x)$ the output of $\A^{\mathcal{GS},\phi}$ on input $\vec x$ and with parameters $\epsilon$ and $\delta$.

    Then, with probability at least $1-q^m\delta$, for all $\vec x\in \L\backslash \L_q(B)$, we have 
    \begin{gather*}
        |\widetilde f(\vec x) - f_{\L_q(B),1/\sigma}(\vec x)|\leqslant \epsilon + \eta_1,\\
        \widetilde f(\vec e)\geqslant \rho_{1/\sigma}(\vec e)-\epsilon-\eta_1 >\rho_{1/\sigma}(\lambda_1(\L+\L_q(B))-||\vec e||-\tau)+\epsilon+\eta_1 +\eta_2\geqslant \widetilde f(\vec e+\vec x)+\eta_2\enspace .
    \end{gather*}
\end{restatable}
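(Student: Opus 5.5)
The plan is to reduce the statement to \autoref{corollary:montanaro bis} combined with \autoref{lemma:distance from a lattice estimation}, following the proof of \cite[Theorem 9]{DBLP:conf/eurocrypt/PoulyS24}.

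\textbf{Step 1: the function being estimated.} By \autoref{lemma:Fourier serie of f L,sigma} applied to $\L_q(B)$, the Fourier coefficients of $f_{\L_q(B),1/\sigma}$ are $D_{\widehat{\L_q(B)},\sigma}$. Since $\widehat{\L_q(B)}=\frac{1}{q}\L_q^\perp(B)$, the substitution $\vec w\mapsto \vec w/q$ turns \autoref{eq:mean} into
\begin{equation*}
f_{\L_q(B),1/\sigma}(\vec x)=\sum_{\vec w\in\L_q^\perp(B)}D_{\L_q^\perp(B),q\sigma}(\vec w)\cos(2\pi\langle\vec x,\vec w\rangle/q).
\end{equation*}
Thus $f_{\L_q(B),1/\sigma}$ is exactly the mean that the oracle $\phi$ targets under the distribution $D_{\L_q^\perp(B),q\sigma}$.

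\textbf{Step 2: the additive error bound.} We apply \autoref{corollary:montanaro bis} with $\mathcal C=\mathcal{GS}$ and the finite set $X=\Z_q^m$ of representatives, which covers all relevant $\vec x$ modulo $q$ since everything is $q$-periodic. The corollary is stated for $D_{\Omega,\cdot}$, but its proof only uses a total variation bound $\gamma$ between the sampled distribution and the true Fourier measure. Here $\gamma=\eta_1/2$, so the error term is $\epsilon+2\gamma=\epsilon+\eta_1$. The success probability is at least $1-|X|\delta=1-q^m\delta$, which gives the first inequality simultaneously for all relevant $\vec x$.

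\textbf{Step 3: the lower bound at $\vec e$.} The first item of \autoref{lemma:distance from a lattice estimation} gives
\begin{equation*}
f_{\L_q(B),1/\sigma}(\vec e)\ge\rho_{1/\sigma}(\mathrm{dist}(\vec e,\L_q(B))).
\end{equation*}
Since $\mathrm{dist}(\vec e,\L_q(B))\le\|\vec e\|$ and $\rho$ is decreasing, the right-hand side is at least $\rho_{1/\sigma}(\vec e)$. Combined with Step 2, this yields $\widetilde f(\vec e)\ge\rho_{1/\sigma}(\vec e)-\epsilon-\eta_1$.

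\textbf{Step 4: the upper bound at $\vec e+\vec x$.} The middle strict inequality is the hypothesis rearranged. For $\vec x\in\L\setminus\L_q(B)$, any point $\vec v\in\L_q(B)$ satisfies $\vec x-\vec v\in(\L+\L_q(B))\setminus\{0\}$. Hence $\mathrm{dist}(\vec e+\vec x,\L_q(B))\ge\lambda_1(\L+\L_q(B))-\|\vec e\|\ge\tau$, using the hypothesis $\lambda_1\ge\tau+\|\vec e\|$. The second item of \autoref{lemma:distance from a lattice estimation} and monotonicity then give
\begin{equation*}
f(\vec e+\vec x)\le\rho_{1/\sigma}(\lambda_1-\|\vec e\|-\tau).
\end{equation*}
Adding the estimation error $\epsilon+\eta_1$ from Step 2 gives the last inequality.

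\textbf{Main obstacle.} The delicate step is Step 2. One must check that \autoref{corollary:montanaro bis} applies to the sampled distribution $D$ rather than to $D_{\Omega,q\sigma}$; re-running its triangle-inequality argument with $D$ in place of $D_{\Omega,1/\sigma}$ does this directly. One must also handle the $1/q$ scaling between $\widehat{\L_q(B)}$ and $\L_q^\perp(B)$, and justify the finite $X$ through $q$-periodicity, so that the union bound costs exactly a factor $q^m$.
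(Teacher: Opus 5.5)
Your proposal is correct and follows the same route as the paper's proof. The paper also takes a union bound over the $q^m$ residues $X=\{0,\dots,q-1\}^m$ via \autoref{corollary:montanaro bis} with $2\gamma=\eta_1$, then applies the two items of \autoref{lemma:distance from a lattice estimation} at $\vec e$ and at $\vec e+\vec x$, using $\mathrm{dist}(\vec x,\L_q(B))\ge\lambda_1(\L+\L_q(B))$ and the monotonicity of $\rho_{1/\sigma}$. Your additional details are ones the paper leaves implicit: the rescaling $\widehat{\L_q(B)}=\frac{1}{q}\L_q^\perp(B)$, re-running the corollary's triangle-inequality step with $D$ in place of $D_{\Omega,\cdot}$, and the bound $\mathrm{dist}(\vec e,\L_q(B))\le\|\vec e\|$.
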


The proof is given in~\autoref{sec: Proof of some results}. 

We now have the necessary tools to prove the following theorem.

\begin{theorem}\label{theorem: performance of main algorithm}
    Consider a matrix $A\in \Z_q^{m\times n}$, vectors $\vec e\in \Z^m$, $\vec s\in \Z_q^n$, real numbers $\epsilon,\sigma,\eta_1,\eta_2>0$, and $\delta=1/(20\times q^m)$. Let $\tau := \frac{1}{\sigma}\sqrt{m/2\pi}$ and denote by $\L$ the lattice $\L_q(A)$. Assume that $m\geqslant n$, $A$ has full rank, $\lambda_1(\L_q(A))\geqslant \tau + ||e||$, and 
    \begin{equation*}
        \rho_{1/\sigma}(\vec e)-\rho_{1/\sigma}(\lambda_1(\L_q(A))-||e||-\tau)>2\epsilon+2\eta_1 +\eta_2.
    \end{equation*}

    Let $\vec b = A\vec s + \vec e \mod q$. Let $D$ be a probability distribution over a finite subset $\Omega$ of $\L_q^\perp(\Adual{})$, such that $d_{TV}(D,D_{\L_q^\perp(\Adual{}),q\sigma})\leqslant \eta_1/2$. 
    Let $\mathcal{GS}$ and $\phi$ be the quantum circuits as defined in~\autoref{alg:new quantum modern dual attack}. 
    
    Then,~\autoref{alg:new quantum modern dual attack} with inputs $(m,M,n=\nguess{}+n_{dual},q,\epsilon,\eta_1,\eta_2,\sigma,(A,\vec b))$ makes
    \begin{equation*}
        O\left( \poly(\log q,m, \log (1/\epsilon)) \sqrt{q^{\nguess}}\right),
    \end{equation*}
    calls to $\phi$, and 
    \begin{equation*}
        O\left( \poly(\log q,m, \log (1/\epsilon)) \sqrt{q^{\nguess}}\times \frac{1}{\epsilon} \right)
    \end{equation*}
    calls to $\mathcal{GS}$. It returns $\sguess{}$ with probability at least $15/16$. It has an additional qubit count in $O(\log(1/\epsilon))+\poly(\log q,m)$.
\end{theorem}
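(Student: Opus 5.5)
The proof will follow the template of Theorem~\ref{theorem: performances of the previous modern dual attack}. The deterministic oracle $O_W$ is replaced by the mean estimator $\A^{\mathcal{GS},\phi}$ of Corollary~\ref{corollary:montanaro bis}, and the correctness guarantee is supplied by Lemma~\ref{lemma: third distance bound between gw and f L,sigma}. The argument has two parts: (i) the scores $\widetilde f(\vec b-\Aguess{}\osguess{})$ separate the correct guess from all wrong ones; (ii) the faulty-maximum-finding procedure of Theorem~\ref{theorem: cost of finding the maximum with faulty function} locates the maximiser at the claimed cost.

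For correctness, we first rewrite $\vec y:=\vec b-\Aguess{}\osguess{} = \Aguess{}(\sguess{}-\osguess{}) + \Adual{}\sdual{}+\vec e$. Since $f_{\L_q(\Adual{}),1/\sigma}$ and $\phi$ are both $\L_q(\Adual{})$-periodic (the $\vec w$ lie in $\L_q^\perp(\Adual{})$, so $\langle \vec w,\Adual{}\vec z\rangle\in q\Z$), we have $\widetilde f(\vec y)$ distributed as $\widetilde f(\vec e + \vec x)$ with $\vec x=\Aguess{}(\sguess{}-\osguess{})$. We then apply Lemma~\ref{lemma: third distance bound between gw and f L,sigma} with $B=\Adual{}$ and $\L=\L_q(\Aguess{})$, so that $\L+\L_q(\Adual{})=\L_q(A)$; the hypothesis on $\lambda_1(\L_q(A))$ is exactly the one assumed. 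We must check that $\vec x\notin \L_q(\Adual{})$ for every $\osguess{}\neq\sguess{}$, which holds because $A$ has full rank. The set $X$ of inputs has size at most $q^m$ (all points mod $q$), and $\delta=1/(20q^m)$. Hence, with probability $\ge 1-1/20$, the correct guess has score at least $\eta_2$ above every wrong guess, simultaneously over all inputs. Within a single evaluation each estimate is also correct with probability $\ge 1-q^m\delta\ge 19/20\ge 9/10$, which is precisely the ``good value with probability $9/10$'' hypothesis of Theorem~\ref{theorem: cost of finding the maximum with faulty function}.

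The step I expect to be most delicate is reconciling the two notions of randomness. Lemma~\ref{lemma: third distance bound between gw and f L,sigma} gives a simultaneous guarantee, whereas the maximum finder re-runs the randomized $s$ afresh at each call. I would argue per input: the powering-lemma median is within $\epsilon+\eta_1$ of $f$ with probability $\ge 1-\delta$, so every call on every input is ``good'' with probability $\ge 9/10$. Whenever all good values are within $\epsilon+\eta_1$, the lemma's inequality chain makes $\sguess{}$ the unique argmax of the good values; the strict margin $\eta_2>0$ matters here because good values are only approximately defined. Taking $k=5$ in Theorem~\ref{theorem: cost of finding the maximum with faulty function} gives failure probability $\le 1/32$, and a union bound with the $1/20$ term gives at least $15/16$ up to constant tweaking (e.g.\ $k=6$).

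The complexity is obtained by multiplying: $O(k\sqrt{q^{\nguess}})$ calls to $s=\A^{\mathcal{GS},\phi}$, where each call uses $\widetilde O((1/\epsilon)\log(1/\delta))$ calls to $\mathcal{GS}$ and $\widetilde O(\log(1/\epsilon)\log(1/\delta))$ calls to $\phi$. Since $\log(1/\delta)=O(m\log q)$, this yields the stated $\poly(\log q,m,\log(1/\epsilon))$ factors. The qubit count follows from Remark~\ref{remark:montanaro qubit count} and the deferred-measurement remark after Corollary~\ref{corollary:montanaro bis}, giving $O(\log(1/\epsilon))$ times a $\log(1/\delta)=\poly(m,\log q)$ factor, plus $\poly(\ell)=\poly(\nguess\log q)$ for maximum finding.
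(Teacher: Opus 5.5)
Your proposal follows the paper's proof essentially step for step: correctness from Lemma~\ref{lemma: third distance bound between gw and f L,sigma} with $\delta=1/(20q^m)$ and $|X|\le q^m$, the search via Theorem~\ref{theorem: cost of finding the maximum with faulty function}, and the cost obtained by multiplying the counts of Corollary~\ref{corollary:montanaro bis} with $\log(1/\delta)=O(m\log q)$. You are also more careful than the paper about the $\Adual{}\sdual{}$ term, which the paper silently drops and your periodicity remark handles, and about per-call versus simultaneous randomness.

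There are two small slips. First, since $1/64+1/20>1/16$, taking $k=6$ does not give $15/16$ under your union bound; you need $k\ge 7$. Alternatively, drop the $1/20$ term, which your per-call argument does not need, and then $k=4$ suffices, as in the paper. Second, your qubit tally $O(\log(1/\epsilon))\cdot\poly(m,\log q)$ is a product, not the stated sum. The paper reaches the sum only by quoting $O(\log(1/\epsilon)+m\log q)$ for the estimator, without applying the multiplicative $\log(1/\delta)$ deferral factor from its own remark.
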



\begin{proof}
    \textbf{Correctness.} Recall that $\widetilde f(\vec x)$ is the output of the probabilistic algorithm $\mathcal A^{\mathcal{GS},\phi}$ on input $\vec x$ and with parameters $\epsilon$ and $\delta$. This output is an estimation of $f_{\L_q(\Adual{}),1/q\sigma}(\vec x)$. Assume that $\delta = 1/(20\times q^m)$. In this case, by~\autoref{lemma: third distance bound between gw and f L,sigma}, with probability at least $1-1/20$, we have for all $\vec x\in \L\backslash \L_q(\Adual{})$,
    \begin{equation}\label{eq: basic gW inequality}
        \widetilde f(\vec e) > \widetilde f(\vec e+\vec x)+\eta_2.
    \end{equation} 
    We say that in this case, $\mathcal A^{\mathcal{GS},\phi}$ returns a correct value, and otherwise it returns a faulty value.
We can then apply~\autoref{theorem: cost of finding the maximum with faulty function} to say that, with probability at least $15/16$, the output $\overline{s}'_{guess}$ of~\autoref{alg:new quantum modern dual attack} maximizes $\widetilde f(\vec b-\Aguess{}\overline{s}_{guess})$.

    Let us now show that, in this case, $\overline{s}'_{guess} = \sguess{}$. Assume that this is not true, for the sake of contradiction. 

    Note that $A$ has full rank, and $m\geqslant n$, so the columns of $A$ are linearly independent over $\Z_q$ and 
    \begin{equation*}
        \L\backslash\L_q(\Adual{}) = \L_q(\Aguess{})\backslash \L_q(\Adual{}) = \L_q(\Aguess{})\backslash q\Z^m.
    \end{equation*}
    Moreover, since $\vec b = A\vec s+\vec e$, we have
    \begin{equation*}
        \vec b-\Aguess{}\overline{s}'_{guess} = \Aguess{}(\sguess{}-\overline{s}'_{guess})+\vec e \enspace.
    \end{equation*}
    So, if $\overline{s}'_{guess}\neq \sguess{}$, there is some $\vec x\in \L_q(\Aguess{})$ such that $\vec b-\Aguess{}\overline{s}'_{guess} = \vec e+\vec x$. Here,~\autoref{lemma: third distance bound between gw and f L,sigma} applies. Namely, with probability $1-q^m\delta$, we have
    \begin{equation}
        \widetilde f(\vec e) > \widetilde f(\vec e+\vec x)+\eta_2.
    \end{equation} 
    This is in contradiction with the fact that $\overline{s}'_{guess}$ maximizes $\widetilde f(\vec b-\Aguess{}\overline{s}_{guess})$. Therefore, $\overline{s}'_{guess} = \sguess{}$.

    \textbf{Performances.} Computing $\vec b-\Aguess{}\overline{s}_{guess}$ on a quantum computer involves products of $m\times \nguess{}$ dimensional matrices with integers modulo $q$, so this can be done with a gate and qubit count in $\poly(\log q,m)$.

    
    By~\autoref{corollary:montanaro bis}, and since $\delta = 1/(20\times q^m)$, the operator $\mathcal A^{\mathcal{GS},\phi}$ calls $\phi$ a number 
     $\bigOt{m\log(q)\log\left(\frac{1}{\epsilon}\right)}$ 
    of times. Likewise, it calls $\mathcal{GS}$ a number of times in
         $\bigOt{m\log(q)\frac{1}{\epsilon}}$.
    The qubit count is in $O(\log(1/\epsilon)+m\log(q))$.

    By~\autoref{theorem: cost of finding the maximum with faulty function}, to find the $\overline{s}_{guess}$ that maximises $\widetilde f(\vec b-\Aguess{}\overline{s}_{guess})$, we need to call $\mathcal A^{\mathcal{GS},\phi}$ a number of times in $O(\nguess{}\log(q)\sqrt{q^{\nguess{}}})$, and to use $O(\nguess{}\log(q)\sqrt{q^{\nguess{}}})$ additional gates, and $\poly(\nguess{}\log(q))$ qubits. By multiplying this (and using $\nguess{}<m$) we obtain the total number of calls to $\mathcal{GS}$ and $\phi$,
with an additional gate count accounting for a polynomial factor.
\qed
\end{proof}

\paragraph{Concrete Complexity and Comparison.}
Similarly to~\cite[Section 4.3]{DBLP:conf/eurocrypt/PoulyS24}, we can take a concrete value for $\epsilon$ and express the complexity in terms of the parameter $N$ of the previous attack. Indeed, one should choose $\epsilon$ to be smaller than the smallest possible value of $\rho_{1/\sigma}(\vec{e})$, and this corresponds (up to a polynomial factor) to $\epsilon = \frac{1}{\sqrt{N}}$ where $N$ is the amount of samples required in the first variant of the attack.

Therefore, by combining our new sampling with~\autoref{theorem: performance of main algorithm} we obtain a complexity:
\begin{equation}
T_{\text{Q-BKZ}} + \frac{1}{\sqrt{\Delta}} \sqrt{N} q^{\nguess{}/2} \poly(m, \log q) \enspace,
\end{equation}
which improves by a factor $\sqrt{\Delta}$ over~\autoref{eq:ps-no-qram}.

\paragraph{Comparison with sieving-based dual attacks.}
The main advantage of this improvement is that it does not require a qRAM
and only a polynomial number of qubits.
This should be compared to dual attacks based on sieving which,
although more efficient in the classical setting\footnote{Sieving based dual attacks are more efficient but require heuristic assumptions \cite{DBLP:conf/crypto/CarrierMST25}, whereas the "modern" dual attack in \cite{DBLP:conf/eurocrypt/PoulyS24} using a discrete gaussian sampler is provably correct.}, would require very large qRAMs in
the quantum setting.

\section{Quantum Algorithm for solving SIS in any norm}\label{sec: Improving SIS}
In \cite{PS26}, the authors gave a simple algorithm that solves the $\SIS$ problem for any norm. The generic framework of the algorithm is the following: given access to a discrete Gaussian sampler, they estimate the probability $p$ for the sampler to produce a short enough vector that corresponds to the solution of the $\SIS$ problem, using averaging results on $q$-ary lattices \cite{DBLP:conf/eurocrypt/PoulyS24}. They then sample enough vectors so that the probability that one of them is the solution of the $\SIS$ problem is constant. The complexity of the algorithm is then $O(1/p\cdot \text{T}_{\text{sampler}})$ where $\text{T}_{\text{sampler}}$ is the complexity
of producing one sample. Note that this ignores any potential preprocessing required for the sampler, such as reducing the basis.

The discrete Gaussian sampler that they used was precisely the MCMC sampler in \cite{MTMK} for which we have a quantum quadratic speedup (see~\autoref{sec:comparison}).
Note that the overall attack starts by reducing the basis using the BKZ algorithm. This
preprocessing step also admits a quantum speedup, albeit far from a quadratic one when we use lattice sieving for solving the shortest vector problem in smaller dimensions \cite{DBLP:conf/eurocrypt/BonnetainCSS23}.

In our quantum algorithm, we constructed a quantum state $\sum_{\vec{x} \in \Omega} \sqrt{D(\vec{x})} \Ket{\vec{x}}$
where $D$ is (very close to) the discrete Gaussian over the lattice
$L$ of interest, and $\Omega$ is a large domain. From this state and
given $\ell$, the length parameter of the SIS problem, we can easily build
\[
\Ket{\varphi}:=\sum_{\vec{x} \in \Omega} \sqrt{D(\vec{x})} \Ket{\vec{x}}\Ket{f(\vec{x})}.
\]
where $f$ is the indicator function of the $\ell_p$-ball of radius $\ell$
minus the zero vector.
The crux of the analysis in \cite{PS26} lies in the analysis of the probability $p$
that a vector $\vec{x}$ sampled from $D$ falls into this ball. In other words,
\begin{equation}\label{eq:aaprob}
	p=\sum_{\vec{x} \in \Omega} D(\vec{x})f(\vec{x}).
\end{equation}
By doing an amplitude amplification on $\Ket{\varphi}$, we can keep
only\footnote{Technically, we only obtain a state very close to the wanted one
	which means that the final measurement only gives a solution with constant
	probability.
} the
vectors $\vec{x}$ such that $f(\vec{x})=1$. The remaining state will exactly 
contain all vectors in the ball minus the zero vector, so we can measure it
and obtain a solution
to the SIS problem. The complexity of this quantum algorithm is therefore
$O(\text{T}_{\text{Q-sample}}\text{T}_{\text{QAA}})$
where $\text{T}_{\text{Q-sample}}$ is the time complexity of the quantum sampler 
and $\text{T}_{\text{QAA}}$ the time of the quantum amplitude amplification.
We have shown in Section \ref{sec:comparison} that
$\text{T}_{\text{Q-sample}}=\poly(m)\sqrt{\text{T}_{\text{sample}}}$.
Finally, by \eqref{eq:aaprob}, we have $\text{T}_{\text{QAA}}=O(\frac{1}{\sqrt{p}})$.
Hence, the overall complexity of our quantum algorithm is
\[
	T_{\text{attack}}=\text{T}_{\text{Q-BKZ}}+\poly(m)\sqrt{\frac{\text{T}_{\text{sample}}}{p}} \enspace.
\]

In~\autoref{fig:dilithium_classical}, we show the estimated classical time cost of the attack on $\SIS^\infty$ for the Dilithium signature scheme \cite{Ducas2017CRYSTALSDilithiumAS} given in \cite{PS26}.
We modified the estimator of \cite{PS26} to use the quantum complexity
above (neglecting polynomial factors). It uses \cite{DBLP:conf/pqcrypto/PoulyS25} to estimate the complexity of the sampler
for a BKZ reduced basis and optimizes the choice of $\beta$, the BKZ block size, and $s$ to find the optimal set of parameters. We applied this modified estimator 
to the parameters of $\text{SIS}^\infty$ for Dilithium. The results can be found in~\autoref{fig:dilithium}.
We can see that, while the complexity is much better than the classical attack, it remains limited by the BKZ step. It should also be noted that the method for solving $\text{SIS}^\infty$ in \cite{PS26} is provable but still quite far from the heuristic
estimates underlying the security of Dilithium.

\begin{table}[H]
	\centering
	\caption{\label{fig:dilithium_classical}Estimated classical time cost of the attack on $\SIS^\infty$ for Dilithium in \cite{PS26}. The $T_{\text{sample}}$ column corresponds to
		the cost of the MCMC sampler (to produce one sample) and $T_{\text{BKZ}}$ is the cost of
		reducing the basis. $N$ is the number of Gaussian samples used by the attack. Columns $\beta$ and $s$ give the parameters of the attacks (BKZ block size
		and width of the Gaussian sampler).}
	\begin{tabular}{c@{\hspace{4ex}}c@{\hspace{3ex}}c@{\hspace{3ex}}c@{\hspace{3ex}}c@{\hspace{3ex}}c@{\hspace{3ex}}c}
		\hline
		Scheme       &   $\log_2 T_{\text{attack}}$ &   $\beta$ &    $s$ & $\log_2 N$ & $\log_2 T_{\text{sample}}$ & $\log_2 T_{\text{BKZ}}$ \\
		\hline
		NIST Level 2 &    202 &       609 & 346702 &           39 &      159 &   201 \\
		NIST Level 3 &    289 &       919 & 723749 &           55 &      230 &   288 \\
		NIST Level 5 &    400 &      1314 & 769533 &           74 &      323 &   400 \\
		\hline
	\end{tabular}
\end{table}

\begin{table}[htbp]
\centering
\caption{\label{fig:dilithium}Estimated quantum time cost of our attack on $\SIS^\infty$ for Dilithium.
$T_{\text{Q-sample}}$ is the cost of our quantum sampler and $T_{\text{Q-BKZ}}$ is the cost of reducing the basis using quantum lattice sieving algorithm in \cite{DBLP:conf/eurocrypt/BonnetainCSS23}. 
Columns $\beta$ and $s$ give the parameters of the attacks (BKZ block size
and width of the Gaussian sampler).}
\begin{tabular}{c@{\hspace{4ex}}c@{\hspace{3ex}}c@{\hspace{3ex}}c@{\hspace{3ex}}c@{\hspace{3ex}}c@{\hspace{3ex}}c}
	\hline
	Scheme       &   $\log_2 T_{\text{attack}}$ &   $\beta$ &    $s$ & $-\log_2 p$ & $\log_2 T_{\text{Q-sample}}$ & $\log_2 T_{\text{Q-BKZ}}$ \\
	\hline
	NIST Level 2 &    146 &       571 & 358953 &           50 &      119 &   146 \\
	NIST Level 3 &    219 &       854 & 752706 &           72 &      179 &   219 \\
	NIST Level 5 &    312 &      1215 & 804127 &          100 &      259 &   312 \\
	\hline
\end{tabular}
\end{table}

\ifeprint
\subsubsection*{Acknowledgments.}
The authors thank the anonymous reviewers of CRYPTO 2026 for their comments which helped improve the paper. This work has been supported by the French Agence Nationale de la Recherche through the France 2030 program under grant agreement No. ANR-22-PETQ-0008 PQ-TLS.

\fi

\ifeprint
\else
\newpage
\fi

\bibliographystyle{splncs04}
\bibliography{biblio.bib}

\begin{thebibliography}{10}
\providecommand{\url}[1]{\texttt{#1}}
\providecommand{\urlprefix}{URL }
\providecommand{\doi}[1]{https://doi.org/#1}

\bibitem{DBLP:conf/stacs/AggarwalCKS21}
Aggarwal, D., Chen, Y., Kumar, R., Shen, Y.: Improved (provable) algorithms for
  the shortest vector problem via bounded distance decoding. In: {STACS}.
  LIPIcs, vol.~187, pp. 4:1--4:20. Schloss Dagstuhl - Leibniz-Zentrum f{\"{u}}r
  Informatik (2021). \doi{10.4230/LIPICS.STACS.2021.4}

\bibitem{DBLP:journals/siamcomp/AggarwalCKS25}
Aggarwal, D., Chen, Y., Kumar, R., Shen, Y.: Improved classical and quantum
  algorithms for the shortest vector problem via bounded distance decoding.
  {SIAM} J. Comput.  \textbf{54}(2),  233--278 (2025).
  \doi{10.1137/22M1486959}, \url{https://doi.org/10.1137/22m1486959}

\bibitem{DBLP:conf/stoc/AggarwalDRS15}
Aggarwal, D., Dadush, D., Regev, O., Stephens{-}Davidowitz, N.: Solving the
  shortest vector problem in $2^n$ time using discrete gaussian sampling:
  Extended abstract. In: {STOC}. pp. 733--742. {ACM} (2015).
  \doi{10.1145/2746539.2746606}

\bibitem{DBLP:journals/jacm/AharonovR05}
Aharonov, D., Regev, O.: Lattice problems in {NP} cap conp. J. {ACM}
  \textbf{52}(5),  749--765 (2005). \doi{10.1145/1089023.1089025},
  \url{https://doi.org/10.1145/1089023.1089025}

\bibitem{Ajtai96}
Ajtai, M.: Generating hard instances of lattice problems (extended abstract).
  In: {STOC}. pp. 99--108. {ACM} (1996). \doi{10.1145/237814.237838}

\bibitem{DBLP:conf/eurocrypt/Albrecht17}
Albrecht, M.R.: On dual lattice attacks against small-secret {LWE} and
  parameter choices in helib and {SEAL}. In: {EUROCRYPT} {(2)}. Lecture Notes
  in Computer Science, vol. 10211, pp. 103--129 (2017).
  \doi{10.1007/978-3-319-56614-6\_4}

\bibitem{DBLP:conf/asiacrypt/AlbrechtGPS20}
Albrecht, M.R., Gheorghiu, V., Postlethwaite, E.W., Schanck, J.M.: Estimating
  quantum speedups for lattice sieves. In: {ASIACRYPT} {(2)}. pp. 583--613.
  Lecture Notes in Computer Science, Springer (2020).
  \doi{10.1007/978-3-030-64834-3\_20}

\bibitem{DBLP:journals/iacr/AlbrechtS22}
Albrecht, M.R., Shen, Y.: Quantum augmented dual attack. {IACR} Cryptol. ePrint
  Arch. p.~656 (2022), \url{https://eprint.iacr.org/2022/656}

\bibitem{DBLP:conf/asiacrypt/AonoNS18}
Aono, Y., Nguyen, P.Q., Shen, Y.: Quantum lattice enumeration and tweaking
  discrete pruning. In: {ASIACRYPT} {(1)}. Lecture Notes in Computer Science,
  vol. 11272, pp. 405--434. Springer (2018).
  \doi{10.1007/978-3-030-03326-2\_14}

\bibitem{boer2022random}
de~Boer, K.: Random Walks on Arakelov Class Groups. Ph.D. thesis, Leiden
  University (2022), \url{https://hdl.handle.net/1887/3463719}, phD thesis,
  Institutional Repository of Leiden University

\bibitem{DBLP:journals/iacr/BoerF25}
de~Boer, K., Felderhoff, J.: Quantumly computing s-unit groups in quantified
  polynomial time and space. {IACR} Cryptol. ePrint Arch. p.~1825 (2025),
  \url{https://eprint.iacr.org/2025/1825}

\bibitem{PS26}
Bollauf, M.F., Pouly, A., Shen, Y.: Solving {SIS} in any norm via gaussian
  sampling. Cryptology {ePrint} Archive, Paper 2026/225 (2026),
  \url{https://eprint.iacr.org/2026/225}

\bibitem{DBLP:conf/eurocrypt/BonnetainCSS23}
Bonnetain, X., Chailloux, A., Schrottenloher, A., Shen, Y.: Finding many
  collisions via reusable quantum walks - application to lattice sieving. In:
  {EUROCRYPT} {(5)}. Lecture Notes in Computer Science, vol. 14008, pp.
  221--251. Springer (2023). \doi{10.1007/978-3-031-30589-4\_8}

\bibitem{DBLP:conf/eurosp/BosDKLLSSSS18}
Bos, J.W., Ducas, L., Kiltz, E., Lepoint, T., Lyubashevsky, V., Schanck, J.M.,
  Schwabe, P., Seiler, G., Stehl{\'{e}}, D.: {CRYSTALS} - kyber: {A} cca-secure
  module-lattice-based {KEM}. In: EuroS{\&}P. pp. 353--367. {IEEE} (2018).
  \doi{10.1109/EUROSP.2018.00032}

\bibitem{boyer1998tight}
Boyer, M., Brassard, G., H{\o}yer, P., Tapp, A.: Tight bounds on quantum
  searching. Fortschritte der Physik: Progress of Physics  \textbf{46}(4-5),
  493--505 (1998).
  \doi{10.1002/(SICI)1521-3978(199806)46:4/5<493::AID-PROP493>3.0.CO;2-P}

\bibitem{DBLP:conf/stoc/BrakerskiLPRS13}
Brakerski, Z., Langlois, A., Peikert, C., Regev, O., Stehl{\'{e}}, D.:
  Classical hardness of learning with errors. In: {STOC}. pp. 575--584. {ACM}
  (2013). \doi{10.1145/2488608.2488680}

\bibitem{brassard2002quantum}
Brassard, G., Hoyer, P., Mosca, M., Tapp, A.: Quantum amplitude amplification
  and estimation. Contemporary Mathematics  \textbf{305},  53--74 (2002).
  \doi{10.1090/conm/305/05215}

\bibitem{DBLP:conf/crypto/CarrierMST25}
Carrier, K., Meyer{-}Hilfiger, C., Shen, Y., Tillich, J.: Assessing the impact
  of a variant of matzov's dual attack on kyber. In: {CRYPTO} {(1)}. Lecture
  Notes in Computer Science, vol. 16000, pp. 444--476. Springer (2025).
  \doi{10.1007/978-3-032-01855-7\_15}

\bibitem{DBLP:journals/iacr/CarrierST22}
Carrier, K., Shen, Y., Tillich, J.: Faster dual lattice attacks by using coding
  theory. {IACR} Cryptol. ePrint Arch. p.~1750 (2022)

\bibitem{DBLP:conf/asiacrypt/ChaillouxL21}
Chailloux, A., Loyer, J.: Lattice sieving via quantum random walks. In:
  {ASIACRYPT} {(4)}. Lecture Notes in Computer Science, vol. 13093, pp. 63--91.
  Springer (2021). \doi{10.1007/978-3-030-92068-5\_3}

\bibitem{DBLP:journals/iacr/ChoHKLS24}
Cho, B., Hhan, M., Kim, T., Lee, J., Shen, Y.: Does quantum lattice sieving
  require quantum {RAM}? {IACR} Cryptol. ePrint Arch. p.~1700 (2024),
  \url{https://eprint.iacr.org/2024/1700}

\bibitem{Ducas2017CRYSTALSDilithiumAS}
Ducas, L., Kiltz, E., Lepoint, T., Lyubashevsky, V., Schwabe, P., Seiler, G.,
  Stehl{\'e}, D.: Crystals-dilithium algorithm specifications and supporting
  documentation (2017),
  \url{https://api.semanticscholar.org/CorpusID:198994007}

\bibitem{DBLP:journals/tches/DucasKLLSSS18}
Ducas, L., Kiltz, E., Lepoint, T., Lyubashevsky, V., Schwabe, P., Seiler, G.,
  Stehl{\'{e}}, D.: Crystals-dilithium: {A} lattice-based digital signature
  scheme. {IACR} Trans. Cryptogr. Hardw. Embed. Syst.  \textbf{2018}(1),
  238--268 (2018). \doi{10.13154/TCHES.V2018.I1.238-268},
  \url{https://doi.org/10.13154/tches.v2018.i1.238-268}

\bibitem{DBLP:conf/crypto/DucasP23}
Ducas, L., Pulles, L.N.: Does the dual-sieve attack on learning with errors
  even work? In: {CRYPTO} {(3)}. Lecture Notes in Computer Science, vol. 14083,
  pp. 37--69. Springer (2023). \doi{10.1007/978-3-031-38548-3\_2}

\bibitem{DBLP:journals/corr/quant-ph-9607014}
D{\"{u}}rr, C., H{\o}yer, P.: A quantum algorithm for finding the minimum. CoRR
   \textbf{quant-ph/9607014} (1996),
  \url{http://arxiv.org/abs/quant-ph/9607014}

\bibitem{DBLP:conf/indocrypt/EspitauJK20}
Espitau, T., Joux, A., Kharchenko, N.: On a dual/hybrid approach to small
  secret {LWE} - {A} dual/enumeration technique for learning with errors and
  application to security estimates of {FHE} schemes. In: {INDOCRYPT}. Lecture
  Notes in Computer Science, vol. 12578, pp. 440--462. Springer (2020).
  \doi{10.1007/978-3-030-65277-7\_20}

\bibitem{DBLP:conf/stoc/GentryPV08}
Gentry, C., Peikert, C., Vaikuntanathan, V.: Trapdoors for hard lattices and
  new cryptographic constructions. In: {STOC}. pp. 197--206. {ACM} (2008).
  \doi{10.1145/1374376.1374407}

\bibitem{DBLP:conf/asiacrypt/GuoJ21}
Guo, Q., Johansson, T.: Faster dual lattice attacks for solving {LWE} with
  applications to {CRYSTALS}. In: {ASIACRYPT} {(4)}. Lecture Notes in Computer
  Science, vol. 13093, pp. 33--62. Springer (2021).
  \doi{10.1007/978-3-030-92068-5\_2}

\bibitem{DBLP:conf/icalp/HoyerMW03}
H{\o}yer, P., Mosca, M., de~Wolf, R.: Quantum search on bounded-error inputs.
  In: {ICALP}. Lecture Notes in Computer Science, vol.~2719, pp. 291--299.
  Springer (2003). \doi{10.1007/3-540-45061-0\_25}

\bibitem{10.5555/11534.11537}
Jerrum, M.R., Valiant, L.G., Vazirani, V.V.: Random generation of combinatorial
  structures from a uniform. Theor. Comput. Sci.  \textbf{43}(2–3),
  169–188 (Jul 1986)

\bibitem{kitaev2008wavefunction}
Kitaev, A., Webb, W.A.: Wavefunction preparation and resampling using a quantum
  computer. arXiv preprint arXiv:0801.0342  (2008)

\bibitem{DBLP:conf/soda/Klein00}
Klein, P.N.: Finding the closest lattice vector when it's unusually close. In:
  {SODA}. pp. 937--941. {ACM/SIAM} (2000). \doi{DBLP:conf/soda/Klein00}

\bibitem{knuth2011taocp2}
Knuth, D.E.: The Art of Computer Programming, Volume 2: Seminumerical
  Algorithms. Addison-Wesley, 4 edn. (2011)

\bibitem{laarhoven2015}
Laarhoven, T.: Search problems in cryptography: From fingerprinting to lattice
  sieving. Ph.D. thesis, Eindhoven University of Technology (2015)

\bibitem{DBLP:conf/pqcrypto/LaarhovenMP13}
Laarhoven, T., Mosca, M., van~de Pol, J.: Solving the shortest vector problem
  in lattices faster using quantum search. In: PQCrypto. Lecture Notes in
  Computer Science, vol.~7932, pp. 83--101. Springer (2013).
  \doi{10.1007/978-3-642-38616-9\_6}

\bibitem{MATZOV}
MATZOV: Report on the security of {LWE}: Improved dual lattice attack (Apr
  2022). \doi{10.5281/zenodo.6412487},
  \url{https://doi.org/10.5281/zenodo.6412487}

\bibitem{DBLP:conf/focs/MicciancioR04}
Micciancio, D., Regev, O.: Worst-case to average-case reductions based on
  gaussian measures. vol.~37, pp. 267--302 (2007).
  \doi{10.1137/S0097539705447360},
  \url{https://doi.org/10.1137/S0097539705447360}

\bibitem{DBLP:journals/siamcomp/MicciancioR07}
Micciancio, D., Regev, O.: Worst-case to average-case reductions based on
  gaussian measures. {SIAM} J. Comput.  \textbf{37}(1),  267--302 (2007).
  \doi{10.1137/S0097539705447360},
  \url{https://doi.org/10.1137/S0097539705447360}

\bibitem{Montanaro}
Montanaro, A.: Quantum speedup of monte carlo methods. Proceedings of the Royal
  Society A: Mathematical, Physical and Engineering Sciences
  \textbf{471}(2181),  20150301 (09 2015). \doi{10.1098/rspa.2015.0301},
  \url{https://doi.org/10.1098/rspa.2015.0301}

\bibitem{nielsen2010quantum}
Nielsen, M.A., Chuang, I.L.: Quantum Computation and Quantum Information: 10th
  Anniversary Edition. Cambridge University Press, USA, 10th edn. (2011)

\bibitem{DBLP:journals/toct/OzolsRR13}
Ozols, M., Roetteler, M., Roland, J.: Quantum rejection sampling. {ACM} Trans.
  Comput. Theory  \textbf{5}(3),  11:1--11:33 (2013).
  \doi{10.1145/2493252.2493256}, \url{https://doi.org/10.1145/2493252.2493256}

\bibitem{DBLP:conf/eurocrypt/PoulyS24}
Pouly, A., Shen, Y.: Provable dual attacks on learning with errors. In:
  {EUROCRYPT} {(6)}. Lecture Notes in Computer Science, vol. 14656, pp.
  256--285. Springer (2024). \doi{10.1007/978-3-031-58754-2\_10}

\bibitem{DBLP:conf/pqcrypto/PoulyS25}
Pouly, A., Shen, Y.: Discrete gaussian sampling for {BKZ}-reduced basis. In:
  PQCrypto {(2)}. Lecture Notes in Computer Science, vol. 15578, pp. 63--88.
  Springer (2025). \doi{10.1007/978-3-031-86602-9\_3}

\bibitem{Falcon}
Prest, T., Fouque, P.A., Hoffstein, J., Kirchner, P., Lyubashevsky, V., Pornin,
  T., Ricosset, T., Seiler, G., Whyte, W., Zhang, Z.: Falcon. Post-Quantum
  Cryptography Project of NIST  (2020)

\bibitem{DBLP:conf/asiacrypt/QuX25}
Qu, H., Xu, G.: On the provable dual attack for {LWE} by modulus switching. In:
  {ASIACRYPT} {(3)}. Lecture Notes in Computer Science, vol. 16247, pp. 34--64.
  Springer (2025). \doi{10.1007/978-981-95-5099-9\_2}

\bibitem{DBLP:journals/jacm/Regev09}
Regev, O.: On lattices, learning with errors, random linear codes, and
  cryptography. vol.~56, pp. 34:1--34:40 (2009). \doi{10.1145/1568318.1568324},
  \url{https://doi.org/10.1145/1568318.1568324}

\bibitem{DBLP:conf/stoc/0001S17}
Regev, O., Stephens{-}Davidowitz, N.: A reverse minkowski theorem. In: {STOC}.
  pp. 941--953. {ACM} (2017). \doi{DBLP:conf/stoc/0001S17}

\bibitem{sanders2019black}
Sanders, Y.R., Low, G.H., Scherer, A., Berry, D.W.: Black-box quantum state
  preparation without arithmetic. Phys. Rev. Lett.  \textbf{122},  020502 (Jan
  2019). \doi{10.1103/sanders2019black},
  \url{https://link.aps.org/doi/10.1103/sanders2019black}

\bibitem{MTMK}
Wang, Z., Ling, C.: Lattice gaussian sampling by markov chain monte carlo:
  Bounded distance decoding and trapdoor sampling. {IEEE} Trans. Inf. Theory
  \textbf{65}(6),  3630--3645 (2019). \doi{10.1109/TIT.2019.2901497},
  \url{https://doi.org/10.1109/TIT.2019.2901497}

\bibitem{yoder2014fixed}
Yoder, T.J., Low, G.H., Chuang, I.L.: Fixed-point quantum search with an
  optimal number of queries. Physical review letters  \textbf{113}(21),  210501
  (2014)

\end{thebibliography}

\appendix


\section{Finding the maximum in a register}\label{appendix:Finding the maximum in a register}

In this section, we prove~\autoref{theorem: cost of finding the maximum with faulty function} from~\autoref{sec:Finding the maximum in a register}.

\thmmaxfinding*


This is essentially a ``bounded-error'' version of quantum maximum-finding, which combines the results of~\cite{DBLP:journals/corr/quant-ph-9607014} (for maximum-finding) and~\cite[Section 3]{DBLP:conf/icalp/HoyerMW03} (for bounded-error quantum search).

We start from the quantum minimum-finding from~\cite{DBLP:journals/corr/quant-ph-9607014}, adapted for maximum-finding, assuming for now that the function $s$ is exact. The algorithm runs as follows:
\begin{enumerate}
    \item Choose $j \in \{0,1\}^\ell$ uniformly at random.
	\item Apply quantum exponential search~\cite{boyer1998tight}  over the space $\{0,1\}^\ell$ to find $i \in \{0,1\}^\ell$ such that $s(i) > s(j)$. Let $j \leftarrow i$.
    \item Go back to Step $2.$
\end{enumerate}
One should repeat this loop until ``the total running time is less than $22.5\sqrt{2^\ell}+1.4\log_2(2^\ell)$'', as per~\cite{DBLP:journals/corr/quant-ph-9607014}. Indeed, after some time, there will not be any new $i$ such that $s(i) > s(j)$, and the exponential search step will just run forever (before we stop it).

\begin{theorem}[{\cite[Theorem 1, Lemma 2]{DBLP:journals/corr/quant-ph-9607014}}]\label{theorem:algorithm spotting the max in a register, with measures}
Consider a function $s$ with quantum oracle access.
The algorithm presented in~\cite{DBLP:journals/corr/quant-ph-9607014} finds $i\in \{0,1\}^\ell$ such that $s(i)$ is maximal with probability at least $1/2$, in time (and number of calls to $s$) $O(\sqrt{2^\ell})$.
\end{theorem}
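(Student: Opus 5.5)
The plan is to follow the original analysis of Dürr and Høyer, adapted to maximum-finding. Write $N = 2^\ell$. First I would break ties by index, so that all values $s(i)$ are distinct; this only changes the comparison in the marking oracle. Then I would rank the elements so that rank $1$ is the maximum and rank $N$ the minimum. The threshold index $j$ starts at a uniformly random element and moves strictly upward in rank at each successful exponential search.

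The first key step is the combinatorial lemma: for every $r \le N$, the probability that the element of rank $r$ is ever chosen as threshold $j$ is exactly $1/r$. The reason is that when the threshold has rank $t$, a successful run of quantum exponential search~\cite{boyer1998tight} measures a marked state whose amplitudes are uniform over the marked set. So the new threshold is uniformly distributed among the $t-1$ better elements. The lemma then follows by induction. The element of rank $r$ is selected if and only if it is the first element among ranks $\{1,\dots,r\}$ to be hit. By the uniform-jump property, the first element hit in this set is uniform over it, which gives probability $1/r$. I expect this lemma, specifically justifying uniformity of the output of exponential search over marked items, to be the main technical point; it follows from the symmetry of Grover iterations with respect to permutations of the marked elements.

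The second step bounds the expected cost of one search stage. When the threshold has rank $r \ge 2$, there are $r-1$ marked elements. Exponential search finds one in expected $O(\sqrt{N/(r-1)})$ oracle calls, by the Boyer–Brassard–Høyer–Tapp bound. Summing over ranks, with weights given by the probability that each rank is visited, the expected number of calls before the threshold becomes the maximum is
\[
\sum_{r=2}^{N} \frac{1}{r}\, O\!\left(\sqrt{\frac{N}{r-1}}\right) = O\!\left(\sqrt{N}\sum_{r\ge 2} r^{-3/2}\right) = O(\sqrt{N}).
\]
The same bound applies to the gate count, up to the comparison cost inside each oracle call. The constant in this bound is what yields the explicit cutoff $22.5\sqrt{N}+1.4\log_2 N$.

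Finally, I would run the loop for twice the expected time bound and output the current $j$. By Markov's inequality, with probability at least $1/2$ the threshold has reached the maximum before the cutoff. Once it has, it is never changed, since no marked element remains. The total number of calls to $s$ is therefore $O(\sqrt{2^\ell})$.
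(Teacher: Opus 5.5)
Your proposal is correct and is essentially the original D{\"u}rr--H{\o}yer analysis that the paper relies on by citation: the visit-probability lemma (the rank-$r$ element is ever the threshold with probability $1/r$, from the uniformity of exponential search over the marked set), the expected-time bound of their Lemma~2 (the BBHT cost $O(\sqrt{N/(r-1)})$ weighted by $1/r$ and summed), and Markov's inequality at twice the expected time for their Theorem~1. One side remark needs correcting: the gate count is not $O(\sqrt{2^\ell})$ up to comparison costs, because each Grover iteration also applies a diffusion operator on $\ell$ qubits costing $O(\ell)$ gates, so the $O(\sqrt{2^\ell})$ in the statement is a query/iteration count, which is why \autoref{theorem: cost of finding the maximum with faulty function} lists $O(k\ell\sqrt{2^\ell})$ additional gates separately.
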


In Step 3, \emph{exponential search} is a sequence of Grover searches with an increasing number of iterations, which allows to produce the solution state in some guaranteed average time even though the probability of success of the search itself is unknown.

Like in~\cite{DBLP:conf/eurocrypt/PoulyS24}, we adapt this algorithm to the case where the function $s$ has a non-zero error rate (and only returns the ``true'' value $s_i$ with probability $\geq 9/10$). The idea is to replace the quantum search steps in the exponential search layers, which are instances of Grover's search, by the bounded-error quantum search presented in~\cite{DBLP:conf/icalp/HoyerMW03}. 


\begin{theorem}[{\cite[Section 3]{DBLP:conf/icalp/HoyerMW03}}]\label{theorem: Grover with errors}
Consider a set of $\{F_i\}_{i\in \{0,1\}^\ell}$ functions, such that each $F_i$ is destined to be evaluated on $i$, and to return a value $f_i$, that equals $0$ or $1$. Consider that each $F_i$ can return a faulty result with probability at most $1/10$, and that there are $t$ solutions. There exists a quantum algorithm that finds a solution in average time $O(\sqrt{2^\ell / t})$.
\end{theorem}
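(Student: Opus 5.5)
The plan is to reduce to the exact-oracle setting of quantum amplitude amplification (\autoref{sec:qaa}) and the exponential search of~\cite{boyer1998tight}. The only new ingredient is controlling how the faulty evaluations of the $F_i$ perturb the state. First, each $F_i$ is made unitary and measurement-free: it is run coherently and leaves a garbage register, so that
\[ \ket{i}\ket{0} \mapsto \ket{i}\bigl(\sqrt{p_i}\ket{f_i}\ket{g_i} + \sqrt{1-p_i}\ket{1-f_i}\ket{g'_i}\bigr), \qquad p_i \geq 9/10. \]
Its deviation from the ideal oracle $\ket{i}\ket{0}\mapsto\ket{i}\ket{f_i}$ is measured in operator norm after phase kickback and uncomputation. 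Majority voting over $r$ independent copies followed by uncomputation gives an oracle whose error probability is $2^{-\Omega(r)}$, and hence whose norm distance to an exact phase oracle is $2^{-\Omega(r)}$ as well, up to a square root.

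The naive approach would use $r = \Theta(\log(2^\ell/t))$ copies for every call. Then the $O(\sqrt{2^\ell/t})$ Grover iterations accumulate a total norm error of at most (number of iterations)$\times 2^{-\Omega(r)} = O(1)$, which is small. This already gives the bound up to a logarithmic factor. To remove the logarithm, I would follow the recursive structure of~\cite[Section 3]{DBLP:conf/icalp/HoyerMW03}: build a tower of amplitude-amplification layers. Layer $j$ boosts the success amplitude by a constant factor using $O(1)$ calls to layer $j-1$, and layer $j$ is also given an error-reduction budget of $O(j)$ repetitions. The layer-$j$ errors are then $2^{-\Omega(j)}$. Since the number of layer-$j$ invocations shrinks geometrically while the per-call repetition cost only grows linearly in $j$, the total cost is a convergent series $\sum_j O(j)\,\gamma^{\,j}\sqrt{2^\ell/t}$ with $\gamma<1$, which is $O(\sqrt{2^\ell/t})$. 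The total state error is bounded by a summable series and stays a small constant.

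Once a candidate $i$ has been measured, it is verified classically by $O(1)$ more evaluations of $F_i$ with majority vote, so false positives are rejected with constant probability. Since $t$ is unknown, the whole procedure is wrapped in the exponential search of~\cite{boyer1998tight}: guesses for $\sqrt{2^\ell/t}$ grow geometrically. The standard argument that the expected work is dominated by the last useful stage then gives the average time $O(\sqrt{2^\ell/t})$.

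The main obstacle is the error-accumulation analysis of the recursive scheme. One must show that errors introduced at different layers add up in norm and are not amplified by the reflections. Reflections are unitary, so each layer adds at most its own oracle error times its number of calls. One must also check that the geometric schedule of repetitions keeps the total error below a constant while keeping the cost linear. I would first prove a one-layer lemma: if an amplitude-amplification step uses $k$ calls to an oracle with norm error $\varepsilon$, its output is within $k\varepsilon$ of the ideal output. I would then unroll it over the layers.
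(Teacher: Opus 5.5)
Your architecture (nested amplitude amplification, a fresh majority vote with $O(j)$ repetitions at layer $j$, exponential search on top) matches what the paper invokes. The paper gives no proof beyond citing \cite[Section 3]{DBLP:conf/icalp/HoyerMW03}. However, the error analysis you plan would fail.

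By your own cost series, the layer with $O(1)$ repetitions is invoked $\Theta(\sqrt{2^\ell/t})$ times. That layer has constant error probability $\epsilon_0$. Summing ``oracle error times number of calls'' over the layers therefore gives $\Theta(\sqrt{2^\ell/t})$, not a small constant. This is not slack in the bound:
\begin{itemize}
\item A compute--flip--uncompute phase oracle that errs with probability $p$ on a non-solution $\ket{i}$ returns a state at distance exactly $2\sqrt{p}$ from $\ket{i}\ket{0}$.
\item At the low layers almost all amplitude sits on non-solutions. So each bottom-layer call moves the state a constant distance, about $2\sqrt{\epsilon_0}$, from the exact-oracle run.
\item The good amplitude of the exact-oracle run at that stage is only of order $\sqrt{t/2^\ell}$.
\end{itemize}
Cheap bottom layers and small additive norm error are incompatible, so the one-layer lemma plus unrolling cannot yield the theorem.

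The missing idea is a relative, multiplicative analysis; this is what the cited argument actually supplies. Do not compare with the exact oracle at all.
\begin{itemize}
\item Let layer $j$ amplify the unitary $A'_j=\mathrm{Check}_j\circ A_{j-1}$ with respect to the fresh flag qubit written by $\mathrm{Check}_j$. This amplification is exact.
\item One iterate uses $A_{j-1}$ three times. It multiplies the flag-$1$ component by $3-4a_j$ and the flag-$0$ component by $1-4a_j$ (up to sign), without changing their internal structure. Here $a_j$ is the flag-$1$ weight.
\item Because the check is fresh, non-solutions contribute at most $\epsilon_j$ to $a_j$, however much amplitude earlier false positives gave them.
\item Write $g_j$ for the weight on true solutions. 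Then $a_j\le g_{j-1}+\epsilon_j$, $g_j\le 9g_{j-1}$, and $g_j\ge(3-4a_j)^2(1-\epsilon_j)g_{j-1}\ge 9\,(1-O(g_{j-1}+\epsilon_j))\,g_{j-1}$.
\item Take $\epsilon_j=\epsilon_0 2^{-j}$ with $\epsilon_0$ a small constant. Together with $g_j\le 9^j t/2^\ell$, the deficits are summable. After $L=\log_9(2^\ell/t)-O(1)$ layers, $g_L$ is a constant.
\end{itemize}
This is exactly what your series silently assumes. Each layer must gain a factor equal to its number of calls, up to $1-\delta_j$ with $\sum_j\delta_j<\infty$. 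A mere ``constant-factor boost with $O(1)$ calls'' would only give cost $(2^\ell/t)^{c}$ for some $c>1/2$.

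The noise at the bottom layers then costs only a constant factor in the final success probability. With this in place, the cost $\sum_j O(j)\,3^{L-j}=O(\sqrt{2^\ell/t})$, the classical verification and the exponential-search wrapper go through as you describe.
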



The method proposed in~\cite[Section 3]{DBLP:conf/icalp/HoyerMW03} is a combination of exponential search with their bounded-error search. The bounded-error search consists in nested quantum amplification subroutines, in which one amplifies the elements evaluating to 1, and progressively refines the test by using a majority vote on several instances of $F_i$.

In our case, we can immediately plug in the result of~\autoref{theorem: Grover with errors} into the algorithm above. Indeed, the test $f_i$ is exactly: $s(i) > s(j)$, where $j$ is the current constant. While $s(j)$ is evaluated once before running the search, and so can be evaluated with certainty, the error comes from the evaluation of $s(i)$ in the search. Thus the probability of error for $s(i)$ only needs to be 1/10. This gives the following algorithm.


\begin{algorithm}[H]
\caption{Finding the maximum}\label{alg:finding the maximum in a register}
\begin{algorithmic}[1]
\Statex \textbf{Input: } An efficient randomized function $s$ defined on $\{0,1\}^\ell$, that returns a result $s(i)$ that is supposed to be equal to a value $s_i$, and that is faulty with probability at most $1/10$. An integer $k>0$.
\smallskip
\Statex \textbf{Output: }An index $i\in \{0,1\}^\ell$
\medskip
\State Choose $j \in \{0,1\}^\ell$ uniformly at random
\State $j_{max} \leftarrow j$
\For{$k_1 = 1 \to k$}
\Repeat
\State Apply bounded-error quantum exponential search over the space $\{0,1\}^\ell$ to find $i \in \{0,1\}^\ell$ such that $s(i) > s(j)$.
\Until{the loop has expended $22.5\sqrt{2^\ell}+1.4\log_2(2^\ell)$ time}

     \State If $s(j) >s(j_{max})$, set $j_{max}=j$
\EndFor
\State Return $j_{max}$
\end{algorithmic}
\end{algorithm}

We can now give the proof of~\autoref{theorem: cost of finding the maximum with faulty function}.

\begin{proof}
By~\autoref{theorem: Grover with errors}, the exponential search loop finds a solution in $O(\sqrt{2^\ell / t})$ average time, even under bounded-error inputs. As we said above, by assumption, the probability of error of $s$ is exactly the one required by the theorem.

By~\autoref{theorem:algorithm spotting the max in a register, with measures}, the loop succeeds in finding the maximum in time $\bigO{\sqrt{2^\ell}}$ and with probability of success $1/2$. By repeating the algorithm $k$ times and taking the maximum of the obtained values, we amplify the success probability to $1-1/2^k$. Finally, this algorithm is efficient in space, as we only require $\ell$ qubits to represent $i$, and enough space to compute $s$ in superposition. Since $s$ is assumed to be efficient this gives a $\poly (\ell)$ space. There is an additional count of $O(\ell)$ gates per iterate of Grover's search, which gives a total of $\bigO{k \ell \sqrt{2^\ell}}$ additional gates. \qed
\end{proof}

\section{Proofs of Some Results}\label{sec: Proof of some results}

\subsection{Justification of~\autoref{lemma:rho Z approx}}

\rhoZapprox*

This follows immediately from Lemma 2 in~\cite{DBLP:conf/pqcrypto/PoulyS25}, which we restate here for completeness. We simply bound together the case of small and large $\sigma$.

\begin{lemma}[{\cite[Lemma 2]{DBLP:conf/pqcrypto/PoulyS25}}]
    Consider a real number $\sigma>0$ and define 
    \begin{equation*}
        \tilde \rho(\sigma) = \begin{cases}
            1+2\exp{-\pi/\sigma^2}\text{ if }\sigma\leqslant 1\\
            \sigma(1+2\exp{-\pi\sigma^2})\text{ otherwise}.
        \end{cases}
    \end{equation*}
    Then $\tilde \rho$ is a continuously increasing function, and for any $\sigma>0$, 
    \begin{equation*}
        0<\rho_\sigma(\Z)-\tilde \rho(\sigma) \leqslant 2\sum_{k=2}^\infty \exp{-\pi k^2}\leqslant \gamma := 6.974685811\times 10^{-6}
    \end{equation*}
\end{lemma}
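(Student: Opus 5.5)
The plan is to treat the two regimes $\sigma \leq 1$ and $\sigma > 1$ separately. In the large-$\sigma$ regime, Poisson summation (equivalently \autoref{lemma:Fourier serie of f L,sigma} applied to $\L=\Z$ at $x=0$, which gives $\rho_\sigma(\Z) = \sigma\,\rho_{1/\sigma}(\Z)$ because $\hat\Z=\Z$) will move the problem to a rapidly convergent series. In both regimes, $\tilde\rho$ is then exactly the $k=0$ and $k=\pm1$ terms of the relevant series, and the error is the tail over $|k|\geq 2$.

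\textbf{Case $\sigma \leq 1$.} Write $\rho_\sigma(\Z) = 1 + 2e^{-\pi/\sigma^2} + 2\sum_{k\geq 2} e^{-\pi k^2/\sigma^2}$. The difference $\rho_\sigma(\Z)-\tilde\rho(\sigma)$ is this last sum. It is strictly positive. Since $1/\sigma^2 \geq 1$, each term satisfies $e^{-\pi k^2/\sigma^2}\leq e^{-\pi k^2}$, which gives the bound $2\sum_{k\geq2}e^{-\pi k^2}$.

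\textbf{Case $\sigma > 1$.} Use $\rho_\sigma(\Z)=\sigma\sum_{k\in\Z} e^{-\pi k^2\sigma^2} = \sigma(1+2e^{-\pi\sigma^2}) + 2\sigma\sum_{k\geq 2}e^{-\pi k^2\sigma^2}$. The difference is again strictly positive. The key inequality is $\sigma e^{-\pi k^2\sigma^2}\leq e^{-\pi k^2}$ for $\sigma\geq 1$ and $k\geq 2$. To show it, set $g(\sigma)=\ln\sigma-\pi k^2(\sigma^2-1)$. Then $g(1)=0$ and $g'(\sigma)=1/\sigma-2\pi k^2\sigma<0$ on $[1,\infty)$, so $g\leq 0$ there. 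Summing over $k\geq 2$ gives the same tail bound.

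\textbf{Properties of $\tilde\rho$ and the constant.} Both branches equal $1+2e^{-\pi}$ at $\sigma=1$, so $\tilde\rho$ is continuous. The first branch is clearly increasing in $\sigma$. For the second branch, the derivative is $1+2e^{-\pi\sigma^2}(1-2\pi\sigma^2)$. The function $h(\sigma)=e^{-\pi\sigma^2}(2\pi\sigma^2-1)$ has its only critical point at $\sigma^2=3/(2\pi)<1$, so $h$ is decreasing on $[1,\infty)$. Hence the derivative is at least $1-2e^{-\pi}(2\pi-1)>0$. Finally, the constant is evaluated numerically: $2\sum_{k\geq2}e^{-\pi k^2} = 2(e^{-4\pi}+e^{-9\pi}+\cdots)\approx 6.9747\times10^{-6}$. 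The tail after $k=2$ is negligible and can be bounded by a geometric series.

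The only step that needs any care is the monotonicity comparison $\sigma e^{-\pi k^2\sigma^2}\leq e^{-\pi k^2}$ in the large-$\sigma$ case, together with the sign check on the derivative of $\tilde\rho$. Both are handled by the elementary one-variable calculus above.

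Lemma~\ref{lemma:rho Z approx} then follows immediately. For $\sigma\leq1$, $\rho_\sigma(\Z)\leq 1+2e^{-\pi}+\gamma\leq 4$. For $\sigma>1$, $\rho_\sigma(\Z)\leq 3\sigma+\gamma\leq 3\sigma+4$.
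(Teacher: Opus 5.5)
Your proof is correct. The paper does not prove this statement itself. It restates Lemma~2 of \cite{DBLP:conf/pqcrypto/PoulyS25} for completeness and only uses it to obtain \autoref{lemma:rho Z approx} by merging the two regimes, which is exactly your last paragraph. Your argument therefore supplies a self-contained proof of what the paper imports, along the standard lines:
\begin{itemize}
\item for $\sigma\le 1$, the $k=0,\pm1$ terms of $\sum_k e^{-\pi k^2/\sigma^2}$ are $\tilde\rho(\sigma)$, and the tail is dominated termwise by $e^{-\pi k^2}$;
\item for $\sigma>1$, Poisson summation reduces to the same situation, with $\sigma e^{-\pi k^2\sigma^2}\le e^{-\pi k^2}$ absorbing the factor $\sigma$.
\end{itemize}
Your monotonicity computation is also right: $h'(\sigma)=2\pi\sigma e^{-\pi\sigma^2}(3-2\pi\sigma^2)$, and $1-2e^{-\pi}(2\pi-1)\approx 0.54>0$.

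There are two minor corrections.

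First, your parenthetical justification of Poisson summation does not work as stated. Evaluating the Fourier series of $f_{\Z,\sigma}$ at $x=0$ via \autoref{lemma:fourier serie equal to f} and \autoref{lemma:Fourier serie of f L,sigma} only gives the tautology $1=\sum_{w\in\Z}D_{\Z,1/\sigma}(w)$, because both sides are normalized. The identity $\rho_\sigma(\Z)=\sigma\rho_{1/\sigma}(\Z)$ comes instead from the zeroth Fourier coefficient. Unfolding the integral over $[0,1)$ gives $\hat f_{\Z,\sigma}(0)=\frac{1}{\rho_\sigma(\Z)}\int_{\R}e^{-\pi z^2/\sigma^2}\,dz=\sigma/\rho_\sigma(\Z)$. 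The lemma gives $\hat f_{\Z,\sigma}(0)=D_{\Z,1/\sigma}(0)=1/\rho_{1/\sigma}(\Z)$. Alternatively, simply cite Poisson summation.

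Second, $2\sum_{k\ge 2}e^{-\pi k^2}\approx 6.9746857635\times 10^{-6}$ lies below $\gamma$ by only about $5\times 10^{-14}$. The last inequality therefore needs $2e^{-4\pi}$ and $2e^{-9\pi}$ evaluated to roughly nine significant digits, with the remaining tail bounded as you say. Your quoted value $6.9747\times 10^{-6}$ is too coarse for this; it even exceeds $\gamma$.
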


\subsection{Proof of~\autoref{lemma: inequality on gaussian masses on finite sets}}

In order to prove~\autoref{lemma: inequality on gaussian masses on finite sets}, we use the following. 

\begin{lemma}[{\cite[Lemma 2.10]{DBLP:journals/siamcomp/MicciancioR07}}]\label{lemma:rho sigma c on a finite subset}
    Let $\L$ be a full rank lattice in $\R^m$, and $\sigma>0$, $t>1/\sqrt{2\pi}$, and $\vec c\in \R^m$. Let $\mathcal B$ be the ball of $\R^m$ of center $0$ and radius 1, according to the euclidean distance. Then, we have 
    \begin{gather*}
        \rho_{\sigma}(\L\backslash t\sqrt{m}\mathcal B) < C^m \rho_{\sigma}(\L)\\
        \rho_{\sigma,\vec c}(\L\backslash t\sqrt{m}\mathcal B) < 2C^m \rho_{\sigma}(\L)
    \end{gather*}
    where $C := t\sqrt{2\pi e}\times \exp{-\pi t^2}$.
\end{lemma}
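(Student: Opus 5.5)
This lemma is quoted from Micciancio and Regev, and ultimately comes from Banaszczyk. I would reprove it with the standard ``scaling trick'' rather than taking it as a black box.

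\textbf{Unshifted bound.} By rescaling the lattice I can assume $\sigma=1$ and write $\rho=\rho_1$. Fix a parameter $s>1$. For every $\vec x$ the pointwise identity
\[
\rho(\vec x)=\rho_s(\vec x)\,e^{-\pi(1-1/s^2)\|\vec x\|^2}
\]
holds. On $\L\backslash t\sqrt m\mathcal B$ the exponential factor is at most $e^{-\pi(1-1/s^2)t^2 m}$, so
\[
\rho(\L\backslash t\sqrt m\mathcal B)\le e^{-\pi(1-1/s^2)t^2m}\rho_s(\L).
\]
Next I compare $\rho_s(\L)$ with $\rho(\L)$ using the Poisson summation formula, which the paper already uses implicitly through \autoref{lemma:Fourier serie of f L,sigma}. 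It gives $\rho_s(\L)=\det(\hat\L)\,s^m\rho_{1/s}(\hat\L)$. Since $\rho_{1/s}\le\rho$ pointwise for $s\ge 1$, this yields $\rho_s(\L)\le s^m\rho(\L)$. Altogether,
\[
\rho(\L\backslash t\sqrt m\mathcal B)\le\left(s\,e^{-\pi(1-1/s^2)t^2}\right)^m\rho(\L).
\]
I then optimise over $s$: the logarithm $\ln s-\pi t^2+\pi t^2/s^2$ is minimised at $s^2=2\pi t^2$. Plugging this in gives exactly $\bigl(t\sqrt{2\pi e}\,e^{-\pi t^2}\bigr)^m=C^m$. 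The hypothesis $t>1/\sqrt{2\pi}$ is precisely what makes $s>1$ admissible. The strict inequality comes from the strict comparison $\rho_{1/s}<\rho$ on the nonzero dual vectors.

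\textbf{Shifted bound.} This is where I expect the main difficulty. The condition $\|\vec x\|\ge t\sqrt m$ says nothing directly about $\|\vec x-\vec c\|$, so the pointwise identity above cannot be applied to $\rho(\vec x-\vec c)$ as it stands.

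My first attempt is to exploit the symmetry of $\L\backslash t\sqrt m\mathcal B$ under $\vec x\mapsto-\vec x$, which lets me write
\[
\rho_{\vec c}(\L\backslash t\sqrt m\mathcal B)=\tfrac12\sum_{\vec x\in\L,\ \|\vec x\|\ge t\sqrt m}\bigl(\rho(\vec x-\vec c)+\rho(\vec x+\vec c)\bigr).
\]
I would then follow Banaszczyk's argument in its Fourier form. Expand $\rho_{\vec c}$ restricted to the lattice via Poisson summation, as a sum of $\det(\hat\L)\rho(\vec y)e^{2\pi i\langle\vec y,\vec c\rangle}$ over $\vec y\in\hat\L$. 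Bound the tail outside the ball by inserting the Gaussian weight $e^{\pi(1-1/s^2)(\|\vec x\|^2-t^2m)}\ge 1$, which makes the sum a shifted Gaussian sum of width $s$. Finally bound that shifted sum by the centred one using \autoref{lemma:rho sigma c}: $\rho_{s,\vec c'}(\L)\le\rho_s(\L)$. In this route the extra factor $2$ is what is lost in handling the shift: one copy comes from each of $\pm\vec c$, or equivalently from bounding the real part of the phase by its modulus.

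If the bookkeeping of the shift produces an unwanted factor such as $e^{\pi(s^2-1)\|\vec c\|^2}$, I would fall back on Banaszczyk's original statement for shifted lattices. Applied to the coset $\L-\vec c$, it gives the bound $2C^m\rho(\L)$ directly, and I would simply cite it.
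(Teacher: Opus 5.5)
There is a genuine gap, and it sits in the two reductions you use to reach a ``standard'' form. The statement as transcribed here is not the one Micciancio--Regev (after Banaszczyk) prove, and as written it is false, so neither reduction can be repaired.

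The step ``by rescaling the lattice I can assume $\sigma=1$'' also rescales the ball. Replacing $\L$ by $\L/\sigma$ turns $\L\backslash t\sqrt m\mathcal B$ into $(\L/\sigma)\backslash (t/\sigma)\sqrt m\mathcal B$, i.e.\ it replaces $t$ by $t/\sigma$. Your $\sigma=1$ argument is correct and is the standard proof: the pointwise identity, $\rho_s(\L)\le s^m\rho(\L)$ via Poisson, and the optimum at $s^2=2\pi t^2$. But for general $\sigma$ it only yields $\rho_\sigma(\L\backslash t\sigma\sqrt m\mathcal B)<C^m\rho_\sigma(\L)$. With the $\sigma$-independent radius $t\sqrt m$ the claim fails for large $\sigma$. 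Take $\L=\Z$, $m=t=1$, $\sigma=100$: then $\rho_{100}(\Z)\approx 100$ and the left-hand side is at least $\rho_{100}(\Z)-3\approx 97$, while $C\approx 0.18$. The claim does follow for $\sigma\le 1$, since $C$ is decreasing in $t$ for $t>1/\sqrt{2\pi}$.

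For the shifted bound, the obstruction you identified is real and fatal, not a bookkeeping issue. The quantity $\rho_{\sigma,\vec c}(\L\backslash t\sqrt m\mathcal B)=\sum_{\vec x\in\L,\ \|\vec x\|\ge t\sqrt m}\rho_\sigma(\vec x-\vec c)$ removes a ball around $0$ while the Gaussian sits at $\vec c$. For $\vec c\in\L$ far from $0$ it is essentially all of $\rho_{\sigma,\vec c}(\L)=\rho_\sigma(\L)$. For instance, $\L=\Z$, $\sigma=t=1$, $\vec c=10$ gives a left-hand side $\approx\rho(\Z)$ against $2C\rho(\Z)\approx 0.36\,\rho(\Z)$. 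Correspondingly, inserting the weight $e^{\pi(1-1/s^2)(\|\vec x\|^2-t^2m)}$ turns $\rho(\vec x-\vec c)$ into $e^{-\pi(1-1/s^2)t^2m}\,e^{\pi(s^2-1)\|\vec c\|^2}\rho_{s,s^2\vec c}(\vec x)$, and the factor $e^{\pi(s^2-1)\|\vec c\|^2}$ cannot be removed.

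Your fallback does not help either. Banaszczyk's bound for the coset $\L-\vec c$ controls $\rho((\L-\vec c)\backslash t\sqrt m\mathcal B)=\sum_{\vec x\in\L,\ \|\vec x-\vec c\|\ge t\sqrt m}\rho(\vec x-\vec c)$. There the excluded ball is centred at $\vec c$, so this is a different quantity from the one in the statement.

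The paper itself gives no proof; it only cites \cite[Lemma 2.10]{DBLP:journals/siamcomp/MicciancioR07}. The actual content of that lemma is the $\sigma=1$ statement, with the ball removed from the shifted points of $\L+\vec v$. For that correct version your unshifted argument works verbatim, and the ``main difficulty'' disappears:
\begin{itemize}
\item Apply the pointwise identity to $\vec y=\vec x-\vec c$ on $\{\|\vec x-\vec c\|\ge t\sigma\sqrt m\}$.
\item Use $\rho_{s\sigma,\vec c}(\L)\le\rho_{s\sigma}(\L)$ (\autoref{lemma:rho sigma c}).
\item Use $\rho_{s\sigma}(\L)\le s^m\rho_\sigma(\L)$.
\end{itemize}
This even gives $C^m$ without the factor $2$.
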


\gaussianfinitesets*

\begin{proof}
    Let us prove the first inequality. By~\autoref{lemma:rho sigma c}, one has
    \begin{equation*}
        \rho_{\sigma,\vec c}(\iintv{-2^M,2^M})\leqslant \rho_{\sigma,\vec c}(\Z) \leqslant \rho_{\sigma}(\Z).
    \end{equation*}
    Note that, since $M>0$, we have $2^M>1/\sqrt{2\pi}$. Then, we can apply~\autoref{lemma:rho sigma c on a finite subset} with $t = 2^M$ and $m = 1$, which gives:
    \begin{align*}
        \rho_{\sigma}(\Z) &= \rho_{\sigma}(\iintv{-2^M,2^M}) + \rho_{\sigma}(\Z\backslash \iintv{-2^M,2^M})\\
        &\leqslant \rho_{\sigma}(\iintv{-2^M,2^M}) + C \rho_{\sigma}(\Z)\enspace,
    \end{align*}
    where $C := 2^M\sqrt{2\pi e}\times \exp{-\pi 2^{2M}} < 1$.  Therefore, we have
    \begin{align*}
    	\rho_{\sigma}(\Z) &\leqslant \rho_{\sigma}(\iintv{-2^M,2^M})\times \frac{1}{1-C}\enspace\\
    	&=\rho_{\sigma}(\iintv{-2^M,2^M})\times \frac{1}{1-2^M\sqrt{2\pi e}\times \exp{-\pi 2^{2M}}}\enspace.
    \end{align*}
    Let us then prove the second inequality. By~\autoref{lemma:rho sigma c on a finite subset} with $t = 2^M$, we also have:
    \begin{equation*}
        \rho_{\sigma, \vec c}(\Z\backslash\iintv{-2^M,2^M})\leqslant 2C \rho_\sigma(\Z)\enspace.
    \end{equation*}
    But then, by~\autoref{lemma:rho Z approx}, we have $\rho_\sigma(\Z)\leqslant 3\sigma + 4$, therefore,
    \begin{multline*}
        \rho_{\sigma,\vec c}(\Z\backslash\iintv{-2^M,2^M})\leqslant 2^{M+1}\sqrt{2\pi e}\times \exp{-\pi 2^{2M}}( 3\sigma + 4 )  = O(\sigma\times 2^M \exp{-\pi 2^{2M}})\enspace.
    \end{multline*}
    This concludes the proof of the second inequality. \qed
\end{proof}

\subsection{Proof of~\autoref{prop: bound on some dtv and gaussian mass}}

\somedtvgaussianmass*


\begin{proof}

Note $B = QR$ the QR decomposition of $B$. The matrix $Q$ is orthonormal, $R$ is upper triangular, and each diagonal coefficient $R_{i,i}$ of $R$ equals $||\widetilde{\vec b}_i||$, where the $\widetilde{\vec b}_i$'s are the columns of the Gram-Schmidt transform of $B$. Then, we have 
\begin{equation*}
\rho_{\sigma}(\L\backslash \Omega) \leqslant \!\!\! \sum_{\vec x\in \Z^m\backslash \prod_{i=1}^m \iintv{-2^{M-1},2^{M-1}}} \!\!\!\!\!  \exp{-\frac{\pi}{\sigma^2}||B\vec x||^2} = \!\!\! \sum_{\vec x\in \Z^m\backslash \prod_{i=1}^m \iintv{-2^{M-1},2^{M-1}}}  \!\!\!\!\! \exp{-\frac{\pi}{\sigma^2}||R\vec x||^2}.
\end{equation*}
Since $R$ is upper triangular, it is straightforward to see that its eigenvalues are precisely the $R_{i,i} = ||\widetilde{\vec b}_i||$. We then have:
\begin{align*}
\rho_{\sigma}(\L\backslash \Omega) & \leqslant \sum_{\vec x\in \Z^m\backslash \prod_{i=1}^m \iintv{-2^{M-1},2^{M-1}}}\exp{-\frac{\pi\min(||\widetilde{\vec b}_i||)^2}{\sigma^2}||\vec x||^2} \\
& = \rho_{\sigma/\min(||\widetilde{\vec b}_i||)} \left( \Z^m\backslash \prod_{i=1}^m\iintv{-2^{M-1},2^{M-1}} \right) \enspace.
\end{align*}
We notice that $\Z^m\backslash \prod_{i=1}^m\iintv{-2^{M-1},2^{M-1}} = \left( \Z \backslash \iintv{-2^{M-1},2^{M-1}} \right)^m$. We can then apply~\autoref{lemma:rho sigma c on a finite subset} in dimension 1, and obtain:
\begin{equation*}
\rho_{\sigma}(\L\backslash \Omega) \leqslant \left(2^{M-1}\sqrt{2\pi e}\right)^m\times \exp{-\pi\cdot m\cdot 2^{2M-2}} (\rho_{\sigma/\min(||\widetilde{\vec b}_i||)}(\Z))^m \enspace.
\end{equation*}
Finally, by~\autoref{lemma:rho Z approx}, we have $\rho_{\sigma/\min(||\widetilde{\vec b}_i||)}(\Z) \leqslant 3(\sigma/\min(||\widetilde{\vec b}_i||)) + 4$, so we conclude.


This concludes the proof. \qed
\end{proof}

\subsection{Proof of~\autoref{lemma: third distance bound between gw and f L,sigma}}

\lembound*

\begin{proof}
    Let us set $X =\{0,\dotsc,q-1\}^m$. Then, by~\autoref{corollary:montanaro bis}, with probability at least $1-q^m\delta$, we have $|\tilde f(\vec x) - f_{\L_q(B),1/\sigma}(\vec x)|\leqslant \delta + \eta_1$, for all $\vec x\in X$. Additionally, $\L = \L_q(B)$ is a $q$-ary lattice, which means $q\Z^m\subset \L$. So, $\L+X\supset q\Z^m+\{0,\dotsc,q-1\}^m = \Z^m$. Therefore, this means that with probability at least $1-q^m\delta$, we have $|\tilde f(\vec x) - f_{\L_q(B),1/\sigma}(\vec x)|\leqslant \epsilon + \eta_1$, for all $\vec x\in \Z^m$.

    By~\autoref{lemma:distance from a lattice estimation}, we have
    \begin{equation*}
        \tilde f(\vec e )\geqslant f_{\L_q(B),1/\sigma}(\vec e)-\delta-\eta_1  \geqslant \rho_{1/\sigma}(\vec e)-\epsilon-\eta_1.
    \end{equation*}
    Let $\vec x\in \L\backslash \L_q(B)$, then $\vec z-\vec x\in \L+\L_q(B)$, and $\vec z-\vec x\neq 0$ for any $\vec z\in \L_q(B)$. As a result, $\L_q(B)-\vec x\subset (\L+\L_q(B))\backslash\{0\}$. Hence
    \begin{equation*}
        \mathrm{dist}(\vec x,\L_q(B)) = \min_{\vec z\in \L_q(B)}||\vec x+\vec z||\geqslant \min_{\vec y\in (\L+\L_q(B))\backslash \{0\}} ||\vec y|| = \lambda_1(\L+\L_q(B))\geqslant \tau+||\vec e||.
    \end{equation*}
    Then
    \begin{equation*}
        \mathrm{dist}(\vec e+\vec s,\L_q(B))\geqslant \mathrm{dist}(\vec x,\L_q(B))-||\vec e||\geqslant \tau.
    \end{equation*}
    We can apply~\autoref{lemma:distance from a lattice estimation} to get that for any $\vec x\in \L\backslash\{0\}$,
    \begin{equation*}
        \tilde f(\vec e+\vec x)\leqslant f_{\L_q(B),1/\sigma}(\vec e+\vec x)+\eta_1 + \epsilon  \leqslant \rho_{1/\sigma}(\mathrm{dist}(\vec e + \vec x,\L_q(B))-\tau)+\epsilon+\eta_1.
    \end{equation*}
    Since $\rho_{1/\sigma}:[0,\infty[\to \R$ is decreasing, and reusing the two previous inequalities, we have moreover
    \begin{align*}
        \rho_{1/\sigma}(\mathrm{dist}(\vec e+\vec x,\L_q(B))-\tau) &\leqslant \rho_{1/\sigma}(\mathrm{dist}(\vec x,\L_q(B))-||\vec e||-\tau)\\
        &\leqslant \rho_{1/\sigma}(\lambda_1(\L+\L_q(B))-||\vec e||-\tau).
    \end{align*}
    Putting everything together, we have
    \begin{equation*}
        \tilde f(\vec e)-\tilde f(\vec e+\vec x) \geqslant \rho_{1/\sigma}(\vec e)-\rho_{1/\sigma}(\lambda_1(\L+\L_q(B))-||\vec e||-\tau)-2\epsilon-2\eta_1 >\eta_2.
    \end{equation*}
    by assumption. \qed
\end{proof}

\end{document}